\newtheorem{lemma}{Lemma}
\newtheorem{theorem}{Theorem}
\newtheorem{definition}{Definition}
\newcommand\opt{\textnormal{opt}\xspace}
\newcommand\ROO{\textsc{Roo}\xspace}
\newcommand\COM{\textsc{Com}\xspace}
\newcommand\AMD{\textsc{Amd}\xspace}
\newcommand\GEN{\textsc{Gen}\xspace}
\newcommand{\new}[1]{\textcolor{black}{#1}}
\newcommand{\neww}[1]{\textcolor{black}{#1}}
\def\abstract{\vspace{.5em}
{\textit{\bf Abstract}:\,\relax%
}}
\def\keywords{\vspace{.5em}
{\textit{\bf Keywords}:\,\relax%
}}
\def\endkeywords{\par}
\begin{document}
\title{Improved Online Scheduling of Moldable Task Graphs under Common Speedup Models\footnote{A preliminary version \cite{Benoit22_online} of this paper appeared in the 51st International Conference on Parallel Processing (ICPP), 2022. }
}
\author{Lucas Perotin$^\S$, Hongyang Sun$^\dag$\\
\\
$^\S$Laboratoire LIP, ENS Lyon, Lyon, France\\
$^\dag$University of Kansas, Lawrence, KS, USA\\
lucas.perotin@ens-lyon.fr; hongyang.sun@ku.edu
}
\date{}
\maketitle

\begin{abstract}
We consider the online scheduling problem of moldable task graphs on multiprocessor systems for minimizing the overall completion time (or makespan). Moldable job scheduling has been widely studied in the literature, in particular
when tasks have dependencies (i.e., task graphs) or when tasks are released on-the-fly
(i.e., online). However, few studies have focused on both (i.e., online scheduling of moldable task graphs).
In this paper, we design a new online scheduling algorithm for this problem and derive constant competitive ratios under several common yet realistic
speedup models (i.e., roofline, communication, Amdahl, and a general combination). These results improve the ones we have shown in the preliminary version \cite{Benoit22_online} of this paper.
We also prove, for each speedup model, a lower bound on the competitiveness of any online list scheduling algorithm that allocates processors to a task based only on the task's parameters and not on its position in the graph. This lower bound matches exactly the competitive ratio of our algorithm for the roofline, communication and Amdahl's model, and is close to the ratio for the general model.
Finally, we provide a lower bound on the competitive ratio of any deterministic online algorithm for the arbitrary speedup model,
which is not constant but depends on the number of tasks in the longest path of the graph.

\keywords Task graph, moldable task, online scheduling, competitive ratio.
\endkeywords
\end{abstract}

\maketitle

\section{Introduction}

This work investigates the online scheduling of parallel task graphs on a set of identical processors, where each task in the graph is \emph{moldable}.
In the scheduling literature, a moldable task (or job) is a parallel task that can be executed on an arbitrary but fixed number of processors. The execution time of the task depends upon the number of processors used to execute it, which is chosen once and for all when the task starts its execution but cannot be modified later on during execution. This corresponds to a variable static resource allocation, as opposed to a fixed static allocation (\emph{rigid} tasks) and to a variable dynamic allocation (\emph{malleable} tasks)~\cite{Feitelson96}.

Moldable tasks offer a nice trade-off between rigid and and malleable tasks: they easily adapt to the number of available resources, contrarily to rigid tasks, while being easy to design and implement, contrarily to malleable tasks. Thus, many computational kernels in scientific libraries (e.g., for numerical linear algebra and tensor computations) can be deployed as moldable tasks to efficiently run on a wide range of processors. Because of the importance and wide availability of moldable tasks,
scheduling algorithms for such tasks have received considerable attention in the literature (see Section \ref{sec.related} for details). Like many other scheduling problems, scheduling moldable tasks comes in different flavors, and the following provides a brief taxonomy:

\begin{itemize}
\item \textbf{Offline Scheduling vs. Online Scheduling.} In the offline version of the scheduling problem, all tasks are known in advance, before the execution starts. The problem is $\mathcal{NP}$-complete for both independent and dependent tasks \cite{Ullman75_NPcomplete}, and the goal is to design scheduling algorithms with good \emph{approximation ratio}, which measures the worst-case performance of an algorithm against an optimal scheduler for all possible input instances. On the contrary, in the online version of the scheduling problem, tasks are released on the fly, and the objective is to design online algorithms with good \emph{competitive ratio}~\cite{Sleator1985}, against an optimal offline scheduler that knows in advance all the tasks and their dependencies in the graph. The competitive ratio is established against all possible strategies devised by an adversary trying to force the online algorithm to take \emph{bad} decisions.
\item \textbf{Independent Tasks vs. Task Graphs.} In a scheduling problem, different tasks can be either independent of each other or dependent forming a task graph. If tasks are independent, they are either all known to the scheduling algorithm initially (in the offline version), or released on the fly and the scheduler only discovers their characteristics upon release (in the online version). For task graphs, either the entire graph is known at the start (in the offline version), or each new task along with its characteristics is only released when all of its predecessors have completed execution (in the online version). For the latter case, the shape of the graph as well as the nature of the tasks are not known in advance, and they are revealed only as the execution progresses.
\end{itemize}

In this paper, we investigate arguably the most difficult version of the problem, namely, the \emph{online scheduling of moldable tasks graphs}, which is the version that has received the least research attention so far. The objective is to minimize the overall completion time of the task graph, or the \emph{makespan}. We assume that the scheduling of each task is non-preemptive and without restarts~\cite{Feldmann98_DAG}, which is highly desirable to avoid large overheads incurred by checkpointing partial results, context switching, and task migration, etc.

While the performance of a scheduling algorithm greatly depends upon the speedup model of the tasks, we consider several common yet realistic speedup models, including the roofline model, the communication model, the Amdahl's model, and a general combination of them (see Section~\ref{sec.model} for their precise definitions). These models have been widely assumed and studied in the literature for modeling the scaling behavior of parallel applications.
We extend and improve upon our preliminary work \cite{Benoit22_online} by designing a new online scheduling algorithm that achieves better competitive ratios for these models. This is done through a novel analysis framework, which  provides a tighter and more coupled analysis between a local processor allocation algorithm and the list scheduling algorithm. To the best of our knowledge, a competitive ratio was previously known only for task graphs under the roofline model~\cite{Feldmann98_DAG}, while our work offers the first competitive results for several other speedup models, and altogether these results lay the theoretical foundations for this difficult scheduling problem.

Our main contributions are summarized as follows:
\begin{itemize}
  \item We present a new online algorithm and prove its constant competitive ratio for the four speedup models. In particular, the results for the communication model, the Amdahl's model, and the general model improve upon those previously obtained in \cite{Benoit22_online}. Table~\ref{tab.all-bounds} lists our results in comparison with the ones proven in \cite{Benoit22_online}.
  \item For each speedup model, we prove a lower bound on the competitiveness of any online list scheduling algorithm whose processor allocation is \emph{local and deterministic}, i.e., the decision depends only on the total number of processors and a task's parameters but not on its position in the graph. The results show that our algorithm achieves the best possible competitive ratios for the roofline, communication and Amdhal's models for this class of algorithms.
  \item We derive a lower bound on the competitiveness of any deterministic online algorithm under the arbitrary speedup model, and show that it is not constant but depends on the number of tasks in the longest path of the graph.
\end{itemize}

\begin{table}[t]
\centering
\caption{Competitive ratios for our algorithm and the algorithm in \cite{Benoit22_online} for four speedup models. }
\label{tab.all-bounds}
\begin{tabular}{|l|c|c|c|c|}
\hline
\textbf{Speedup Model}  & \textbf{Roofline} & \textbf{Comm.} & \textbf{Amdahl} & \textbf{General} \\ \hline
Results of this paper & \new{$\approx 2.62$} & \new{$\approx 3.39$} & \new{$\approx 4.55$} & \new{$\approx 4.63$}\\
\hline
Results of \cite{Benoit22_online} & \new{$\approx 2.62$} & \new{$\approx 3.61$} & \new{$\approx 4.74$} & \new{$\approx 5.72$}\\ \hline
\end{tabular}
\end{table}

The rest of this paper is organized as follows. Section~\ref{sec.related} surveys some related works on moldable task scheduling.
The formal model and problem statement are presented in Section~\ref{sec.pb}.
Section~\ref{sec.alg} introduces the new online algorithm and proves its competitive ratios for different speedup models. Section \ref{sec.low} presents, for each model, a lower bound of any online list scheduling algorithm with deterministic local processor allocation. Our algorithm belongs to this class and has the best possible competitive ratio for the roofline, communication, and Amdahl's models.
Section~\ref{sec.lower} is devoted to proving a lower bound of any deterministic online algorithm for the arbitrary speedup model.
Finally, Section~\ref{sec.conclusion} concludes the paper and provides hints for future directions.

\section{Related Works}
\label{sec.related}

In this section, we discuss some related works on moldable task scheduling. Following the taxonomy of the previous section, we consider four versions of the problem combining offline vs.~online scheduling and independent tasks vs.~task graphs scheduling. We mainly focus on works that have derived approximation or competitive ratios. While some of these results depend on specific speedup models, others hold for a more general class of models.

\paragraph{Offline Scheduling of Independent Tasks.}
Belkhale and Banerjee \cite{Belkhale90} considered moldable tasks that follow the monotonic model, where the execution time of a task is non-increasing and the area (processor allocation times execution time) is non-decreasing with the number of processors. They presented a $\frac{2}{1+1/P}$-approximation algorithm by iteratively updating the processor allocations. For the same model, B{\l}a\.{z}ewicz et al. \cite{Blazewicz01} also presented a 2-approximation algorithm while relying on an optimal continuous schedule. Mouni\'{e} et al. \cite{Mounie99_sqrt3} presented a $(\sqrt{3}+\epsilon)$-approximation algorithm using dual approximation. They later improved the ratio to $1.5+\epsilon$ \cite{Mounie07_dualapprox}. Finally, Jansen and Land~\cite{Jansen18_PTAS} proposed a Polynomial-Time Approximation Scheme (PTAS) for the problem.

If the execution time of a task can be an arbitrary function of the processor allocation (i.e., the arbitrary model), Turek et al.~\cite{Turek92} designed a 2-approximation list-based algorithm and a 3-approximation shelf-based algorithm. Ludwig and Tiwari \cite{Ludwig94} showed the same result but with lower computational complexity. When each task only admits a subset of all possible processor allocations, Jansen \cite{Jansen12_3over2} presented a $(1.5+\epsilon)$-approximation algorithm, which is tight since the problem cannot have an approximation ratio better than 1.5 unless $\mathcal{P} = \mathcal{NP}$~\cite{Johannes06_list}. When the number of processors is a constant or polynomially bounded by the number of tasks, Jansen et al.
\cite{Jansen10_PTAS} showed that a PTAS exists.

\paragraph{Online Scheduling of Independent Tasks.}
For scheduling independent moldable tasks that arrive online over time, Havill and Mao \cite{Havill08_SET} presented a 4-competitive algorithm for the communication model by minimizing the execution time of each task. For the same model, Dutton and Mao~\cite{Dutton07_ECT} and Kell and Havill \cite{Kell15_Improved} further presented improved competitive results when the total number $P$ of processors is bounded by a constant.

Under an alternative online setting, where independent moldable tasks with the same arrival time are released one-by-one to the scheduler, Ye et al.~\cite{Ye18_online} designed a 16.74-competitive algorithm for the arbitrary speedup model.
They also showed how to transform an online algorithm for rigid tasks whose makespan is at most $\rho$ times the lower bound into a $4\rho$-competitive algorithm for moldable tasks.

\paragraph{Offline Scheduling of Task Graphs.}
For offline scheduling of moldable task graphs, Wang and Cheng~\cite{Wang92_DAG} showed that the earliest completion time algorithm is a $(3-\frac{2}{P})$-approximation for the roofline model.
Since the processor allocation is done independently for each task, their algorithm and corresponding ratio can also be applied to the online setting as discussed below.

For the monotonic model, Belkhale and Banerjee \cite{Belkhale91_DAG} presented a 2.618-approximation algorithm while assuming the availability of an optimal processor allocation.
Lep\`{e}re et al. \cite{Lepere01_DAG} proposed an algorithm with an approximation ratio of $5.236$. They also showed that the optimal allocation can be achieved in pseudo-polynomial time for some special graphs, such as series-parallel graphs and trees, thus leading to a 2.618-approximation for these graphs.
Jansen and Zhang \cite{Jansen06_DAG} later improved the approximation ratio for general graphs to around 4.73.
When assuming that the area of a job is a concave function of the number of processors, Jansen and Zhang \cite{Jansen05_concave} proposed a 3.29-approximation algorithm. Chen and Chu \cite{Chen13_concave} improved the ratio to around 2.95 by further assuming that the execution time of a job is strictly decreasing in the number of allocated processors.

\paragraph{Online Scheduling of Task Graphs.}
Feldmann et al.~\cite{Feldmann98_DAG} designed an online algorithm to schedule moldable task graphs under the roofline model. They showed that their algorithm achieves a competitive ratio of 2.618, thus improving the previous result by Wang and Cheng~\cite{Wang92_DAG}.
Furthermore, their algorithm works in the \emph{non-clairvoyant} setting, where the task execution time is also unknown to the scheduler.

Benoit et al.~\cite{Benoit20_cluster, Benoit21_ieeetc} recently investigated the problem of scheduling moldable tasks subject to failures,
where a task needs to be re-executed after a failure until it is successfully completed.
This corresponds to a special task graph consisting of multiple linear chains (one per task), where the length of each chain corresponds to the total number of executions of a task. The problem is semi-online, since all the tasks are known at the beginning, but the task failures and hence their re-executions are only discovered on-the-fly. They considered several common speedup models (as in this paper) and presented a scheduling algorithm that achieves constant competitive ratios for these models.
In this paper, we extend our preliminary work \cite{Benoit22_online} and study the general online scheduling of moldable task graphs (as in~\cite{Feldmann98_DAG}). We present improved competitive ratios as well as lower bounds for the common speedup models. We do not consider task failures as in~\cite{Benoit20_cluster, Benoit21_ieeetc}, but our results can readily carry over to the failure scenario.

Table \ref{tab.references} summarizes different versions of the moldable task scheduling problem together with the related papers within each version.

\begin{table}[t]
\centering
\caption{Difference versions of moldable task scheduling problem and related papers in each version.}
\label{tab.references}
\begin{tabular}{|l|c|c|}
\hline
 & \textbf{Offline Scheduling} & \textbf{Online Scheduling} \\ \hline
\textbf{Independent Tasks} & \parbox[c][1cm]{6cm}{\emph{Monotonic model}: \cite{Belkhale90, Blazewicz01, Mounie99_sqrt3, Mounie07_dualapprox, Jansen18_PTAS} \\ \emph{Arbitrary model}: \cite{Turek92, Ludwig94, Jansen12_3over2, Jansen18_PTAS}} & \parbox[c][1cm]{6cm}{\emph{Comm. model (over time)} \cite{Havill08_SET, Dutton07_ECT, Kell15_Improved} \\ \emph{Arbitrary model (one-by-one)}: \cite{Ye18_online}} \\ \hline
\textbf{Task Graphs} & \parbox[c][1cm]{6cm}{\emph{Roofline model}: \cite{Wang92_DAG} \\ \emph{Monotonic model}: \cite{Belkhale91_DAG, Jansen05_concave, Chen13_concave, Lepere01_DAG, Jansen06_DAG}} & \parbox[c][1cm]{6cm}{\emph{Roofline model}: \cite{Wang92_DAG, Feldmann98_DAG} \\ \emph{Common models}: \cite{Benoit20_cluster, Benoit21_ieeetc, Benoit22_online}, [this paper]} \\ \hline
\end{tabular}
\end{table}

\section{Problem Statement}
\label{sec.pb}

In this section, we formally present the online scheduling model and the objective function. We also show a simple lower bound on the optimal makespan, against which the performance of our online algorithms will be measured.

\subsection{Model and Objective}
\label{sec.model}

We consider the online scheduling of a \textbf{\emph{Directed Acyclic Graph (DAG)}} of moldable tasks on a platform with $P$ identical processors. Let $G = (V, E)$ denote the task graph, where $V = \{1, 2, \dots, n\}$ represents a set of $n$ tasks and $E \subseteq V \times V$ represents a set of precedence constraints (or dependencies) among the tasks. An edge $(i, j) \in E$ indicates that task $j$ depends on task~$i$, and therefore it cannot be executed before task~$i$ is completed. Task~$i$ is called the \textbf{\emph{predecessor}} of task~$j$, and task~$j$ is called the \textbf{\emph{successor}} of task~$i$. In this work, we do not consider the costs associated with the data transfers between dependent tasks.

The tasks are assumed to be \textbf{\emph{moldable}}, meaning that the number of processors allocated to a task can be determined by the scheduling algorithm at launch time, but once the task has started executing, its processor allocation cannot be changed. The execution time $t_j(p_j)$ of a task $j$ is a function of the number $p_j$ of processors allocated to it, and we assume that the processor allocation must be an integer between $1$ and $P$. In this paper, we focus on the following execution time function:
\begin{align}
\label{eq.exec_time}
t_j(p_j) = \frac{w_j}{\min(p_j, \bar{p}_j)} + d_j + c_j (p_j - 1) \ ,
\end{align}
where $w_j$ denotes the total parallelizable work of the task, $\bar{p}_j$ denotes the maximum degree of parallelism of the task,
$d_j$ denotes the sequential work of the task, and $c_j$ denotes the communication overhead when more than one processor is used.
The execution time function in Equation~\eqref{eq.exec_time} generalizes several speedup models commonly observed for parallel applications. In particular, it contains the following well-known models as special cases:
\begin{itemize}
\item \emph{\textbf{Roofline Model}} \cite{Williams2009} (with $d_j = 0$ and $c_j = 0$):
\begin{align}
\label{eq.model.roof}
t_j(p_j) = \frac{w_j}{\min(p_j, \bar{p}_j)} \ .
\end{align}
This model assumes that the task has a linear speedup until a maximum degree of parallelism $\bar{p}_j \le P$.
\item \emph{\textbf{Communication Model}} \cite{Havill08_SET} (with $\bar{p}_j \ge P$ and $d_j = 0$):
\begin{align}
\label{eq.model.comm}
t_j(p_j) = \frac{w_j}{p_j} + c_j (p_j - 1) \ .
\end{align}
This model assumes that the work of the task can be perfectly parallelized, but there is a communication overhead when more than one processor is allocated, which increases linearly with the number of allocated processors.
\item \emph{\textbf{Amdahl's Model}} \cite{Amdahl67} (with $\bar{p}_j \ge P$ and $c_j = 0$):
\begin{align}
\label{eq.model.amdahl}
t_j(p_j) = \frac{w_j}{p_j} + d_j \ .
\end{align}
This model assumes that the task has a perfectly parallelizable fraction with work $w_j$ and an inherently sequential fraction with work $d_j$.
\end{itemize}

From the execution time function of the task $j$, we can further define the \textbf{\emph{area}} of the task as a function
of the processor allocation as follows: $a_j(p_j) = p_j t_j(p_j)$. Intuitively, the area represents
the total amount of processor resources utilized over the entire period of task execution.

In this work, we consider the \textbf{\emph{online scheduling}} model, where a task becomes available
only when all of its predecessors have been completed. This represents a common scheduling model
for \textbf{\emph{dynamic task graphs}}, whose dependencies are only revealed upon task completions \cite{JOHNSON96_dynamic,Feldmann98_DAG,Agrawal10_dynamic,Canon20_online}.  Furthermore, when a task $j$ is available,
all of its execution time parameters (i.e., $w_j$, $\bar{p}_j$, $d_j$, $c_j$) become known
to the scheduling algorithm as well. The goal is to find a feasible schedule of the task graph that minimizes its overall completion time or \textbf{\emph{makespan}}, denoted by $T$. The performance of an online scheduling algorithm is measured by its competitive ratio: the algorithm is said to be \textbf{\emph{$c$-competitive}} if, for any task graph, its makespan~$T$ is at most $c$ times the makespan $T^{\opt}$ produced by an optimal offline scheduler, i.e., $\frac{T}{T^{\opt}} \le c$. Note that the optimal offline scheduler knows in advance all the tasks and their speedup models, as well as all dependencies in the graph. The competitive ratio is established against all possible strategies by an adversary trying to force the online algorithm to take bad decisions.

\subsection{Lower Bound on Optimal Makespan}

Given the execution time function in Equation (\ref{eq.exec_time}), let us define $s_j = \sqrt{\frac{w_j}{c_j}}$. We can then compute the maximum number of processors that should be allocated to the task as:
\begin{align}\label{eq.pjmax}
p_j^{\max} = \min\left(P, \bar{p}_j, \tilde{p}_j\right) \ ,
\end{align}
where\\[-.6cm]
\begin{align*}
\tilde{p}_j = \left\{\begin{array}{ll}
\lfloor s_j \rfloor, & \text{if } t_j(\lfloor s_j \rfloor) \le t_j(\lceil s_j \rceil)  \\
\lceil s_j \rceil, & \text{otherwise}
\end{array}\right .
\end{align*}
Indeed, allocating more than $p_j^{\max}$ processors to the task will no longer decrease its execution time while only increasing its area. Thus, we can safely assume that the processor allocation of the task should never exceed $p_j^{\max}$ by any reasonable algorithm.

Furthermore,  the task is said to satisfy the \textbf{\emph{monotonic}} property~\cite{Lepere01_DAG} if the following two conditions hold:
\begin{itemize}
\item The execution time is a \textbf{\emph{non-increasing}} function of the processor allocation, i.e., $t_j(p) \ge t_j(q)$ for all $1 \le p < q \le p_j^{\max}$;
\item The area is a \textbf{\emph{non-decreasing}} function of the processor allocation, i.e., $a_j(p) \le a_j(q)$ for all $1 \le p < q \le p_j^{\max}$.
\end{itemize}

Note that the second condition above also suggests that the task cannot achieve superlinear speedup, i.e.,
\begin{align}\label{eq.speedup_bound}
\frac{t_j(p)}{t_j(q)} \le \frac{q}{p} \text{ for all } 1 \le p < q \le p_j^{\max} \ .
\end{align}

\begin{lemma}\label{lem.monotonic}
A task $j$ with execution time function given in Equation~\eqref{eq.exec_time} is monotonic if its processor allocation is in the range $[1, p_j^{\max}]$.
\end{lemma}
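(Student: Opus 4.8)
The plan is to verify the two defining properties of monotonicity directly from the closed form in Equation~\eqref{eq.exec_time}, using crucially that the cap $p_j^{\max}=\min(P,\bar p_j,\tilde p_j)$ is, by construction, at most $\bar p_j$ and at most $\tilde p_j$. Because $p_j^{\max}\le\bar p_j$, every admissible allocation $p\in\{1,\dots,p_j^{\max}\}$ satisfies $\min(p,\bar p_j)=p$, so on this range
\[
t_j(p)=\frac{w_j}{p}+d_j+c_j(p-1),\qquad a_j(p)=p\,t_j(p)=w_j+d_j p+c_j p(p-1),
\]
and it suffices to analyse these two elementary expressions on $\{1,\dots,p_j^{\max}\}$. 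For the area (non-decreasing) condition I would simply note that for $1\le p<q\le p_j^{\max}$,
\[
a_j(q)-a_j(p)=(q-p)\bigl(d_j+c_j(p+q-1)\bigr)\ge 0,
\]
since $q>p\ge 1$ and $d_j,c_j\ge 0$; this in fact needs no cap and holds on all of $\{1,\dots,P\}$, so the second condition is essentially free.

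For the execution-time (non-increasing) condition, which is where the cap $p_j^{\max}\le\tilde p_j$ is essential, I would argue via consecutive increments. After disposing of the trivial case $c_j=0$ (then $t_j(p)=w_j/p+d_j$ is decreasing), a short computation gives, for $2\le p\le p_j^{\max}$,
\[
t_j(p-1)-t_j(p)=\frac{w_j}{p(p-1)}-c_j,
\]
which is $\ge 0$ exactly when $p(p-1)\le w_j/c_j=s_j^2$. Since $p\mapsto p(p-1)$ is increasing, it is enough to verify this at $p=p_j^{\max}$. Here I would invoke $p_j^{\max}\le\tilde p_j$ together with the definition of $\tilde p_j$ in Equation~\eqref{eq.pjmax}: since $g(x)=w_j/x+d_j+c_j(x-1)$ is convex with minimum at $s_j$, its restriction to the integers is non-increasing up to the largest integer $\theta$ with $\theta(\theta-1)\le s_j^2$ and non-decreasing afterwards; a short case analysis on whether $\tilde p_j=\lfloor s_j\rfloor$ or $\tilde p_j=\lceil s_j\rceil$ (and on whether $s_j$ is an integer) shows that the tie-breaking rule always yields $\tilde p_j\le\theta$. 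Hence $p_j^{\max}\le\tilde p_j\le\theta$, the required inequality $p_j^{\max}(p_j^{\max}-1)\le s_j^2$ holds, and therefore $t_j$ is non-increasing on $\{1,\dots,p_j^{\max}\}$. Combined with the area bound this establishes monotonicity, and as an immediate corollary it also yields the speedup bound~\eqref{eq.speedup_bound} (from $p\,t_j(p)\le q\,t_j(q)$).

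The only real subtlety is this last step: one has to handle the tie-breaking convention in the definition of $\tilde p_j$ with care, in particular the borderline situations where $s_j$ is an integer, where $s_j<1$ (so that $p_j^{\max}=1$ and the claim is vacuous), or where $\lceil s_j\rceil$ exceeds $\bar p_j$ (so that $t_j$ in the definition is already the $\bar p_j$-capped value and one argues via $p_j^{\max}\le\bar p_j$ instead) — in each case one must check that $\tilde p_j$ never overshoots the turning point $\theta$ of $g$. Once that is pinned down, everything else reduces to the one-line algebraic identities above.
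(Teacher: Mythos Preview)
Your argument is correct and follows essentially the same route as the paper: both first use $p_j^{\max}\le\bar p_j$ to drop the $\min$ and reduce to $t_j(p)=w_j/p+d_j+c_j(p-1)$, then handle the area bound by a direct computation, and establish the time bound by locating the minimizer of this convex expression and using $p_j^{\max}\le\tilde p_j$. The only difference is presentational: the paper invokes convexity of the continuous extension in one line, whereas you spell out the consecutive-difference identity $t_j(p-1)-t_j(p)=w_j/(p(p-1))-c_j$ and the tie-breaking case analysis for $\tilde p_j$ explicitly---which is exactly the work hidden behind the paper's sentence ``a convex function whose minimum value is achieved at $\tilde p_j$.''
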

\vspace{-.2cm}
\begin{proof}
When the processor allocation is in the range $[1, p_j^{\max}]$, we have $p_j \le p_j^{\max} \le \bar{p}_j$. Thus, the execution time function simplifies to $t_j(p_j) = \frac{w_j}{p_j} + d_j + c_j (p_j - 1)$.
This is a convex function whose minimum value is achieved at $\tilde{p}_j$. Since we also have $p_j \le p_j^{\max} \le \tilde{p}_j$, it shows that the execution time is a non-increasing function of $p_j$ in the range $[1, p_j^{\max}]$.

Similarly, when $p_j \le p_j^{\max} \le \bar{p}_j$, the area becomes $a_j(p_j) =  p_j t_j(p_j) = w_j + d_jp_j + c_j(p_j^2 - p_j)$,
which is non-decreasing for any $p_j \ge 1$.
\end{proof}

Based on Lemma \ref{lem.monotonic}, the minimum execution time of the task is achieved as $t_j^{\min} = t_j(p_j^{\max})$ and the minimum area of the task is achieved as $a_j^{\min} = a_j(1)$.
\neww{Further, we let $t_j^{\opt}$ and $a_j^{\opt}$ denote the execution time and area of the task under the processor allocation of an optimal schedule.}
We now define two quantities that can be used as a lower bound of the optimal makespan.

\begin{definition}
Given the processor allocations of all the tasks in an optimal schedule,
\begin{itemize}
  \item the \textbf{\emph{total area}} $A^{\opt}$ of the task graph is the sum of the areas of all the tasks in the graph, i.e., $A^{\opt} = \sum_{j=1}^{n} a_j^{\opt}$.
  \item the length $L^{\opt}(f)$ of a path\footnote{A path $f$ consists of a sequence of tasks with linear dependency, i.e., $f = (j_{\pi(1)}, j_{\pi(2)}, \dots, j_{\pi(v)})$, where the first task~$j_{\pi(1)}$ in the sequence has no predecessor in the graph, the last task $j_{\pi(v)}$ has no successor, and, for each $2\le i\le v$, task $j_{\pi(i)}$ is a successor of task~$j_{\pi(i-1)}$.} $f$ in the graph is the sum of the execution times of all the tasks along that path, i.e., $L^{\opt}(f) = \sum_{j\in f}t_j^{\opt}$. The \textbf{\emph{critical path length}} $C^{\opt}$ of the graph is the longest length of any path in the graph, i.e., $C^{\opt} = \max_{f} L^{\opt}(f)$.
\end{itemize}
\end{definition}

Clearly, the optimal makespan cannot be smaller than \new{$\frac{A^{\opt}}{P}$ and $C^{\opt}$}. This follows from the well-known area and critical-path bounds for scheduling any task graph \cite{Graham69}.
The following lemma states this result.
\begin{lemma}\label{lem.lower}
\new{$T^{\opt} \ge \max \left(\frac{A^{\opt}}{P}, C^{\opt} \right)$}.
\end{lemma}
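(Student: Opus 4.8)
The plan is to establish the two lower bounds separately, each corresponding to one of the classical obstructions to fast scheduling, and then take their maximum. This is the standard Graham-style argument, adapted to the moldable setting where the processor allocation (and hence the area and execution time of each task) is fixed by the optimal schedule under consideration.

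First I would prove $T^{\opt} \ge \frac{A^{\opt}}{P}$, the \emph{area bound}. Fix an optimal schedule and consider the processor allocations it assigns. At any instant $\tau \in [0, T^{\opt}]$ during the execution of this schedule, the number of busy processors is at most $P$, so the total processor-time consumed over the whole schedule is at most $P \cdot T^{\opt}$. On the other hand, each task $j$ occupies exactly $p_j$ processors for exactly $t_j^{\opt}$ units of time (moldable tasks run on a fixed processor count without preemption or restart), contributing $p_j t_j^{\opt} = a_j^{\opt}$ to this total; summing over all tasks gives $A^{\opt} = \sum_{j=1}^n a_j^{\opt} \le P \cdot T^{\opt}$, which rearranges to the claim. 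A subtlety worth a sentence: this argument only needs that the schedule is feasible on $P$ processors, and it does not require monotonicity.

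Next I would prove $T^{\opt} \ge C^{\opt}$, the \emph{critical-path bound}. Let $f = (j_{\pi(1)}, \dots, j_{\pi(v)})$ be any path in the graph, i.e.\ a chain of tasks each a successor of the previous. Since a task cannot start before all its predecessors have completed, in the optimal schedule the start time of $j_{\pi(i)}$ is at least the completion time of $j_{\pi(i-1)}$, so by induction the completion time of the last task $j_{\pi(v)}$ is at least $\sum_{i=1}^v t_{j_{\pi(i)}}^{\opt} = L^{\opt}(f)$. As the makespan is no smaller than the completion time of any task, $T^{\opt} \ge L^{\opt}(f)$; taking the maximum over all paths $f$ yields $T^{\opt} \ge C^{\opt}$. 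Combining the two bounds gives $T^{\opt} \ge \max\!\left(\frac{A^{\opt}}{P}, C^{\opt}\right)$, as desired.

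I do not anticipate a genuine obstacle here; the result is routine and the lemma statement itself notes it follows from the well-known area and critical-path bounds of Graham. The only care needed is to phrase everything relative to the \emph{fixed} optimal allocation (so that $a_j^{\opt}$ and $t_j^{\opt}$ are well-defined constants, as set up in the paragraph preceding the definition), and to be explicit that both bounds hold simultaneously for the \emph{same} optimal schedule, so that the maximum of the two lower bounds is valid. If anything deserves emphasis in the write-up, it is simply that moldability does not weaken either classical argument: the area bound uses only the $P$-processor capacity constraint, and the critical-path bound uses only the precedence constraints.
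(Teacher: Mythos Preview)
Your proposal is correct and gives exactly the standard Graham-style argument for the area and critical-path bounds; the paper itself does not spell out a proof of this lemma but simply asserts it as a well-known consequence of \cite{Graham69}, so your write-up is, if anything, more detailed than what the paper provides.
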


\section{A New Online Algorithm}
\label{sec.alg}

In this section, we present a new online scheduling algorithm and derive its competitive ratio for the general speedup model (Equation~(\ref{eq.exec_time})) and its three special cases.

\subsection{Algorithm Description}

Algorithm~\ref{alg.online} presents the pseudocode of the online scheduling algorithm, which at any time maintains the set of available tasks in a waiting queue $Q$. At time 0 or whenever a running task completes execution, it checks if new tasks have become available. If so, for each newly available task $j$, it finds a processor allocation $p_j$ for the task (using Algorithm \ref{alg.allocate1}) before inserting it into the queue $Q$. Then, it applies the well-known \textbf{\emph{list scheduling}} strategy \cite{Graham69} by scanning through all the available tasks in $Q$ and executing each one right away if there are enough processors. Note that tasks are inserted into the queue without any priority considerations, although in practice certain priority rules may work better.

Algorithm \ref{alg.allocate1} presents the details of the processor allocation strategy for any task $j$. It consists of two steps. The first step performs an initial allocation for the task, which is inspired by the \textbf{\emph{local processor allocation}} strategy proposed in \cite{Benoit20_cluster, Benoit21_ieeetc}. Specifically, for each possible allocation $p\in [1, p_j^{\max}]$, we define the following:
\begin{itemize}
  \item $g_j(p) \triangleq \frac{a_j(p)}{a_j^{\min}}$: ratio between the area of the task and its minimum area;
  \item $f_j(p) \triangleq \frac{t_j(p)}{t_j^{\min}}$: ratio between the execution time of the task and its minimum execution time.
\end{itemize}

We then find an allocation $p$ that minimizes \new{$f_j(p)$} subject to the constraint \new{$g_j(p) \le \alpha^M$}, where \new{$\alpha^M \geq 1$ is a constant} whose
exact value will be determined based upon the specific speedup model $M$ under consideration. Since $g_j(p)$ is non-decreasing with $p$ and $f_j(p)$ is non-increasing with $p$, the above optimization problem can be efficiently solved \neww{using binary search in $O(\log P)$ time.}

In the second step, the algorithm reduces the initial allocation to $\lceil \mu^M P \rceil$ if it is more than $\lceil \mu^M P \rceil$; otherwise the allocation will be unchanged. \neww{Here, $\mu^M \le 0.5$ is also a constant whose value will be determined by the speedup model $M$.} Let $p_j$ denote the initial allocation for the task and $p'_j$ the final allocation. Thus, after the second step, we have:
\begin{align}\label{eq.adjust}
p'_j = \begin{cases}
\lceil \mu^M P \rceil,  & \text{~if~} {p}_j > \lceil \mu^M P \rceil \\
{p}_j, & \text{~otherwise~}
\end{cases} \ .
\end{align}

This step adopts the technique first proposed in \cite{Lepere01_DAG} and subsequently used in \cite{Jansen06_DAG, Jansen05_concave}. The purpose is to enable the execution of more tasks at any time, thus potentially increasing the overall resource utilization of the platform and reducing the makespan.

\neww{The exact values of the two parameters $\alpha^M$ and $\mu^M$ for each speedup model $M$ will be presented in Section \ref{sec.comp-ratio} when analyzing the above algorithm. }

{
\begin{algorithm}[t]
    \footnotesize
	\caption{Online\_Scheduling\_Algorithm}\label{alg.online}
        initialize a waiting queue $Q$\;
        \When{\textnormal{at time $0$ or a running task completes execution}}{
            \tcp{Processor Allocation}
            \For{\textnormal{each new task $j$ that becomes available}}{
                Allocate\_Processor($j$)\;
                insert task $j$ into the waiting queue $Q$\;
            }
            \tcp{List Scheduling}
            \For{\textnormal{each task $j$ in the waiting queue $Q$}}{
                \If{\textnormal{there are enough processors to execute the task}}{
                    execute task $j$ now\;
                }
            }
		}
\end{algorithm}
}
{
\begin{algorithm}[t]
    \footnotesize
	\caption{Allocate\_Processor($j$)}\label{alg.allocate1}
    \KwIn{\neww{Task $j$ and its speedup model $M$; parameters $\alpha^M, \mu^M$}}
    \KwOut{\neww{Processor allocation $p'_j$ for the task}}
    \tcp{Step 1: Initial Allocation}
    Compute $p_j^{\max}$ based on Equation (\ref{eq.pjmax})\;
    Compute $t_j^{\min} = t_j(p_j^{\max})$ and $a_j^{\min} = a_j(1)$\;
    Find an allocation $p_j \in [1, p_j^{\max}]$ from the following optimization problem:\new{
    \begin{align*}
    \min_{p}~f_j(p) &\triangleq \frac{t_j(p)}{t_j^{\min}} \\
    \text{s.t.}~g_j(p) &\triangleq \frac{a_j(p)}{a_j^{\min}} \le \alpha^M
    \end{align*} }\\
    \tcp{Step 2: Allocation Adjustment}
    \lIf{$p_j > \lceil \mu^M P \rceil$}
    {
        $p'_j \leftarrow \lceil \mu^M P \rceil$ \textbf{else} $p'_j \leftarrow p_j$
    }
\end{algorithm}
}

\subsection{A Novel Analysis Framework}\label{sec.framework}

We first outline an analysis framework, under which the competitive ratio of the proposed online algorithm will be derived for different speedup models.
The framework \new{is inspired by} the analysis shown in \cite{Lepere01_DAG, Jansen06_DAG, Jansen05_concave} for list scheduling as well as the analysis used in \cite{Benoit20_cluster, Benoit21_ieeetc} for local processor allocation. Together, the result nicely connects the makespan of the online algorithm to the lower bound (Lemma \ref{lem.lower}), thus proving the competitive ratio. We point out that, compared to the analysis in our preliminary work \cite{Benoit22_online}, the novel framework presented in this section provides a tighter and more coupled analysis of both list scheduling and local processor allocation, which ultimately leads to the improved competitive ratios.

Since the analysis framework in this section applies to any speedup model $M$, for simplicity, we will drop the superscript $M$ for $\alpha^M$ and $\mu^M$, and re-introduce it later in Section \ref{sec.comp-ratio} for each specific speedup model we consider.

Recall that $T$ denotes the makespan of the online scheduling algorithm. Since the algorithm allocates and de-allocates processors upon task completions, the schedule can be divided into a set $\mathcal{I} = \{I_1, I_2, \dots\}$ of non-overlapping intervals, where tasks only start (or complete) at the beginning (or end) of an interval, and the number of utilized processors does not change during an interval. For each interval $I \in \mathcal{I}$, let $p(I)$ denote its processor utilization, i.e., the total number of processors used by all tasks running in interval $I$. \new{We first classify the set of all intervals into the following two categories:
\begin{itemize}
\item $\mathcal{I}_0$: subset of intervals that satisfy $p(I) \in (0, \lceil (1-\mu) P \rceil)$;
\item $\mathcal{I}_3$: subset of intervals that satisfy $p(I) \in [\lceil (1-\mu) P \rceil, P]$.
\end{itemize}}

The following lemma shows a property for the subset of intervals in $\mathcal{I}_0$.
\begin{lemma}\label{lem.I0_path}
\new{There exists a path $f$ in the graph in which a task is always running in $\mathcal{I}_0$.}
\end{lemma}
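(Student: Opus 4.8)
The plan is to exhibit the path $f$ by a backward construction: start from the last interval and walk backwards through the schedule, always following a precedence edge into a task that is running throughout the interval we are leaving. The key observation is that in any interval $I \in \mathcal{I}_0$, the processor utilization satisfies $p(I) < \lceil (1-\mu) P \rceil$, so there are strictly more than $P - \lceil (1-\mu) P \rceil \ge \lceil \mu P \rceil - 1$, hence at least $\lceil \mu P \rceil$, free processors. Since the list scheduling step of Algorithm~\ref{alg.online} scans \emph{all} available tasks and starts any one for which enough processors are free, and since every task's final allocation $p'_j$ is at most $\lceil \mu^M P \rceil \le \lceil \mu P \rceil$ by Equation~\eqref{eq.adjust}, it follows that the waiting queue $Q$ must be \emph{empty} at every moment inside an interval of $\mathcal{I}_0$: if some task were waiting, there would be enough processors to run it, contradicting that the algorithm has reached a stable interval. (One must be slightly careful at the boundary: after scanning $Q$, no further task can be started without a completion, so "stable interval" indeed means "empty queue or insufficient processors for every queued task," and the processor count forces the former.)

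Given that $Q$ is empty throughout every $I \in \mathcal{I}_0$, the plan is then the following chaining argument. Consider any interval $I \in \mathcal{I}_0$ other than the first one, starting at time $\tau > 0$. At time $\tau$ some running task configuration is active, and in particular at least one task $j$ is running in $I$ (since $p(I) > 0$). I want to argue that some task $j'$ running in $I$ has a predecessor that was still running in the interval $I^-$ immediately preceding $I$. Suppose not: then every task running in $I$ had all of its predecessors completed by time $\tau$. Each such task either (a) started strictly before $\tau$, in which case it was already running in $I^-$, or (b) became available at or before $\tau$ but only started at $\tau$. Case (b) is impossible for a task in $\mathcal{I}_0$-interval unless it had an unfinished predecessor, because: if it were available before $\tau$ it would already be in $Q$, and since $Q$ must be empty inside every $\mathcal{I}_0$-interval — in particular it was empty inside $I^-$ if $I^- \in \mathcal{I}_0$, or at worst became available exactly at $\tau$ — the task would have been started earlier. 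The cleanest way to organize this is: either (i) some task running in $I$ was already running in $I^-$, giving us continuity directly, or (ii) every task running in $I$ started exactly at $\tau$; but a task starting at $\tau$ became available only at $\tau$ (else it would have started earlier in an empty-queue regime), which means one of its predecessors completed exactly at $\tau$, i.e., was running in $I^-$. Either way there is a task running in $I^-$ that is a predecessor of, or equal to, a task running in $I$.

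With this one-step continuity established, I build the path greedily from the end. Let $I_{\text{last}}$ be the last interval; it lies in $\mathcal{I}_0$ — because at the very end the remaining work is finishing and utilization drops below $\lceil (1-\mu)P \rceil$ at some point; more carefully, I'll take the last interval that contains a task that is a sink, or simply note that the final nonempty interval always has some task running. Pick a task running there, walk backward using the one-step claim to a predecessor task running in the previous interval, and iterate. Since every predecessor hop follows an edge of $G$ and the schedule has finitely many intervals, this terminates at time $0$ with a task that has no predecessor (a source). Reversing, we obtain a path $f = (j_{\pi(1)}, \dots, j_{\pi(v)})$ from a source to a sink such that at every instant at least one task of $f$ is running — and every instant lies in some interval, each of which we can check is in $\mathcal{I}_0$ at the moments where a task of $f$ runs, because... actually the statement only requires "a task of $f$ is always running in $\mathcal{I}_0$," meaning: whenever we are in an $\mathcal{I}_0$-interval, some task of $f$ is running. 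That is exactly what the backward walk gives, since we only ever needed continuity across $\mathcal{I}_0$-intervals; across an $\mathcal{I}_3$-interval we can splice in whatever tasks are running (a precedence chain still exists there too by the same start-time argument, or we simply don't care since the lemma restricts attention to $\mathcal{I}_0$).

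The main obstacle I anticipate is making the "empty queue" argument fully rigorous, in particular handling the interaction between the list-scheduling scan order and the interval boundaries: I must ensure that a task which \emph{could} be placed during the scan indeed \emph{is} placed before the interval is declared stable, and that the free-processor count $\ge \lceil \mu P \rceil$ genuinely dominates every possible $p'_j \le \lceil \mu^M P \rceil$ — this needs $\mu^M \le \mu$, which holds since the framework uses the task's own model constant and $\mu$ without superscript denotes the same quantity. A secondary subtlety is the backward step across the very first interval and confirming the walk reaches a genuine source; this follows because a task running at time $0$ must have had all predecessors complete by time $0$, i.e., it has none. I would write the empty-queue fact as a standalone sub-claim and then the chaining as a short induction on the interval index, keeping the precedence bookkeeping explicit.
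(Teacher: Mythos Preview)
Your approach is essentially the same as the paper's: both first argue that the waiting queue is empty during every interval of $\mathcal{I}_0$ (your free-processor count $P-(\lceil(1-\mu)P\rceil-1)\ge\lceil\mu P\rceil$ matches the paper's exactly), and then conclude that a single path covers all such moments. The paper simply cites this second step as a well-known property of list scheduling, whereas you attempt to prove it explicitly.

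Your backward construction is the standard one, but the interval-by-interval framing makes it more awkward than needed and causes the wobble you noticed around $\mathcal{I}_3$ intervals: during an $\mathcal{I}_3$ interval the queue need not be empty, so your one-step claim ``some task running in $I^-$ is a predecessor of, or equal to, a task running in $I$'' can fail there. The clean version walks backward \emph{task by task} rather than interval by interval: take a task finishing at the makespan, repeatedly pass to its last-finishing predecessor, and stop at a source. This yields a path $f$ with the property that at every instant in $[0,T]$ some task of $f$ is either running or waiting in $Q$; since $Q$ is empty throughout $\mathcal{I}_0$, some task of $f$ is running there, which is exactly the statement. This sidesteps the $\mathcal{I}_3$ issue entirely because the path is built once, independently of the interval classification.
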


\begin{proof}
During \new{$\mathcal{I}_0$}, the processor utilization is at most $\lceil (1-\mu) P \rceil - 1$, so there are at least $P - (\lceil (1-\mu) P \rceil - 1) \ge \lceil \mu P \rceil$ available processors. Based on Algorithm \ref{alg.allocate1}, any task is allocated at most $\lceil \mu P \rceil$ processors. Thus, there are enough processors to execute any new task (if one is available). This implies that there is no available task in the queue $\mathcal{Q}$ during \new{$\mathcal{I}_0$}. When a task graph is scheduled by the list scheduling algorithm, it is well known that there exists a path $f$ in the graph such that some task along that path will be running whenever there is no available task in the queue \cite{Feldmann98_DAG, Lepere01_DAG, Jansen06_DAG}, \new{hence the result}.
\end{proof}

\new{
Using the path $f$ stated in Lemma \ref{lem.I0_path}, we further split $\mathcal{I}_0$ into the following two sub-categories:
\begin{itemize}
\item $\mathcal{I}_1$: subset of $\mathcal{I}_0$ where the processor allocation for the currently running task in $f$ was not reduced (by the second step of Algorithm \ref{alg.allocate1});
\item $\mathcal{I}_2$:  subset of $\mathcal{I}_0$ where the processor allocation for the currently running task in $f$ was reduced (i.e., the task is running on $\lceil \mu P \rceil$ processors).
\end{itemize}
}

\new{Finally, given the processor allocation of an optimal schedule, we further split $\mathcal{I}_2$ into the following two sub-categories:
\begin{itemize}
\item \neww{$\mathcal{I}_{2'}$}: subset of $\mathcal{I}_2$ where the currently running task in $f$ was allocated with strictly fewer processors than in the optimal schedule;
\item \neww{$\mathcal{I}_{2''}$}: subset of $\mathcal{I}_2$ where the currently running task in $f$ was allocated with \neww{equal or more} processors than in the optimal schedule.
\end{itemize}
}

Let $|I|$ denote the duration of an interval $I$, and let $T_1 = \sum_{I\in \mathcal{I}_1} |I|$, $T_2 = \sum_{I\in \mathcal{I}_2} |I|$, \new{$T_{2'} = \sum_{I\in \mathcal{I}_{2'}} |I|$, $T_{2''} = \sum_{I\in \mathcal{I}_{2''}} |I|$} and $T_3 = \sum_{I\in \mathcal{I}_3} |I|$ denote the total durations of the different categories of intervals, respectively. Since $\mathcal{I}_1$, $\mathcal{I}_2$ and $\mathcal{I}_3$ are obviously disjoint and partition $\mathcal{I}$, we have $T = T_1 + T_2 + T_3$. \new{Finally, we define $z \in [0,1]$ such that $T_{2'}=zT_2$ and $T_{2''}=(1-z)T_2$}.

The next two lemmas relate these durations to the total area $A_{\opt}$ and critical path length $C_{\opt}$ of the task graph under an optimal schedule, given certain conditions on the initial processor allocations of the tasks under our algorithm.
\begin{lemma}\label{lem.area}
If there exists a constant $\alpha$ such that, for each task $j$, its initial processor allocation satisfies $a_j(p_j) \le \alpha a_j^{\min}$, then we have:
\begin{equation}\label{Area}
\new{\mu \left( z+\frac{1-z}{\alpha} \right)T_2 + \frac{(1-\mu)}{\alpha} T_3 \le T^{\opt} \ . } 
\end{equation}
\end{lemma}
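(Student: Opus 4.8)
The plan is to use only the area bound $T^{\opt}\ge A^{\opt}/P$ from Lemma~\ref{lem.lower} and to reduce \eqref{Area} to a statement purely about $A^{\opt}=\sum_j a_j^{\opt}$. Substituting $T_{2'}=zT_2$ and $T_{2''}=(1-z)T_2$, the left-hand side of \eqref{Area} becomes $\mu T_{2'}+\tfrac{\mu}{\alpha}T_{2''}+\tfrac{1-\mu}{\alpha}T_3$, so after multiplying through by $P$ it suffices to prove
\[
\mu P\,T_{2'}+\tfrac{\mu P}{\alpha}\,T_{2''}+\tfrac{(1-\mu)P}{\alpha}\,T_3\ \le\ \sum_{j} a_j^{\opt}.
\]
The strategy is to recognise the left-hand side as a lower bound on the processor-area our algorithm spends during the intervals of $\mathcal{I}_{2'}\cup\mathcal{I}_{2''}\cup\mathcal{I}_3$, and then to charge that area against $\sum_j a_j^{\opt}$ task by task.

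First I would turn interval durations into occupied area. In every interval of $\mathcal{I}_2$ the currently running path task has been reduced and hence occupies $\lceil\mu P\rceil\ge\mu P$ processors, and in every interval of $\mathcal{I}_3$ the utilization is at least $\lceil(1-\mu)P\rceil\ge(1-\mu)P$. Let $\tau_j^{(X)}$ denote the total time task $j$ runs inside intervals of category $X$ (so $\sum_X p'_j\tau_j^{(X)}=p'_j t_j(p'_j)=a_j(p'_j)$), and let $S$ be the set of tasks that serve as the path task during at least one interval of $\mathcal{I}_{2'}$. Summing the two utilization bounds over intervals and grouping by the running task gives $\mu P\,T_{2'}\le\sum_{j\in S}p'_j\tau_j^{(\mathcal{I}_{2'})}$ (keeping only the path task of each $\mathcal{I}_{2'}$ interval) and $\tfrac{\mu P}{\alpha}T_{2''}+\tfrac{(1-\mu)P}{\alpha}T_3\le\tfrac{1}{\alpha}\sum_{j}p'_j\big(\tau_j^{(\mathcal{I}_{2''})}+\tau_j^{(\mathcal{I}_3)}\big)$. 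So it is enough to establish the task-by-task inequality
\[
\sum_{j\in S}p'_j\tau_j^{(\mathcal{I}_{2'})}+\tfrac{1}{\alpha}\sum_{j}p'_j\big(\tau_j^{(\mathcal{I}_{2''})}+\tau_j^{(\mathcal{I}_3)}\big)\ \le\ \sum_j a_j^{\opt}.
\]

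For the per-task bound I distinguish two cases. If $j\notin S$, the hypothesis $a_j(p_j)\le\alpha a_j^{\min}$, together with $a_j(p'_j)\le a_j(p_j)$ (monotonicity, Lemma~\ref{lem.monotonic}, since $p'_j\le p_j\le p_j^{\max}$) and $a_j^{\min}\le a_j^{\opt}$, gives $a_j(p'_j)\le\alpha a_j^{\opt}$; hence the contribution of $j$ is $\tfrac{1}{\alpha}p'_j\big(\tau_j^{(\mathcal{I}_{2''})}+\tau_j^{(\mathcal{I}_3)}\big)\le\tfrac{1}{\alpha}a_j(p'_j)\le a_j^{\opt}$. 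If $j\in S$, then by the definition of $\mathcal{I}_{2'}$ the optimal schedule allocates $j$ strictly more than $p'_j$ processors, so $a_j^{\opt}\ge a_j(p'_j)$ because the area is non-decreasing; using $\alpha\ge1$, the contribution of $j$ is at most $p'_j\big(\tau_j^{(\mathcal{I}_{2'})}+\tau_j^{(\mathcal{I}_{2''})}+\tau_j^{(\mathcal{I}_3)}\big)\le p'_j t_j(p'_j)=a_j(p'_j)\le a_j^{\opt}$. Summing over $j$ proves the displayed inequality, and dividing by $P$ and re-substituting $z$ yields \eqref{Area}.

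The point requiring care is that the coefficient of $T_{2'}$ in \eqref{Area} is $\mu$, not $\mu/\alpha$: the area we spend in $\mathcal{I}_{2'}$ cannot afford the factor-$\alpha$ slack of the hypothesis. This is exactly why $\mu P\,T_{2'}$ must be charged solely against the \emph{path-task} share of the $\mathcal{I}_{2'}$ area --- the tasks in $S$, for which one has the tight inequality $a_j^{\opt}\ge a_j(p'_j)$ because their optimal allocation genuinely exceeds $p'_j$, so no factor $\alpha$ is incurred --- whereas every other slice of occupied area (non-path tasks running concurrently during $\mathcal{I}_{2'}$, all of $\mathcal{I}_{2''}$, and all of $\mathcal{I}_3$, including the $\mathcal{I}_3$-time of the tasks in $S$) is routed through the factor-$\alpha$ bound. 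Setting up this split so that no occupied area is counted twice, and then checking that the at most two charges landing on a single task of $S$ still fit inside its single optimal area $a_j^{\opt}$ (which is immediate once $\alpha\ge1$ and the tight bound are in hand), is the main obstacle; the rest is the routine bookkeeping above.
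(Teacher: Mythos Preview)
Your proof is correct and follows essentially the same approach as the paper: bound the left-hand side by pieces of occupied area, use the tight comparison $a_j(p'_j)\le a_j^{\opt}$ for the path task in $\mathcal{I}_{2'}$ (since the optimal allocation there exceeds $p'_j$), and use the factor-$\alpha$ bound $a_j(p'_j)\le\alpha a_j^{\opt}$ everywhere else, then sum and invoke $T^{\opt}\ge A^{\opt}/P$. The only cosmetic difference is that the paper charges $\mathcal{I}_{2''}$ solely against the path task's area while you charge it against all running tasks' area; both accountings work, and your explicit $\tau_j^{(X)}$ bookkeeping with the set $S$ makes the no-double-counting check slightly more transparent.
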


\begin{proof}
\new{As the area of each task $j$ is non-decreasing with its processor allocation and $p'_j \le p_j$, the final area of the task should satisfy $a_j(p'_j) \le a_j(p_j) \le \alpha a_j^{\min} \leq \alpha a_j^{\opt}$. Furthermore, during $\mathcal{I}_{2'}$, any running task $j$ from path $f$ satisfies $a_j(p'_j) \leq a_j(p_j^{\opt})=a_j^{\opt}$. We let $A_{2'|f}$ (resp. $A_{2''|f}$) denote the total area of the fraction of tasks from $f$ running in $\mathcal{I}_{2'}$ (resp. $\mathcal{I}_{2''}$), and $A^{\opt}_{2'|f}$ (resp. $A^{\opt}_{2''|f}$) the corresponding fraction of area in an optimal schedule. We have $A_{2'|f} \leq A^{\opt}_{2'|f}$ and $A_{2''|f} \leq \alpha A^{\opt}_{2''|f}$. Since $\lceil \mu P \rceil \geq \mu P$ processors are used to run tasks from $f$ in $\mathcal{I}_{2'} \cup \mathcal{I}_{2''}$, we have $\mu T_{2'} \leq \frac{{A_{2'}}_{|f}}{P} \le \frac{A^{\opt}_{2'|f}}{P}$ and $\mu T_{2''} \leq \frac{{A_{2''}}_{|f}}{P} \le \frac{\alpha A^{\opt}_{2''|f}}{P}$.}

\new{Finally, let $A_3$ denote the total area of the fraction of tasks running in $\mathcal{I}_{3}$ and $A_3^{\opt}$ the corresponding fraction of area in an optimal schedule. Since at least $\lceil (1-\mu)P \rceil \ge (1-\mu)P$ processors are utilized during $\mathcal{I}_{3}$, we have $(1-\mu) T_3 \le \frac{A_3}{P} \le \frac{\alpha A_3^{\opt}}{P}$.}

\new{Thus, altogether we can derive:
\begin{align*}
& \mu \left(z+\frac{1-z}{\alpha} \right)T_2 + \frac{(1-\mu)}{\alpha} T_3 \\
&= \mu T_{2'} + \frac{\mu T_{2''}}{\alpha} + \frac{(1-\mu)}{\alpha} T_3\\
& \leq  \frac{A^{\opt}_{2'|f}}{P} + \frac{A^{\opt}_{2''|f}}{P}+\frac{A^{\opt}_3}{P} \\
& \leq \frac{A^{\opt}}{P} \le T^{\opt}\ .
\end{align*}
The last inequality is due to the makespan lower bound shown in Lemma \ref{lem.lower}.
}
\end{proof}

\begin{lemma}\label{lem.path}
If there exists a constant $\beta$ such that, for each task~$j$, its initial processor allocation satisfies $t_j(p_j) \le \beta t_j^{\min}$, then we have:
\begin{equation}\label{CP}
\new{\frac{T_1}{\beta}  + (\mu z+1-z) T_2 \le T^{\opt} \ .}
\end{equation}
\end{lemma}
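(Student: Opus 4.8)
The plan is to bound the left-hand side of \eqref{CP} against the critical-path lower bound from Lemma~\ref{lem.lower}, namely $T^{\opt} \ge C^{\opt} \ge L^{\opt}(f) = \sum_{j\in f} t_j^{\opt}$, where $f$ is the path supplied by Lemma~\ref{lem.I0_path}. Since $f$ is a chain, at most one of its tasks runs at any instant, and by Lemma~\ref{lem.I0_path} exactly one of them runs throughout $\mathcal{I}_0 = \mathcal{I}_1 \cup \mathcal{I}_2$. So for each $j\in f$ I let $\ell_j^{(1)}$, $\ell_j^{(2')}$, $\ell_j^{(2'')}$ be the total length of the intervals of $\mathcal{I}_1$, $\mathcal{I}_{2'}$, $\mathcal{I}_{2''}$, respectively, during which $j$ is the running task of $f$; these partition $\mathcal{I}_1$, $\mathcal{I}_{2'}$, $\mathcal{I}_{2''}$, so $T_1 = \sum_{j\in f}\ell_j^{(1)}$, $T_{2'} = \sum_{j\in f}\ell_j^{(2')}$, $T_{2''} = \sum_{j\in f}\ell_j^{(2'')}$. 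The key structural fact is that whether $j$ was reduced by Step~2 of Algorithm~\ref{alg.allocate1}, and if so whether $\lceil \mu P \rceil$ is strictly below or at least $p_j^{\opt}$, depends on $j$ alone; hence each $j\in f$ contributes to exactly one of the three quantities, and always $\ell_j^{(1)}+\ell_j^{(2')}+\ell_j^{(2'')} \le t_j(p'_j)$ because $j$ runs for exactly $t_j(p'_j)$ in total. Throughout I also assume without loss of generality that the optimal schedule uses $p_j^{\opt} \le p_j^{\max}$ processors for each task, since capping any allocation at $p_j^{\max}$ does not increase the task's execution time or area.

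The core step is the per-task inequality
\[
\tfrac{1}{\beta}\,\ell_j^{(1)} + \mu\,\ell_j^{(2')} + \ell_j^{(2'')} \le t_j^{\opt} \qquad \text{for every } j \in f,
\]
proved by a three-way case split. If $j$ was not reduced, then $\ell_j^{(2')}=\ell_j^{(2'')}=0$, $p'_j = p_j$, and the hypothesis gives $\ell_j^{(1)} \le t_j(p_j) \le \beta t_j^{\min} \le \beta t_j^{\opt}$, using $t_j^{\min}=t_j(p_j^{\max}) \le t_j(p_j^{\opt})$ by Lemma~\ref{lem.monotonic}. If $j$ was reduced (so $p'_j=\lceil\mu P\rceil < p_j \le p_j^{\max}$) and $\lceil \mu P\rceil < p_j^{\opt}$, then $j$'s overlap with $\mathcal{I}_0$ lies entirely in $\mathcal{I}_{2'}$; applying the no-superlinear-speedup bound \eqref{eq.speedup_bound} with $\lceil\mu P\rceil < p_j^{\opt} \le p_j^{\max}$ gives $t_j(\lceil\mu P\rceil) \le \frac{p_j^{\opt}}{\lceil\mu P\rceil}\, t_j^{\opt}$, whence $\mu\,\ell_j^{(2')} \le \mu\, t_j(\lceil\mu P\rceil) \le \frac{\mu}{\lceil\mu P\rceil} p_j^{\opt}\, t_j^{\opt} \le \frac{p_j^{\opt}}{P}\,t_j^{\opt} \le t_j^{\opt}$, using $\lceil\mu P\rceil \ge \mu P$ and $p_j^{\opt} \le P$. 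Finally, if $j$ was reduced and $\lceil\mu P\rceil \ge p_j^{\opt}$, its overlap with $\mathcal{I}_0$ lies entirely in $\mathcal{I}_{2''}$, and monotonicity of $t_j$ on $[1,p_j^{\max}]$ (Lemma~\ref{lem.monotonic}) applied to $p_j^{\opt} \le \lceil\mu P\rceil \le p_j^{\max}$ gives $\ell_j^{(2'')} \le t_j(\lceil\mu P\rceil) \le t_j(p_j^{\opt}) = t_j^{\opt}$.

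Summing the per-task inequality over $j\in f$ yields $\frac{T_1}{\beta} + \mu T_{2'} + T_{2''} \le \sum_{j\in f} t_j^{\opt} = L^{\opt}(f) \le C^{\opt} \le T^{\opt}$, and substituting $T_{2'} = zT_2$ and $T_{2''} = (1-z)T_2$ gives exactly \eqref{CP}. I expect the main obstacle to be the bookkeeping rather than the estimates: one must argue carefully that the intervals of $\mathcal{I}_1$, $\mathcal{I}_{2'}$, $\mathcal{I}_{2''}$ are cleanly partitioned by ``which task of $f$ is running'' and that each task of $f$ falls into a single case (so that two of its three contributions vanish), and one must handle the $p_j^{\opt} \le p_j^{\max}$ normalization before invoking \eqref{eq.speedup_bound} and Lemma~\ref{lem.monotonic}.
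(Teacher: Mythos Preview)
Your proof is correct and follows essentially the same approach as the paper: partition the $\mathcal{I}_0$-time by which task of $f$ is running, bound each task's contribution to $\mathcal{I}_1$, $\mathcal{I}_{2'}$, $\mathcal{I}_{2''}$ by a multiple of $t_j^{\opt}$ using (respectively) the $\beta$-hypothesis, the no-superlinear-speedup inequality, and monotonicity, then sum and invoke the critical-path lower bound. The only cosmetic difference is in the $\mathcal{I}_{2'}$ case: the paper compares $t_j(\lceil\mu P\rceil)$ to $t_j^{\min}=t_j(p_j^{\max})$ via \eqref{eq.speedup_bound} to get $t_j(p'_j)\le\frac{1}{\mu}t_j^{\min}\le\frac{1}{\mu}t_j^{\opt}$ (so no normalization of $p_j^{\opt}$ is needed), whereas you compare directly to $t_j(p_j^{\opt})$; both yield $\mu\,\ell_j^{(2')}\le t_j^{\opt}$.
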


\begin{proof}
For any task $j$ from path $f$ running during $\mathcal{I}_1$, its processor allocation was not reduced by the second step of Algorithm \ref{alg.allocate1}, thus we must have $p'_j = p_j \le \lceil \mu P \rceil$. Therefore, its execution time should satisfy $t_j(p'_j) = t_j(p_j) \le \beta t_j^{\min} \le \beta t_j^{\opt}$.

For any task $j$ from path $f$ running during $\mathcal{I}_{2'}$, its processor allocation has been reduced, i.e., $p'_j = \lceil \mu P \rceil$. Based on
Equation~\eqref{eq.speedup_bound}, the task's execution time should satisfy $\frac{t_j(p'_j)}{t_j^{\min}} = \frac{t_j(\lceil \mu P \rceil)}{t_j(p_j^{\max})} \le \frac{p_j^{\max}}{\lceil \mu P \rceil} \le \frac{P}{\mu P} = \frac{1}{\mu}$. Thus, we have $t_j(p'_j) \le \frac{1}{\mu} t_j^{\min} \le \frac{1}{\mu} t_j^{\opt}$.

\new{Finally, for any task $j$ from path $f$ running during $\mathcal{I}_{2''}$, its processor allocation is higher than that of an optimal schedule. Therefore, its execution time should satisfy: $t_j(p'_j) \le t_j(p_j^{\opt}) = t_j^{\opt}$.}

\neww{Now, let $L^{\opt}_{1|f}$ (resp. $L^{\opt}_{2'|f}$ and $L^{\opt}_{2''|f}$) denote the length for the portion of path $f$ executed
during $\mathcal{I}_1$ (resp. $\mathcal{I}_{2'}$ and $\mathcal{I}_{2''}$) under an optimal schedule. The argument above implies that $T_1 \le \beta L^{\opt}_{1|f}$, $T_{2'} \le \frac{1}{\mu} L^{\opt}_{2'|f}$ and $T_{2''} \le L^{\opt}_{2''|f}$.
Thus, we can derive:
\begin{align*}
& \frac{T_1}{\beta}  + (\mu z+1-z) T_2 \\
&= \frac{T_1}{\beta} + \mu T_{2'}+ T_{2''} \\
&\le L^{\opt}_{1|f} + L^{\opt}_{2'|f} + L^{\opt}_{2''|f} \\
&\le L^{\opt}(f) \le C^{\opt} \le T^{\opt} \ .
\end{align*}
The last inequality is again due to the makespan lower bound shown in Lemma \ref{lem.lower}.
}
\end{proof}

Based on the results of Lemmas \ref{lem.area} and \ref{lem.path}, we can now derive an upper bound on the makespan of the online scheduling algorithm as shown below.

\begin{lemma}\label{lem.makespan}
\neww{If there exist two constants $\alpha$ and $\beta$ such that, for each task $j$, its initial processor allocation satisfies:
\begin{align}
g_j(p_j) \triangleq \frac{a_j(p_j)}{a_j^{\min}} &\le \alpha \ , \label{eq.arebound} \\
f_j(p_j) \triangleq \frac{t_j(p_j)}{t_j^{\min}} &\le \beta \ ,  \label{eq.timebound}
\end{align}
then by setting $\mu$ such that $\beta+\frac{\alpha}{1-\mu}=\frac{1}{\mu}$, i.e., $\mu=\frac{\alpha+\beta+1-\sqrt{(\alpha+\beta+1)^2-4\beta}}{2\beta}$, and under the condition $\beta \ge \frac{\mu(\alpha-1)}{(1-\mu)^2}$, we get:
\begin{equation}\label{eq.makespan}
\frac{T}{T^{\opt}} \leq \frac{1}{\mu}=\frac{2\beta}{\alpha+\beta+1-\sqrt{(\alpha+\beta+1)^2-4\beta}} \ .
\end{equation}
}
\end{lemma}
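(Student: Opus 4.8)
The plan is to combine linearly the two inequalities furnished by Lemmas~\ref{lem.area} and~\ref{lem.path} (their hypotheses are exactly \eqref{eq.arebound} and \eqref{eq.timebound}, so both apply) with the identity $T = T_1 + T_2 + T_3$, choosing the combination so that the relation defining $\mu$, namely $\beta + \frac{\alpha}{1-\mu} = \frac1\mu$, makes the coefficient of $T^{\opt}$ collapse to $\frac1\mu$.

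First I would rearrange \eqref{CP} to isolate $T_1$, obtaining $T_1 \le \beta T^{\opt} - \beta(\mu z + 1 - z)T_2$, and rearrange \eqref{Area} to isolate $T_3$ (multiplying through by $\tfrac{\alpha}{1-\mu}>0$ and using $\tfrac{\alpha\mu}{1-\mu}\bigl(z + \tfrac{1-z}{\alpha}\bigr) = \tfrac{\mu}{1-\mu}(\alpha z + 1 - z)$), obtaining $T_3 \le \tfrac{\alpha}{1-\mu}T^{\opt} - \tfrac{\mu}{1-\mu}(\alpha z + 1 - z)T_2$. Adding $T_1 + T_2 + T_3$ then gives
\[
T \;\le\; \Bigl(\beta + \tfrac{\alpha}{1-\mu}\Bigr)T^{\opt} + \bigl(1 - \psi(z)\bigr)T_2, \qquad \text{where } \psi(z) := \beta(\mu z + 1 - z) + \tfrac{\mu}{1-\mu}(\alpha z + 1 - z).
\]
By the choice of $\mu$, the prefactor of $T^{\opt}$ is $\tfrac1\mu$; and since $T_2 \ge 0$, it remains only to show $\psi(z) \ge 1$ for all $z \in [0,1]$.

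Here $\psi$ is affine in $z$. Evaluating at $z = 1$ and using the defining relation once more gives $\psi(1) = \beta\mu + \tfrac{\mu\alpha}{1-\mu} = \mu\bigl(\beta + \tfrac{\alpha}{1-\mu}\bigr) = 1$. The coefficient of $z$ in $\psi$ simplifies to $\tfrac{\mu(\alpha-1)}{1-\mu} - \beta(1-\mu)$, which is $\le 0$ exactly when $\beta \ge \tfrac{\mu(\alpha-1)}{(1-\mu)^2}$ --- precisely the side hypothesis of the lemma. Hence $\psi$ is non-increasing on $[0,1]$, so $\psi(z) \ge \psi(1) = 1$ there, and we conclude $T \le \tfrac1\mu T^{\opt}$. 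Finally, clearing denominators in $\beta + \tfrac{\alpha}{1-\mu} = \tfrac1\mu$ yields the quadratic $\beta\mu^2 - (\alpha+\beta+1)\mu + 1 = 0$, whose roots are $\tfrac{\alpha+\beta+1 \pm \sqrt{(\alpha+\beta+1)^2 - 4\beta}}{2\beta}$; since $\alpha \ge 1$ the smaller (``$-$'') root is $\le \tfrac12$ while the larger exceeds $\tfrac12$, so we must take the ``$-$'' root to respect the standing requirement $\mu \le \tfrac12$ of Step~2 of Algorithm~\ref{alg.allocate1}, giving $\tfrac1\mu = \tfrac{2\beta}{\alpha+\beta+1-\sqrt{(\alpha+\beta+1)^2-4\beta}}$.

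The only delicate points are the sign bookkeeping when isolating $T_1$ and $T_3$ (keeping $\tfrac{\alpha}{1-\mu}$ and $\tfrac{1-\mu}{\alpha}$ straight) and recognizing that the definition of $\mu$ was reverse-engineered so that simultaneously the $T^{\opt}$-coefficient is $\tfrac1\mu$ and $\psi(1) = 1$; everything else, including verifying which root of the quadratic to keep, is routine algebra.
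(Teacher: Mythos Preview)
Your proposal is correct and follows essentially the same approach as the paper: both combine the inequalities from Lemmas~\ref{lem.area} and~\ref{lem.path} linearly with $T=T_1+T_2+T_3$, use the relation $\beta+\frac{\alpha}{1-\mu}=\frac{1}{\mu}$ to make the $T^{\opt}$-coefficient equal $\frac{1}{\mu}$, and then verify that the residual $T_2$-coefficient has the right sign by checking an affine function of $z$ at $z=1$ and showing monotonicity from the side hypothesis $\beta\ge\frac{\mu(\alpha-1)}{(1-\mu)^2}$. The only difference is presentational: the paper eliminates $T_3$ and then $T_1$ sequentially (working with $f(z)=\frac{1}{\beta}(1-\psi(z))$), whereas you isolate $T_1$ and $T_3$ up front and add --- but the resulting inequality and the verification of the $T_2$-coefficient are identical up to the factor $\beta$.
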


\begin{proof}
As the makespan is given by $T=T_1+T_2+T_3$, we can multiply both sides by $\frac{1-\mu}{\alpha}$ and apply
Equation~\eqref{Area} to remove the $T_3$ term, which gives:
$$\frac{1-\mu}{\alpha}T \leq \frac{1-\mu}{\alpha} T_1+\new{\left(\frac{1-\mu-z\alpha \mu-(1-z)\mu}{\alpha} \right)}T_2+T^{\opt} \ . $$

We can then multiply both sides of the above inequality by \new{$\frac{\alpha}{(1-\mu)\beta}$} and apply Equation~\eqref{CP} to remove the $T_1$ term. This gives:
\new{$$\frac{T}{\beta} \leq \left(\frac{1-\mu-z\alpha \mu-(1-z)\mu}{(1-\mu)\beta}-\mu z+z-1\right)T_2+\left(1+\frac{\alpha}{(1-\mu)\beta}\right)T^{\opt} \ . $$}

\new{Now, if $f(z)=\frac{1-\mu-z\alpha \mu-(1-z)\mu}{(1-\mu)\beta}-\mu z+z-1 \leq 0$ is true for all $z\in [0, 1]$, we can omit the $T_2$ term in the above inequality and get:
\begin{align*}
T\leq \left(\beta+\frac{\alpha}{(1-\mu)}\right)T^{\opt} = \frac{1}{\mu} T^{\opt} \ .
\end{align*}
}

\new{We have $f'(z)=\frac{\mu(1-\alpha)}{(1-\mu)\beta}+(1-\mu)\ge0$ under the condition $\beta \ge \frac{\mu(\alpha-1)}{(1-\mu)^2}$, which makes $f(z)$ an increasing function of $z$. Thus, we simply need to ensure that $f(1) =\frac{1-\mu-\alpha \mu}{(1-\mu)\beta}-\mu \leq 0$, which is true if $\beta+\frac{\alpha}{1-\mu}=\frac{1}{\mu}$. One can then solve for $\mu$ from the above second-degree equation, and get $\mu=\frac{\alpha+\beta+1-\sqrt{(\alpha+\beta+1)^2-4\beta}}{2\beta}$.}

\neww{Finally, we show that the value of $\mu$ above is well-defined and is a valid choice satisfying $\mu \in (0, 0.5]$.
\begin{itemize}
  \item First, we can derive that $\Delta =(\alpha + \beta +1)^2-4\beta > (\beta+1)^2-4\beta = (\beta-1)^2 \ge 0$. Thus, the value of $\mu$ is well-defined.
  \item We have $\mu>\frac{\alpha + \beta +1-\sqrt{(\alpha + \beta +1)^2}}{2\beta}=0$, since $\alpha, \beta>0$.
  \item We can show $\mu=\frac{\alpha+\beta+1-\sqrt{(\alpha+\beta+1)^2-4\beta}}{2\beta} \le 0.5$, which after some manipulations is equivalent to showing $0 \le \beta^2 + 2\beta(\alpha-1)$. The latter inequality is always true since $\alpha \ge 1$. \qedhere
\end{itemize}
}
\end{proof}

\neww{\emph{Remarks.}~We point out that the two bounds shown in Lemma \ref{lem.makespan}, i.e., Inequalities (\ref{eq.arebound}) and (\ref{eq.timebound}), must be satisfied by all tasks in a task graph for the derived competitive ratio shown in Equation (\ref{eq.makespan}) to hold. In Section \ref{sec.comp-ratio}, we will prove that, for a given speedup model and regardless of the task parameters, there always exists a processor allocation that satisfies the two bounds with a particular $(\alpha, \beta)$ choice that achieves the minimum (or close to minimum) competitive ratio. We then show in Section \ref{sec.low} that the obtained competitive ratios are tight (or almost tight) by proving matching (or nearly matching) lower bounds for different speedup models.}
\neww{Thus, given an $(\alpha, \beta)$ pair for a speedup model, the processor allocation algorithm should find an allocation for each task to satisfy the two bounds. However, there could be multiple allocations that all satisfy the bounds. Among these allocations, Algorithm \ref{alg.allocate1} finds one that, subject to the $\alpha$ bound, minimizes the execution time of the task, thus satisfying the $\beta$ bound as well. Intuitively, this is a good practical choice and it also helps to simplify the analysis, which we will present in Section \ref{sec.comp-ratio}.}

\subsection{Competitive Ratios}
\label{sec.comp-ratio}

In this section, we prove the competitive ratios for the online algorithm under different speedup models.
Based on Lemma \ref{lem.makespan}, the competitive ratio is given by: $\frac{1}{\mu}=\frac{2\beta}{\alpha+\beta+1-\sqrt{(\alpha+\beta+1)^2-4\beta}}$ subject to the constraint $\beta \ge \frac{\alpha-1}{(1-\mu)^2}$.

\neww{For a given speedup model with parameters $\mathcal{P} \subseteq \{w,d,c, \bar{p} \}$, a generic approach for minimizing the ratio above can be outlined as follows: First, compute $\beta(\alpha)$ as small as possible based on the model parameters $\mathcal{P}$ for any fixed $\alpha \ge 1$. To do that, since the area of a task is non-decreasing with the processor allocation and the time non-increasing in the range $[1, p^{\max}]$, we can find the largest processor allocation $p^*(\mathcal{P}) \in [1, p^{\max}]$ that satisfies $\frac{a(p^*(\mathcal{P}))}{a^{\min}} \le \alpha$ and then compute $\beta(\alpha)=sup_{\mathcal{P}}\left(\frac{t(p^*(\mathcal{P}))}{t^{\min}} \right)$. Finally, plug $\beta(\alpha)$ into the expression of the competitive ratio and find the $\alpha$ that minimizes it while satisfying the constraint. }

\neww{Although the technique outlined above is a good generic approach, the computations involved are often too complicated for some speedup models and solving it will rely on numerical tools. Therefore, to derive the competitive ratios analytically, we will simply find a valid pair $(\alpha,\beta)$ below for each considered speedup model while verifying that the constraint is satisfied. We will then show the tightness (or near tightness) of the obtained competitive ratios by computing the lower bounds in Section~\ref{sec.low}.}

\neww{In the following, we will first consider the three special speedup models (i.e., roofline, communication and Amdahl) before tackling the general model. For clarity, we now re-introduce superscript $M \in \{\ROO, \COM, \AMD, \GEN\}$ to the notations $\alpha^M$, $\beta^M$ and $\mu^M$ in the lemmas and theorems below. Given a speedup model $M$, the analysis focuses on finding $\alpha^M$ and $\beta^M$ for each individual task, thus we will drop the task index $j$ for simplicity.}

\subsubsection{Roofline Model}
Recall that a task follows the roofline speedup model if its execution time satisfies $t(p) = \frac{w}{\min(p, \bar{p})}$ for some $\bar{p} \le P$.

\begin{lemma}\label{lem.roofline}
For any task that follows the roofline speedup model, there exists a processor allocation that achieves $\alpha^{\ROO}=\beta^{\ROO}=1$.
\end{lemma}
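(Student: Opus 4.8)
The plan is to exhibit explicitly the processor allocation that the optimization problem in Algorithm~\ref{alg.allocate1} would select for a roofline task, and to verify that it simultaneously hits the minimum execution time and the minimum area, which forces $f(p) = 1$ and $g(p) = 1$, hence $\alpha^{\ROO} = \beta^{\ROO} = 1$.

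First I would recall that for a roofline task with $t(p) = \frac{w}{\min(p,\bar p)}$ and $\bar p \le P$, the relevant quantities from Section~\ref{sec.pb} simplify dramatically: since $d = c = 0$, we have $s = \sqrt{w/c}$ undefined / effectively infinite, so $p^{\max} = \min(P, \bar p) = \bar p$. Then $t^{\min} = t(p^{\max}) = t(\bar p) = \frac{w}{\bar p}$, and the area is $a(p) = p\,t(p) = \frac{pw}{\min(p,\bar p)}$, which for $p \le \bar p$ equals $w$ — a constant. So $a^{\min} = a(1) = w$ as well.

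The key step is then to take the candidate allocation $p = \bar p = p^{\max}$. For this allocation, $g(\bar p) = \frac{a(\bar p)}{a^{\min}} = \frac{w}{w} = 1 \le \alpha^{\ROO}$ for any $\alpha^{\ROO} \ge 1$, so it is feasible for the optimization problem; and $f(\bar p) = \frac{t(\bar p)}{t^{\min}} = \frac{w/\bar p}{w/\bar p} = 1$, which is the minimum possible value of $f$ (since $f(p) \ge 1$ always by definition of $t^{\min}$). Hence the allocation $p = \bar p$ is optimal for the constrained minimization and achieves $g(p) = f(p) = 1$; taking $\alpha^{\ROO} = \beta^{\ROO} = 1$ satisfies the hypotheses~\eqref{eq.arebound} and~\eqref{eq.timebound} of Lemma~\ref{lem.makespan}.

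There is essentially no obstacle here — the only mild subtlety worth a sentence is to observe that the roofline area function is genuinely flat on $[1, \bar p]$ (allocating more processors, up to the parallelism bound, neither wastes nor saves area), which is exactly why both ratios collapse to $1$ at the same point; this is the feature that distinguishes roofline from the other three models, where the area strictly grows and so $\beta$ must be traded off against $\alpha$. One should also note in passing that the constraint $\beta \ge \frac{\mu(\alpha-1)}{(1-\mu)^2}$ from Lemma~\ref{lem.makespan} is trivially satisfied when $\alpha = 1$ (the right-hand side is $0$), so the resulting competitive ratio $\frac{1}{\mu}$ with $\alpha^{\ROO} = \beta^{\ROO} = 1$ is valid.
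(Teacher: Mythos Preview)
Your proposal is correct and follows essentially the same approach as the paper: choose the allocation $p = \bar{p} = p^{\max}$, observe that this simultaneously attains $t^{\min}$ and (because the area is flat on $[1,\bar{p}]$ under roofline) $a^{\min}$, hence $f(p)=g(p)=1$. The paper's version is just terser; your additional remarks about the optimization in Algorithm~\ref{alg.allocate1} and about the constraint $\beta \ge \frac{\mu(\alpha-1)}{(1-\mu)^2}$ are accurate but belong more naturally to Theorem~\ref{thm.roofline} than to this lemma.
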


\begin{proof}
For any task with $\bar{p}$, setting the processor allocation to $p = \bar{p}$ clearly achieves the minimum execution time $t^{\min} = \frac{w}{\bar{p}}$ for the task. It also achieves the minimum area $a^{\min} = w$, which is not affected by the processor allocation in $[1, \bar{p}]$ due to the task's linear speedup in this range.
Thus, this gives $\alpha^{\ROO} = \beta^{\ROO} = 1$.
\end{proof}

\begin{theorem}\label{thm.roofline}
Algorithm \ref{alg.online} is \new{$\frac{2}{3-\sqrt{5}} < 2.62$}-competitive for any graph of tasks that follow the roofline speedup model. This is achieved with $\mu^{\ROO}=\frac{3-\sqrt{5}}{2} \approx 0.382$.
\end{theorem}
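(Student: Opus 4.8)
The plan is to instantiate the general machinery of Lemma~\ref{lem.makespan} with the parameter pair $(\alpha,\beta)=(1,1)$ furnished by Lemma~\ref{lem.roofline}, and then simply evaluate the resulting closed-form ratio. First I would check that the hypotheses of Lemma~\ref{lem.makespan} are met by the allocation that Algorithm~\ref{alg.allocate1} actually computes when it is run with $\alpha^M=\alpha^{\ROO}=1$. By Lemma~\ref{lem.roofline}, for every roofline task the allocation $p=\bar p$ satisfies $g(\bar p)=a(\bar p)/a^{\min}=1\le \alpha^{\ROO}$ and $f(\bar p)=t(\bar p)/t^{\min}=1$; hence $p=\bar p$ is feasible for the optimization problem in Step~1 of Algorithm~\ref{alg.allocate1}, and the allocation $p_j$ it returns (which minimizes $f$ over the feasible set) satisfies $f_j(p_j)\le f_j(\bar p)=1$ as well as $g_j(p_j)\le 1$. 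Thus Inequalities~\eqref{eq.arebound} and~\eqref{eq.timebound} hold with $\alpha=\beta=1$.

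Next I would verify the side condition $\beta \ge \frac{\mu(\alpha-1)}{(1-\mu)^2}$ required by Lemma~\ref{lem.makespan}: since $\alpha=1$ the right-hand side is $0\le 1=\beta$, so it holds trivially, independently of $\mu$. Lemma~\ref{lem.makespan} then applies verbatim. Substituting $\alpha=\beta=1$ into $\mu=\frac{\alpha+\beta+1-\sqrt{(\alpha+\beta+1)^2-4\beta}}{2\beta}$ gives $\mu^{\ROO}=\frac{3-\sqrt{3^2-4}}{2}=\frac{3-\sqrt5}{2}\approx 0.382$, which indeed lies in $(0,0.5]$, and the lemma yields $\frac{T}{T^{\opt}}\le \frac{1}{\mu^{\ROO}}=\frac{2}{3-\sqrt5}=\frac{3+\sqrt5}{2}<2.62$. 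Choosing $\mu^{\ROO}=\frac{3-\sqrt5}{2}$ (and $\alpha^{\ROO}=1$) in Algorithm~\ref{alg.allocate1} is exactly what produces this bound.

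There is essentially no obstacle here: the only point requiring a moment's care is confirming that the execution-time bound~\eqref{eq.timebound} demanded by Lemma~\ref{lem.makespan} is actually met by the allocation Algorithm~\ref{alg.allocate1} returns under the area constraint $g(p)\le\alpha^{\ROO}=1$. This is immediate because, for a roofline task, the minimum execution time $t^{\min}=w/\bar p$ is attained at $p=\bar p$, an allocation whose area $w$ already equals the minimum area $a^{\min}$; hence the time-optimal allocation is never excluded by the area constraint, and the returned allocation inherits $f_j(p_j)\le 1$. All remaining work is the substitution $\alpha=\beta=1$ and the elementary simplification $\frac{2}{3-\sqrt5}=\frac{3+\sqrt5}{2}\approx 2.618$.
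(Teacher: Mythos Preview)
Your proposal is correct and follows essentially the same approach as the paper: invoke Lemma~\ref{lem.roofline} to get $\alpha=\beta=1$, verify the trivial side condition $\beta\ge\frac{\mu(\alpha-1)}{(1-\mu)^2}=0$, and substitute into the formula of Lemma~\ref{lem.makespan} to obtain $\mu^{\ROO}=\frac{3-\sqrt5}{2}$ and competitive ratio $\frac{2}{3-\sqrt5}$. Your additional paragraph confirming that Algorithm~\ref{alg.allocate1} with $\alpha^{\ROO}=1$ indeed returns an allocation meeting both bounds is a nice clarification the paper leaves implicit, but it does not change the argument.
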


\begin{proof}
\new{With $\alpha=\beta=1$, the constraint $\beta \ge \frac{\mu (\alpha-1)}{(1-\mu)^2}=0$ is obviously satisfied. Thus, we get $\mu=\frac{\alpha+\beta+1-\sqrt{(\alpha+\beta+1)^2-4\beta}}{2\beta}=\frac{3-\sqrt{5}}{2}$, and the competitive ratio is given by $\frac{1}{\mu}= \frac{2}{3-\sqrt{5}} < 2.62$.}
\end{proof}

\emph{Remarks.}~The above ratio retains the same result by Feldmann et al. \cite{Feldmann98_DAG}\footnote{In~\cite{Feldmann98_DAG}, each task
has a parallelism $p$, and can be virtualized if $p' \leq p$ processors are used for execution, with a linear slowdown. This is equivalent
to the roofline model.}, who also proved a matching lower bound for any online deterministic algorithm under the ``non-clairvoyant" setting, where the work~$w$ of a task is unknown to the scheduler. In Section \ref{sec.low}, we will prove the same lower bound, but without the non-clairvoyant setting for a class of list scheduling algorithms with deterministic local decisions for processor allocation.

\subsubsection{Communication Model}
Recall that a task follows the communication model if its execution time satisfies $t(p) = \frac{w}{p} + c (p - 1)$,
with \new{$c \geq 0$}. If $c=0$, it simplifies to a special case of the roofline model \new{and we can reach $\alpha=\beta=1$}. Thus, we \new{assume $c>0$ and} rewrite the execution time function as: $t(p) = c(\frac{w'}{p}+p-1)$ with $w' = \frac{w}{c}$. \neww{The area function is then given by $a(p) = c(w'+p(p-1))$.}
\begin{lemma}\label{lem.comm}
For any task that follows the communication model, there exists a processor allocation that achieves \new{$\alpha^{\COM}=\frac{4}{3}$ and $\beta^{\COM}=\frac{3}{2}$}.
\end{lemma}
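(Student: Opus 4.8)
The plan is to exhibit an explicit allocation and verify the two bounds directly. First I would rewrite $t(p)=c\big(\tfrac{w'}{p}+p-1\big)$ and $a(p)=c\big(w'+p^{2}-p\big)$ with $w'=w/c$, so that $a^{\min}=a(1)=cw'$ and, since $\bar p\ge P$ in the communication model, $p^{\max}=\min(P,\tilde p)$ and $t^{\min}=t(p^{\max})$. Let $\hat p$ be the largest integer with $\hat p(\hat p-1)\le w'/3$, and take the allocation $p=\min(p^{\max},\hat p)$; since $\hat p\ge 1$ this lies in $[1,p^{\max}]$. The $\alpha$-bound is then immediate: $g(p)=1+\tfrac{p(p-1)}{w'}\le 1+\tfrac13=\tfrac43$, so $\alpha^{\COM}=\tfrac43$. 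Everything then comes down to showing $f(p)=t(p)/t^{\min}\le\tfrac32$.

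If $p=p^{\max}$ then $f(p)=1$ and we are done, so assume $p=\hat p<p^{\max}$; then $\hat p+1\le p^{\max}$, maximality of $\hat p$ forces $w'<3\hat p(\hat p+1)$, and $g(\hat p)\le\tfrac43$ gives $w'\ge 3\hat p(\hat p-1)$. The key estimate is the crude bound $t^{\min}=t(p^{\max})\ge c\big(2\sqrt{w'}-1\big)$, valid for any real allocation by AM--GM. For $\hat p\ge 2$ this yields $f(\hat p)\le \frac{w'/\hat p+\hat p-1}{2\sqrt{w'}-1}$; differentiating in $u=\sqrt{w'}$, the numerator of the derivative factors as $\tfrac{2}{\hat p}(u-\hat p)(u+\hat p-1)$, so the ratio is non-decreasing once $w'\ge\hat p^{2}$, and $\hat p\ge 2$ forces $3\hat p(\hat p-1)\ge\hat p^{2}$, placing the whole feasible range of $w'$ in that region. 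Hence $f(\hat p)\le \frac{4\hat p+2}{2\sqrt{3\hat p(\hat p+1)}-1}$, and squaring reduces $f(\hat p)\le\tfrac32$ to $44\hat p^{2}-4\hat p-49\ge 0$, true for all $\hat p\ge 2$. The leftover case $\hat p=1$, i.e.\ $w'<6$, is where the crude bound is too weak (it only gives $f(1)\le\frac{w'}{2\sqrt{w'}-1}$, which exceeds $\tfrac32$ as $w'\to6$), so I would instead pin down $t^{\min}$ exactly: from $p^{\max}\ge2$ we get $2\le\tilde p$, while $w'<6$ gives $\tilde p\le3$ and rules out $\tilde p=3$ (which by definition of $\tilde p$ would need $t(2)>t(3)$, i.e.\ $w'>6$); thus $\tilde p=2$, $p^{\max}=2$, $t^{\min}=t(2)=c(w'/2+1)$, and $f(1)=\frac{2w'}{w'+2}<\frac{12}{8}=\tfrac32$. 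The degenerate case $c=0$ is the roofline model, where $\alpha=\beta=1$ already suffices.

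The hard part is the interaction between the continuous relaxation and the integer rounding of $\tilde p$. The continuous picture suggests $f$ never exceeds $2/\sqrt3\approx1.15$, which would tempt one to claim a far smaller $\beta^{\COM}$; but small-$w'$ instances, where the rounding rule pins $p^{\max}$ to $2$, drive $f$ arbitrarily close to $\tfrac32$. So $\beta^{\COM}=\tfrac32$ is essentially tight for the choice $\alpha^{\COM}=\tfrac43$, the small cases genuinely require arguing about $\tilde p$ by hand rather than through the AM--GM bound, and one still has to check that the pair $\big(\tfrac43,\tfrac32\big)$ meets the side condition of Lemma~\ref{lem.makespan} (it does, since the induced $\mu\approx0.295$ makes $\frac{\mu(\alpha-1)}{(1-\mu)^2}\approx0.2<\tfrac32$).
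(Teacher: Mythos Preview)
Your argument is correct and follows essentially the same route as the paper: rewrite in terms of $w'=w/c$, choose an allocation near $\sqrt{w'/3}$ so that the area ratio stays below $4/3$, bound $t^{\min}$ from below by $c(2\sqrt{w'}-1)$ via AM--GM for the large-$w'$ regime, and handle the small-$w'$ case by determining $p^{\max}$ exactly. Your single allocation rule $\hat p=\max\{k:k(k-1)\le w'/3\}$ with the two-way split $\hat p=1$ versus $\hat p\ge 2$ is a little more uniform than the paper's three ranges $w'\le 6$, $6<w'\le 25$, $w'>25$ (the paper treats $p=1$ and $p=2$ separately before switching to $\lfloor\sqrt{w'/3}+\tfrac12\rfloor$), but the underlying mechanics---the monotonicity-in-$w'$ trick, the AM--GM lower bound, and the exact computation when $p^{\max}=2$---are identical.
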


\begin{proof}
Recall that $p^{\max}$ denotes the number of processors that minimizes the task's execution time $t(p)$, i.e., $t(p^{\max}) = t^{\min}$. Clearly, we have either $p^{\max}=P$ or $\lfloor \sqrt{w'} \rfloor \leq p^{\max} \leq \lceil \sqrt{w'} \rceil $. Also, the minimum area of the task is obtained with one processor, i.e., $a^{\min} = a(1) = cw'$.

\new{Furthermore, for a given choice of \neww{$p$}, we can show that $f(w', p) \triangleq \frac{t(p)}{t^{\min}} = \frac{\frac{w'}{p}+p-1}{\frac{w'}{p^{\max}}+p^{\max}-1}$ is a non-decreasing function of $w'$ in the interval $[p^2, \infty)$. To see that, we can compute the partial derivative:
\begin{align*}
\frac{\partial f(w', p)}{\partial w'}&=\frac{\frac{1}{p}\left(\frac{w'}{p^{\max}}+p^{\max}-1\right)-\frac{1}{p^{\max}}\left(\frac{w'}{p}+p-1\right)}{\left(\frac{w'}{p^{\max}}+p^{\max}-1\right)^2}\\
&=\frac{\frac{p^{\max}-1}{p}-\frac{p-1}{p^{\max}}}{\left(\frac{w'}{p^{\max}}+p^{\max}-1\right)^2} \ ,
\end{align*}
which is defined everywhere except at the points where $p^{\max}$ changes (due to changes of $w'$), and has the same sign as $\frac{p^{\max}-1}{p}-\frac{p-1}{p^{\max}}\geq 0$. The last inequality is because if $p=p^{\max}$, it is equal to $0$, otherwise $\frac{p^{\max}-1}{p} \geq 1$ and $\frac{p-1}{p^{\max}}<1$. This remains true if $p \le p^{\max}$, which is satisfied when $w'\ge p^2$.}

We now consider three cases:

\new{\textbf{Case 1}: $w'\leq 6$. In this case, we set $p=1$, which gives the minimum area, i.e., $\frac{a(p)}{a^{\min}} = 1$. When $w' \leq 1$, setting $p=1$ also gives the minimum execution time, i.e., $\frac{t(p)}{t^{\min}} =1$. Otherwise, we have $\frac{t(p)}{t^{\min}}=f(w',1) \leq f(6,1)=\frac{6}{\min(\frac{6}{2}+1,\frac{6}{3}+2)} =\frac{3}{2}$, since $p^{\max}=2$ or $p^{\max}=3$ when $w'=6$.}

\new{\textbf{Case 2}: $6 < w' \leq 25$. In this case, we set $p=2$ and get $\frac{a(p)}{a^{\min}} =\frac{c(w'+2)}{cw'}=1+\frac{2}{w'} < \frac{4}{3}$. As $w'>p^2=4$, we can also get $\frac{t(p)}{t^{\min}}=f(w',2) \leq f(25,2)=\frac{\frac{25}{2}+1}{\frac{25}{5}+4}=\frac{27}{18}=\frac{3}{2}$, since $p^{\max}=5$ when $w'=25$.}

\new{\textbf{Case 3}: $w'>25$. In this case, we have $t^{\min} \geq c(2\sqrt{w'}-1)$, which is the minimum possible execution time if the processor allocation could be non-integers. We set $p=\left\lfloor \sqrt{\frac{w'}{3}}+\frac{1}{2} \right\rfloor$ and obtain $\frac{a(p)}{a^{\min}}=\frac{c(w'+p(p-1))}{cw'}\leq 1+\frac{1}{w'}\left( \sqrt{\frac{w'}{3}}+\frac{1}{2}\right)\left(\sqrt{\frac{w'}{3}}-\frac{1}{2}\right)\leq 1+\frac{1}{w'}\frac{w'}{3} = \frac{4}{3}$.
Finally, $\frac{t(p)}{t^{\min}} \leq \frac{c\left(\frac{w'}{\sqrt{\frac{w'}{3}}-\frac{1}{2}} +\sqrt{\frac{w'}{3}}\right)}{c(2\sqrt{w'}-1)}=\frac{1}{2-\frac{1}{\sqrt{w'}}}\left(\frac{1}{\frac{1}{\sqrt{3}}-\frac{1}{2\sqrt{w'}}} +\frac{1}{\sqrt{3}}\right)$.
This function is clearly decreasing with $w'$, and using $w'>25$, we get $\frac{t(p)}{t^{\min}} \leq \frac{5}{9}\left(\frac{10\sqrt{3}}{10-\sqrt{3}}+\frac{1}{\sqrt{3}} \right) \approx 1.48 <\frac{3}{2}$.}
\end{proof}

\new{\emph{Remarks.} Using the generic approach outlined at the beginning of this section, we could actually compute the best possible $\beta(\alpha)$ for any $\alpha>1$.
The analysis, however, is very technical, and finding the optimal $\alpha$ then requires numerical analysis tools. It turns out that the best $(\alpha, \beta)$ pair is $(\frac{4}{3}, \frac{3}{2})$, and we will show that this pair is optimal in Section~\ref{sec.low}.}

\begin{theorem}
Algorithm \ref{alg.online} is \new{$\frac{18}{23-\sqrt{313}} < 3.391$}-competitive for any graph of tasks that follow the communication model. This is achieved with \new{$\mu^{\COM}=\frac{23-\sqrt{313}}{18} \approx 0.295$}.
\end{theorem}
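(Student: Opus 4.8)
The plan is to obtain the theorem as a direct specialization of Lemma~\ref{lem.makespan} with the pair $(\alpha,\beta) = (\tfrac43, \tfrac32)$ supplied by Lemma~\ref{lem.comm}. By Lemma~\ref{lem.comm}, for every task following the communication model there is a processor allocation (hence in particular the one produced by Algorithm~\ref{alg.allocate1} run with $\alpha^{\COM} = \tfrac43$) whose area ratio is at most $\tfrac43$ and whose time ratio is at most $\tfrac32$; these are precisely Inequalities~\eqref{eq.arebound} and \eqref{eq.timebound}. So the whole proof reduces to checking the hypothesis of Lemma~\ref{lem.makespan} and substituting the numbers into its closed-form conclusion.

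First I would verify the side condition $\beta \ge \frac{\mu(\alpha-1)}{(1-\mu)^2}$. Since Lemma~\ref{lem.makespan} itself guarantees $\mu \in (0,\tfrac12]$, we have $(1-\mu)^2 \ge \tfrac14$ and $\mu \le \tfrac12$, so with $\alpha - 1 = \tfrac13$ the right-hand side is at most $\tfrac{(1/2)(1/3)}{1/4} = \tfrac23 < \tfrac32 = \beta$; the condition holds with room to spare (using the actual $\mu^{\COM} \approx 0.295$ the right-hand side is only about $0.2$). Then I would set $\mu^{\COM}$ according to the formula of Lemma~\ref{lem.makespan}: since $\alpha + \beta + 1 = \tfrac{23}{6}$ and $4\beta = 6$, we get $(\alpha+\beta+1)^2 - 4\beta = \tfrac{529}{36} - 6 = \tfrac{313}{36}$, hence
\[
\mu^{\COM} = \frac{\tfrac{23}{6} - \tfrac{\sqrt{313}}{6}}{3} = \frac{23 - \sqrt{313}}{18} \approx 0.295 ,
\]
and the competitive ratio equals $\tfrac{1}{\mu^{\COM}} = \tfrac{18}{23 - \sqrt{313}}$, which a numerical estimate ($\sqrt{313} \approx 17.69$) places below $3.391$.

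I do not expect a real obstacle here: the only genuinely difficult ingredient — that the pair $(\tfrac43, \tfrac32)$ is simultaneously attainable for every communication-model task, which requires the three-case analysis in $w' = w/c$ carried out in Lemma~\ref{lem.comm} — is already available, and everything that remains is the radical simplification together with the (very slack) verification of the side condition. The one place where a small slip is easy to make is the arithmetic with the discriminant $\tfrac{313}{36}$, so I would double-check $\tfrac{529}{36} - 6 = \tfrac{313}{36}$ and confirm that the numerical value of $\sqrt{313}$ indeed keeps the strict inequality $\tfrac{1}{\mu^{\COM}} < 3.391$.
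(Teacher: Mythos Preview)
Your proposal is correct and follows essentially the same approach as the paper's proof: invoke Lemma~\ref{lem.comm} to obtain $(\alpha,\beta)=(\tfrac43,\tfrac32)$, plug into the formula of Lemma~\ref{lem.makespan}, verify the side condition, and simplify the radical. The paper's proof is just a terser version of your argument, stating the constraint value as ``$\approx 0.2$'' rather than bounding it by $\tfrac23$ as you do.
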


\begin{proof}
\new{With $\alpha=\frac{4}{3}$ and $\beta=\frac{3}{2}$, we get $\mu=\frac{\alpha+\beta+1-\sqrt{(\alpha+\beta+1)^2-4\beta}}{2\beta}=\frac{23-\sqrt{313}}{18}$, and the constraint $\beta \ge \frac{\mu(\alpha-1)}{(1-\mu)^2} \approx 0.2$ is satisfied. Thus, the competitive ratio is given by $\frac{1}{\mu}= \frac{18}{23-\sqrt{313}} < 3.391$.}
\end{proof}

\subsubsection{Amdahl's Model}

Recall that a task follows the Amdahl's model if its execution time function is $t(p) = \frac{w}{p} + d$, with \new{$d\geq0$}, thus the area function is given by $a(p)= p t(p) = w + dp$.

\begin{lemma}\label{lem.amdahl}
For any task that follows the Amdahl's model, there exists a processor allocation that achieves \neww{$\alpha^{\AMD} =\frac{\sqrt{2}+1+\sqrt{2\sqrt{2}-1}}{2} \approx 1.883$ and $\beta^{\AMD}=\frac{1+\sqrt{4\sqrt{2}+5}}{2} \approx 2.132$}.
\end{lemma}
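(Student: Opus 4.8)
The plan is to exhibit, for every task following Amdahl's model, an explicit processor allocation and verify the two bounds directly. Since $c=0$ and $\bar p\ge P$ here, we have $p^{\max}=\min(P,\bar p)=P$, the execution time $t(p)=\frac{w}{p}+d$ is strictly decreasing, and the area $a(p)=w+dp$ is strictly increasing; in particular $a^{\min}=a(1)=w+d$ and $t^{\min}=t(P)=\frac{w}{P}+d$. The degenerate cases $w=0$ (then $t$ is constant, take $p=1$) and $d=0$ (then this is the roofline model, take $p=P$) already give $\alpha=\beta=1$, so I would assume $w,d>0$. The area constraint $g(p)=\frac{w+dp}{w+d}\le\alpha$ is then equivalent to $p\le p^\ast:=(\alpha-1)\tfrac{w}{d}+\alpha$, and note that $p^\ast>\alpha>1$. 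If $p^\ast\ge P$, I would allocate $p=P$: monotonicity of $g$ gives $g(P)\le g(p^\ast)=\alpha$, and $t(P)=t^{\min}$ gives $f(P)=1\le\beta$, which settles this case.

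In the remaining case $p^\ast<P$, I would allocate $p=\lfloor p^\ast\rfloor$, which lies in $[1,P)$ because $1<p^\ast<P$; monotonicity of $g$ gives $g(\lfloor p^\ast\rfloor)\le g(p^\ast)=\alpha$, so the $\alpha$-bound holds. For the $\beta$-bound, I would use $t^{\min}=\frac{w}{P}+d>d$ to write $f(\lfloor p^\ast\rfloor)<\frac{w/\lfloor p^\ast\rfloor+d}{d}=1+\frac{w}{d\lfloor p^\ast\rfloor}$. Setting $x=\tfrac{w}{d}$ and $k=\lfloor p^\ast\rfloor$, the defining inequality $k\le p^\ast=(\alpha-1)x+\alpha<k+1$ gives $x<\frac{k+1-\alpha}{\alpha-1}$, hence $f(\lfloor p^\ast\rfloor)<1+\frac{k+1-\alpha}{k(\alpha-1)}=1+\frac{1}{\alpha-1}-\frac{1}{k}$. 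This expression is increasing in the integer $k\ge1$ and is bounded above by $1+\frac{1}{\alpha-1}=\frac{\alpha}{\alpha-1}$ for every $k$, so $f(\lfloor p^\ast\rfloor)<\frac{\alpha}{\alpha-1}$ for every Amdahl task.

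To finish, I would check the identity $\beta^{\AMD}=\frac{\alpha^{\AMD}}{\alpha^{\AMD}-1}$, equivalently $\frac{1}{\alpha^{\AMD}}+\frac{1}{\beta^{\AMD}}=1$, i.e.\ $\alpha^{\AMD}+\beta^{\AMD}=\alpha^{\AMD}\beta^{\AMD}$, which for $\alpha^{\AMD}=\frac{\sqrt2+1+\sqrt{2\sqrt2-1}}{2}$ and $\beta^{\AMD}=\frac{1+\sqrt{4\sqrt2+5}}{2}$ reduces to a routine (if slightly tedious) simplification of nested radicals; then $f(\lfloor p^\ast\rfloor)<\frac{\alpha^{\AMD}}{\alpha^{\AMD}-1}=\beta^{\AMD}$ as required, and the $\alpha$-bound $g\le\alpha^{\AMD}$ holds with $\alpha^{\AMD}>1$. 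The main obstacle is not any single computation but the worst-case bookkeeping: $f$ is largest when simultaneously $P\to\infty$ (so $t^{\min}\to d$) and $k\to\infty$ with $x$ approaching $\frac{k+1-\alpha}{\alpha-1}$ from below, and one must see that $\frac{\alpha}{\alpha-1}$ is a supremum that is approached but never attained, so the non-strict requirement $f\le\beta^{\AMD}$ genuinely holds for all tasks. This also explains the constants: $\beta(\alpha)=\frac{\alpha}{\alpha-1}$ is the best attainable area/time trade-off for Amdahl, $\alpha^{\AMD}$ is the value of $\alpha$ that minimizes the competitive ratio $\tfrac{1}{\mu}$ of Lemma~\ref{lem.makespan} along this curve, and for the subsequent theorem one would separately confirm that this pair satisfies the side condition $\beta\ge\frac{\mu(\alpha-1)}{(1-\mu)^2}$.
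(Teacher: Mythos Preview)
Your proposal is correct and follows essentially the same approach as the paper: both establish the trade-off $\beta(\alpha)=\frac{\alpha}{\alpha-1}$ for Amdahl by choosing an allocation near $(\alpha-1)\tfrac{w}{d}$ and then specialize to the stated $\alpha^{\AMD}$ (chosen to optimize the competitive ratio in Lemma~\ref{lem.makespan}). The only cosmetic difference is that the paper sets $p=\min\bigl(\lceil(\alpha-1)\tfrac{w}{d}\rceil,\,P\bigr)$ whereas you take $p=\lfloor p^\ast\rfloor$ with $p^\ast=(\alpha-1)\tfrac{w}{d}+\alpha$, i.e.\ the largest integer meeting the area bound; both yield the same $\beta(\alpha)$ and your handling of the strict-vs-nonstrict supremum is a nice extra observation.
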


\begin{proof}
The minimum execution time of the task is obtained with all $P$ processors, i.e., $t^{\min} = t(P) = \frac{w}{P} + d$, and the minimum area with just one processor, i.e., $a^{\min} = a(1) = w+d$.

\new{We first show that, for any $\alpha > 1$, there exists a processor allocation $p$ that satisfies the $\alpha$ bound, i.e., $\frac{a(p)}{a^{\min}} \le \alpha$ and at the same time achieves $\beta(\alpha)=\frac{\alpha}{\alpha-1}$, i.e., $\frac{t(p)}{t^{\min}} \le \beta(\alpha)=\frac{\alpha}{\alpha-1}$. To show that, for any given $\alpha>1$, we let $x = \alpha-1$, and set $p=\min(\lceil x\frac{w}{d} \rceil,P)$. This implies $p \le \lceil x\frac{w}{d} \rceil \leq x\frac{w}{d}+1$. Thus, we have
$\frac{a(p)}{a^{\min}} = \frac{w+dp}{w+d} \leq \frac{w+d\left(x\frac{w}{d}+1\right)}{w+d} = \frac{w+d+xw}{w+d} = 1 + \frac{xw}{w+d} \le 1+x = \alpha$.
Furthermore, if $p = \lceil x\frac{w}{d} \rceil \ge x\frac{w}{d}$, we have $\frac{t(p)}{t^{\min}} \leq \frac{\frac{w}{x\frac{w}{d}}+d}{\frac{w}{P}+d} \le \frac{\frac{d}{x}+d}{d} = \frac{1}{x}+1 = \frac{1}{\alpha-1}+1=\frac{\alpha}{\alpha-1}=\beta(\alpha)$. Otherwise, if $p=P$, we get $t(p) = t^{\min}$ and thus $\frac{t(p)}{t^{\min}}=1<\frac{\alpha}{\alpha-1} = \beta(\alpha)$.}

\neww{We can now substitute $\beta(\alpha)$ into the expression of the competitive ratio and get:
\begin{align*}
\frac{1}{\mu} = \frac{2\frac{\alpha}{\alpha-1}}{\alpha+ \frac{\alpha}{\alpha-1}+1-\sqrt{\left(\alpha+ \frac{\alpha}{\alpha-1}+1\right)^2-4 \frac{\alpha}{\alpha-1}}} \ .
\end{align*}
To minimize the ratio above, one can use the standard technique of differentiating and setting the derivative to zero. The expression is quite long, and the full analysis is omitted. It turns out that $\alpha^{\AMD} =\frac{\sqrt{2}+1+\sqrt{2\sqrt{2}-1}}{2}$ minimizes the ratio. Plugging it back into $\beta(\alpha) = \frac{\alpha}{\alpha-1}$ and simplifying, we can get $\beta^{\AMD} = \frac{1+\sqrt{4\sqrt{2}+5}}{2}$.
}
\end{proof}

\begin{theorem}\label{thm.amdahl}
Algorithm \ref{alg.online} is \new{$\frac{2}{1-\sqrt{8\sqrt{2}-11}} < 4.55$}-competitive for any graph of tasks that follow the Amdahl's model. This is achieved with \new{$\mu^{\AMD}=\frac{1-\sqrt{8\sqrt{2}-11}}{2} \approx 0.22$}.
\end{theorem}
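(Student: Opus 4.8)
The plan is to obtain Theorem~\ref{thm.amdahl} as a direct instantiation of the generic bound of Lemma~\ref{lem.makespan} using the pair $(\alpha^{\AMD},\beta^{\AMD})$ supplied by Lemma~\ref{lem.amdahl}. First I would verify that Algorithm~\ref{alg.allocate1}, run with parameters $\alpha^{\AMD}$ and $\mu^{\AMD}$, returns for every task an initial allocation $p_j$ meeting both \eqref{eq.arebound} and \eqref{eq.timebound} with $\alpha=\alpha^{\AMD}$ and $\beta=\beta^{\AMD}$: the area bound $g_j(p_j)\le\alpha^{\AMD}$ is enforced by construction in Step~1, and the time bound $f_j(p_j)\le\beta^{\AMD}$ holds because Step~1 returns a time-minimizing allocation among those respecting the area constraint, while Lemma~\ref{lem.amdahl} exhibits at least one such allocation whose time ratio is at most $\beta^{\AMD}$. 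Hence the hypotheses of Lemma~\ref{lem.makespan} are satisfied.

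Next I would carry out the arithmetic of Lemma~\ref{lem.makespan} at $\alpha=\alpha^{\AMD}\approx 1.883$ and $\beta=\beta^{\AMD}\approx 2.132$. Two items must be dealt with: (i) the side condition $\beta\ge\frac{\mu(\alpha-1)}{(1-\mu)^2}$, whose right-hand side evaluates to roughly $0.32$ and is thus dominated by $\beta$; and (ii) showing that the closed form $\mu=\frac{\alpha+\beta+1-\sqrt{(\alpha+\beta+1)^2-4\beta}}{2\beta}$ collapses to $\mu^{\AMD}=\frac{1-\sqrt{8\sqrt{2}-11}}{2}$, after which the competitive ratio $\frac{1}{\mu}=\frac{2}{1-\sqrt{8\sqrt{2}-11}}<4.55$ follows from a numerical estimate of $\sqrt{8\sqrt{2}-11}$.

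For (ii) I would exploit the relation $\beta^{\AMD}=\frac{\alpha^{\AMD}}{\alpha^{\AMD}-1}$ established inside the proof of Lemma~\ref{lem.amdahl}, which gives $\alpha^{\AMD}+\beta^{\AMD}=\alpha^{\AMD}\beta^{\AMD}$, so that $\mu^{\AMD}$ is the smaller root of $\beta\mu^2-(\alpha\beta+1)\mu+1=0$. Substituting $\mu=\frac{1-r}{2}$ with $r=\sqrt{8\sqrt{2}-11}$ and using $r^2=8\sqrt{2}-11$, the quadratic reduces to the single radical identity $r(\alpha^{\AMD}+1)=(6-4\sqrt{2})\,\beta^{\AMD}+\alpha^{\AMD}-1$, which I would confirm by squaring both sides and plugging in the explicit forms $\alpha^{\AMD}=\frac{\sqrt{2}+1+\sqrt{2\sqrt{2}-1}}{2}$ and $\beta^{\AMD}=\frac{1+\sqrt{4\sqrt{2}+5}}{2}$ from Lemma~\ref{lem.amdahl}.

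The main obstacle is precisely this last radical bookkeeping: the expressions for $\alpha^{\AMD}$ and $\beta^{\AMD}$ carry the nested radicals $\sqrt{2\sqrt{2}-1}$ and $\sqrt{4\sqrt{2}+5}$, and squaring the identity produces cross terms such as $\sqrt{2}\,\sqrt{2\sqrt{2}-1}$ that must be tracked carefully -- elementary but error-prone. A cleaner fallback, should the direct simplification become unwieldy, is to avoid writing $\mu^{\AMD}$ in radical form at all: substitute $\beta(\alpha)=\frac{\alpha}{\alpha-1}$ into $\frac{1}{\mu}$ to get a one-variable function of $\alpha$, recall that Lemma~\ref{lem.amdahl} already located its minimizer, and evaluate there, which yields the bound $<4.55$ with only a short computation that Lemma~\ref{lem.amdahl} has essentially set up.
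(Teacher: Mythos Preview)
Your proposal is correct and follows essentially the same approach as the paper: substitute the pair $(\alpha^{\AMD},\beta^{\AMD})$ from Lemma~\ref{lem.amdahl} into Lemma~\ref{lem.makespan}, check the side condition numerically, and simplify the resulting $\mu$ to the stated closed form. The paper's own proof in fact just says ``simplifying (hard!)'' at the radical step, so your explicit use of the relation $\beta^{\AMD}=\frac{\alpha^{\AMD}}{\alpha^{\AMD}-1}$ to reduce the quadratic, together with the one-variable fallback, is a welcome elaboration rather than a departure.
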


\begin{proof}
\neww{By substituting $\alpha =\frac{\sqrt{2}+1+\sqrt{2\sqrt{2}-1}}{2}$ and $\beta=\frac{1+\sqrt{4\sqrt{2}+5}}{2}$ into the expression of $\mu$ and simpifying (hard!), we can get $\mu = \frac{\alpha+\beta+1-\sqrt{(\alpha+\beta+1)^2-4\beta}}{2\beta} = \frac{1-\sqrt{8\sqrt{2}-11}}{2}$. We can also check that the constraint $\beta > \frac{\mu (\alpha-1)}{(1-\mu)^2} \approx 0.32$ is satisfied.
Thus, the competitive ratio is given by $\frac{1}{\mu} = \frac{2}{1-\sqrt{8\sqrt{2}-11}} < 4.55$. }
\end{proof}

\subsubsection{General Model}

We finally consider the general speedup model as given in Equation (\ref{eq.exec_time}). \neww{Without loss of generality}, \new{we assume $w>0$ (otherwise we get $\alpha=\beta=1$ using one processor), and $c, d>0$ (otherwise the model reduces to the communication or the Amdahl's model and also results in smaller $\alpha$ and $\beta$)}. We rewrite the execution time function as: $t(p) = c\left(\frac{w'}{\min(p, \bar{p})}+d'+p-1\right)$ with $w' = \frac{w}{c}$ and $d' = \frac{d}{c}$. \neww{We further assume $\bar{p} \le P$ (otherwise changing $\bar{p}$ to $P$ does not affect the execution time of the task for any feasible processor allocation). Finally, any reasonable scheduling algorithm will not allocate more than $\bar{p}$ processors to the task since it would increase both execution time and area. Thus, assuming $p \le \bar{p}$, we can simplify the execution time function as $t(p) = c\left(\frac{w'}{p}+d'+p-1\right)$ and the area function is given by $a(p) = c(w'+d'p+p(p-1))$.}
\begin{lemma}\label{lem.general}
For any task that follows the general model, there exists a processor allocation that achieves \new{$\alpha^{\GEN}=2
$ and $\beta^{\GEN}=\frac{27}{13}$}.
\end{lemma}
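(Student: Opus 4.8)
The plan is to follow the same template as the proofs of Lemmas~\ref{lem.comm} and~\ref{lem.amdahl}: exhibit, for an arbitrary task, a concrete integer allocation $p\in[1,p^{\max}]$ and verify that it satisfies both $\frac{a(p)}{a^{\min}}\le 2$ and $\frac{t(p)}{t^{\min}}\le\frac{27}{13}$. Since we have reduced to $p\le\bar p$, the execution time is $t(p)=c\,r(p)+d$ with $r(p)\triangleq\frac{w'}{p}+p-1$, and the area is $a(p)=c\big(w'+d'p+p(p-1)\big)$, so that $a^{\min}=a(1)=c(w'+d')$. A first observation is that, because $d$ is an additive constant, $t(p)$ is minimized at the same point as the communication part $c\,r(p)$; hence $t^{\min}=c\,r(p^{\max})+d$ with $p^{\max}=\min(\bar p,P,\tilde p)$ and $\tilde p\in\{\lfloor\sqrt{w'}\rfloor,\lceil\sqrt{w'}\rceil\}$, exactly as in the communication model.

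Next I would record two elementary bounds that let the analysis interpolate between the communication case ($d'$ small) and an Amdahl-like case ($d'$ large). First, since $p^{\max}$ minimizes $r$, for any $p$ we have
\[
\frac{t(p)}{t^{\min}}=\frac{r(p)+d'}{r(p^{\max})+d'}\le\frac{r(p)}{r(p^{\max})},
\]
i.e.\ the time ratio never exceeds the corresponding communication-model ratio. Second, the area ratio decomposes as the convex combination
\[
\frac{a(p)}{a^{\min}}=\frac{w'}{w'+d'}\cdot\frac{w'+p(p-1)}{w'}+\frac{d'}{w'+d'}\cdot p,
\]
so $\frac{a(p)}{a^{\min}}\le\max\!\big(\tfrac{w'+p(p-1)}{w'},\,p\big)$; in particular every allocation $p\in\{1,2\}$ meets the $\alpha^{\GEN}=2$ budget, while for $d'$ small one may push $p$ up toward $\sqrt{w'}$ just as in the proof of Lemma~\ref{lem.comm}.

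With these in hand I would take $p$ to be the largest integer in $[1,p^{\max}]$ satisfying the area constraint $d'(p-2)+p(p-1)\le w'$ (this is what Algorithm~\ref{alg.allocate1} would return). If $p=p^{\max}$ then $t(p)=t^{\min}$ and there is nothing to prove, so assume $p<p^{\max}$; maximality then gives $d'(p-1)+p(p+1)>w'$, which bounds $p$ and $d'$ against $w'$ simultaneously. The verification of $\frac{t(p)}{t^{\min}}\le\frac{27}{13}$ then splits into a small number of regimes: a regime of small $w'$ handled by inspection with $p\in\{1,2\}$; a regime where $d'$ is large relative to $w'$, in which $p$ is forced close to $2$ and the bound follows from $t^{\min}\ge d=c d'$ together with $t(p)\le c(\tfrac{w'}{2}+1)+d$; a regime where $d'$ is small, handled by the first helper bound above and the case analysis of Lemma~\ref{lem.comm}; and an intermediate regime, in which the area constraint is the binding one, $p$ is the (rounded-down) root of $p^2+(d'-1)p-(w'+2d')=0$, and one bounds $r(p)$ together with $t^{\min}\ge c(2\sqrt{w'}-1)+d$.

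The main obstacle is precisely this intermediate regime: unlike the communication model (which depends on $w'$ only) or Amdahl's model (which depends only on the ratio $w/d$), the general model is genuinely two-parameter, and neither extreme allocation covers the band roughly $\sqrt{w'}\lesssim d'\lesssim w'$, so one must carry the allocation $p$ as a function of both $w'$ and $d'$ and optimize the resulting expression for $\frac{t(p)}{t^{\min}}$ over that band. It is this optimization that pins the constant to $\frac{27}{13}$ (and, as the authors note for the non-special models, it is most cleanly carried out with the generic recipe / numerically); the remaining tasks are bookkeeping --- integrality of $p$ and $p^{\max}$, the caps $\bar p$ and $P$, and checking that the resulting pair $(\alpha^{\GEN},\beta^{\GEN})=(2,\tfrac{27}{13})$ also satisfies the side condition $\beta\ge\frac{\mu(\alpha-1)}{(1-\mu)^2}$ required by Lemma~\ref{lem.makespan}.
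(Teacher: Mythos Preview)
Your high-level template is right, but the proposal has a genuine gap at exactly the step you flag as ``the main obstacle.'' You defer the intermediate regime to ``the generic recipe / numerically,'' whereas the paper handles it fully analytically by writing down an explicit allocation: in the regime $w'>49$ (with $\bar p\ge 2$) it sets
\[
p=\min\!\left(\left\lfloor \frac{w'+d'}{\sqrt{w'}+d'}+\frac{1}{2}\right\rfloor,\ \bar p\right).
\]
This formula is the missing idea. It interpolates smoothly between the communication extreme ($d'=0$ gives $p\approx\sqrt{w'}$) and the Amdahl-like extreme ($d'\to\infty$ forces $p\to 1$), and it makes both ratios close in elementary form: a direct algebraic expansion shows $\frac{a(p)}{a^{\min}}\le 1+\frac{d'^2+d'\sqrt{w'}+d'+w'}{d'^2+2d'\sqrt{w'}+w'}\le 2$, and, using $t^{\min}>c(\sqrt{w'}+d')$, one gets $\frac{t(p)}{t^{\min}}\le \frac{1}{1-\frac{1}{2\sqrt{w'}}}+1$, which is decreasing in $w'$ and at $w'=49$ equals $\frac{14}{13}+1=\frac{27}{13}$. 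The two remaining regimes are short: $w'\le 4$ (or $\bar p=1$) with $p=1$, and $4<w'\le 49$ with $p=2$ reusing the monotonicity of $f(w',p)$ from the communication proof. So the constant $\frac{27}{13}$ is not the output of a two-parameter optimization but simply the value of $\frac{1}{1-\frac{1}{2\sqrt{w'}}}+1$ at the chosen threshold $\sqrt{w'}=7$.

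A smaller point: your claim that ``every allocation $p\in\{1,2\}$ meets the $\alpha^{\GEN}=2$ budget'' is not quite right; for $p=2$ one has $\frac{a(2)}{a^{\min}}=\frac{w'+2d'+2}{w'+d'}\le 2$ only when $w'\ge 2$, which is why the paper's Case~2 starts at $w'>4$. Also, taking $p$ to be the \emph{largest} integer meeting the area constraint (Algorithm~\ref{alg.allocate1}'s choice) is fine in principle---any $p$ at least as large as the paper's explicit $p$ and still within the area budget has no worse time ratio---but analyzing that implicit $p$ directly, via the root of $p^2+(d'-1)p-(w'+2d')=0$, is exactly the two-parameter computation you could not close.
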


\begin{proof}
If we allow the processor allocation to take non-integer values and assuming unbounded $\bar{p}$, the execution time function $t(p)$ would be minimized at $p^*=\sqrt{w'}$. Thus, the minimum execution time should satisfy $t^{\min}\ge c(2\sqrt{w'}+d'-1)$. Note that this bound will hold true regardless of the value of $\bar{p}$: it is obviously true if $\bar{p} \ge p^*$, otherwise $t^{\min}$ is achieved at $\bar{p}$, with a value also higher than $c(2\sqrt{w'}+d'-1)$.
Furthermore, the minimum area is obtained with one processor, i.e., $a^{\min} = a(1) = c(w'+d')$.

Recall that $p^{\max}$ denotes the number of processors that minimizes the execution time, i.e., $t(p^{\max}) = t^{\min}$. Clearly, we have either \neww{$p^{\max}=\bar{p}$} or $\lfloor \sqrt{w'} \rfloor \leq p^{\max} \leq \lceil \sqrt{w'} \rceil$.

We consider three cases.

\textbf{Case 1}: $w'\leq 4$ or \neww{$\bar{p}=1$}. In this case, it must be that $p^{\max}\leq 2$. We can then set $p = 1$, and get $\frac{a(p)}{a^{\min}}=1$ and $\frac{t(p)}{t^{\min}} \leq 2$.

\textbf{Case 2}: $4<w' \leq 49$ and \neww{$\bar{p}\geq 2$}. In this case, we set $p=2$ and get $\frac{a(p)}{a^{\min}} \leq \frac{w'+2d'+2}{w'+d'} \leq 2$. Similarly to the proof of Lemma \ref{lem.comm} (for the communication model), we can show that $f(w',p)\triangleq \frac{t(p)}{t^{\min}}=\frac{\frac{w'}{p}+d'+p-1}{\frac{w'}{p^{\max}}+d'+p^{\max}-1}$ is increasing with $w'$ if $w'\ge p^2$. Therefore, we can get $\frac{t(p)}{t^{\min}} \leq f(49,2) \leq \frac{\frac{49}{2}+d'+1}{2\sqrt{49}+d'-1} = \frac{51+2d'}{26+2d'} \leq 2$.

\textbf{Case 3}: $w'>49$ \neww{and $\bar{p}\geq 2$}. In this case, we will set $p=\min\left(\left\lfloor \frac{w'+d'}{\sqrt{w'}+d'} +\frac{1}{2}\right\rfloor,\bar{p}\right)$ and get:
\begin{align*}
\frac{a(p)}{a^{\min}} &= \frac{w'+p(d'+p-1)}{w'+d'} \\
&\leq \frac{w'+\left(\frac{w'+d'}{\sqrt{w'}+d'}+\frac{1}{2}\right)\left(d'+\frac{w'+d'}{\sqrt{w'}+d'}-\frac{1}{2}\right)}{w'+d'} \\
&= \frac{w'+\frac{d'}{2}-\frac{1}{4}+\frac{w'+d'}{\sqrt{w'}+d'}\left(d'+\frac{w'+d'}{\sqrt{w'}+d'} \right)}{w'+d'} \\
&\le \neww{\frac{w'+d'+\frac{w'+d'}{\sqrt{w'}+d'}\left(d'+\frac{w'+d'}{\sqrt{w'}+d'} \right)}{w'+d'}} \\
&= 1+\frac{d'(\sqrt{w'}+d')+w'+d'}{(\sqrt{w'}+d')^2} \\
&=1+\frac{d'^2+d'\sqrt{w'}+d'+w'}{d'^2+2d'\sqrt{w'}+w'} \\
&\leq 2 \ .
\end{align*}
The last inequality above comes from $w'>1$ and $d'>0$.

Since $w'>1$, we get $t^{\min}\ge c(2\sqrt{w'}+d'-1)>c(\sqrt{w'}+d')$. To derive the execution time ratio, we further consider two subcases.
\begin{itemize}
\item If $p=\left\lfloor \frac{w'+d'}{\sqrt{w'}+d'} +\frac{1}{2}\right\rfloor$, then $p\geq \frac{w'+d'}{\sqrt{w'}+d'} - \frac{1}{2} \geq \frac{w'-\frac{1}{2}\sqrt{w'}}{\sqrt{w'}+d'}$. We can then get:
\begin{align*}
\frac{t(p)}{t^{\min}} &\leq \frac{\frac{w'}{p}+d'+p-1}{\sqrt{w'}+d'} \\
&\leq \frac{\frac{w'(\sqrt{w'}+d')}{w'-\frac{1}{2}\sqrt{w'}}}{\sqrt{w'}+d'}+\frac{d'+\frac{w'+d'}{\sqrt{w'}+d'}}{\sqrt{w'}+d'} \\
&\leq \frac{1}{1-\frac{1}{2\sqrt{w'}}}+\frac{d'(\sqrt{w'}+d')+w'+d'}{(\sqrt{w'}+d')^2}\\
&\leq \frac{1}{1-\frac{1}{2\sqrt{w'}}}+1
\end{align*}
For the last inequality, we recognize the same term we had when bounding the area ratio, which is at most $1$. Finally, the last expression above \neww{decreases} with $w'$, so using $w'>49$, we get $\frac{t(p)}{t^{\min}} \le \frac{1}{1-\frac{1}{14}}+1=\frac{27}{13}$.
\item If $p = \bar{p} < \left\lfloor \frac{w'+d'}{\sqrt{w'}+d'} +\frac{1}{2}\right\rfloor$, \neww{and since $\bar{p}$ is an integer, then it is necessarily the case that $\bar{p} \le \left\lfloor \frac{w'+d'}{\sqrt{w'}+d'} +\frac{1}{2}\right\rfloor - 1 \le \frac{w'+d'}{\sqrt{w'}+d'} \leq \sqrt{w'}$ (because $w'>1$). Therefore, we should also have $p^{\max} = \bar{p} = p$, and thus $\frac{t(p)}{t^{\min}}=1$.} \qedhere
\end{itemize}
\end{proof}

\begin{theorem}\label{thm.general}
Algorithm \ref{alg.online} is \new{$\frac{27}{33-\sqrt{738}}<4.63$}-competitive for any graph of tasks that follow the general speedup model given in
Equation~\eqref{eq.exec_time}. This is achieved with $\mu^{\GEN}= \frac{33-\sqrt{738}}{27} \approx 0.216$.
\end{theorem}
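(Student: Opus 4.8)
The plan is to apply Lemma~\ref{lem.makespan} directly with the pair $(\alpha^{\GEN}, \beta^{\GEN}) = (2, \tfrac{27}{13})$ provided by Lemma~\ref{lem.general}. That lemma already guarantees that, for every task following the general speedup model, there is an initial processor allocation satisfying $g_j(p_j) \le 2$ and $f_j(p_j) \le \tfrac{27}{13}$; since Algorithm~\ref{alg.allocate1} is designed to pick, subject to the $\alpha$ bound, an allocation minimizing $f_j$, the allocation it chooses is at least as good and hence also satisfies both bounds. So the hypotheses~\eqref{eq.arebound}--\eqref{eq.timebound} of Lemma~\ref{lem.makespan} hold with these constants.

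First I would compute $\mu = \mu^{\GEN}$ from the formula $\mu = \frac{\alpha+\beta+1-\sqrt{(\alpha+\beta+1)^2-4\beta}}{2\beta}$. Plugging in $\alpha = 2$ and $\beta = \tfrac{27}{13}$: the sum $\alpha+\beta+1 = 3 + \tfrac{27}{13} = \tfrac{66}{13}$, so $(\alpha+\beta+1)^2 = \tfrac{4356}{169}$ and $4\beta = \tfrac{108}{13} = \tfrac{1404}{169}$, giving a discriminant $\tfrac{4356-1404}{169} = \tfrac{2952}{169}$. Since $2952 = 4 \cdot 738$, the square root is $\tfrac{2\sqrt{738}}{13}$, and therefore
\begin{align*}
\mu^{\GEN} = \frac{\tfrac{66}{13} - \tfrac{2\sqrt{738}}{13}}{2 \cdot \tfrac{27}{13}} = \frac{66 - 2\sqrt{738}}{54} = \frac{33 - \sqrt{738}}{27} \approx 0.216 \ .
\end{align*}
Then the competitive ratio is $\tfrac{1}{\mu} = \tfrac{27}{33-\sqrt{738}}$, and a quick numerical check ($\sqrt{738}\approx 27.166$, so $33-\sqrt{738}\approx 5.834$) gives $\tfrac{1}{\mu}\approx 4.628 < 4.63$.

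Second, I would verify the side condition $\beta \ge \frac{\mu(\alpha-1)}{(1-\mu)^2}$ required by Lemma~\ref{lem.makespan}. With $\alpha-1 = 1$, $\mu \approx 0.216$, and $1-\mu \approx 0.784$, the right-hand side is roughly $0.216/0.6147 \approx 0.351$, which is comfortably below $\beta = 27/13 \approx 2.08$; I would either state this numerically or, if a clean algebraic argument is wanted, note that since $\alpha = 2$ the condition reads $\beta(1-\mu)^2 \ge \mu$, and $\beta(1-\mu)^2 > (1-\mu)^2 \ge (1/2)^2 > \mu$ since $\mu < 1/4$ (which follows from $\mu \le 0.5$ and the explicit value, or from $33 - \sqrt{738} < 27/4$, i.e. $\sqrt{738} > 105/4$, i.e. $738 > 11025/16 \approx 689$). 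With both the construction of the $(\alpha,\beta)$ pair (Lemma~\ref{lem.general}) and the side condition in hand, Lemma~\ref{lem.makespan} yields $T/T^{\opt} \le 1/\mu^{\GEN} = \tfrac{27}{33-\sqrt{738}} < 4.63$, completing the proof.

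There is essentially no obstacle here: the theorem is a direct corollary of Lemmas~\ref{lem.general} and~\ref{lem.makespan}, and the only real work is the arithmetic simplification of the radical $\sqrt{(\alpha+\beta+1)^2-4\beta}$ into the form $\tfrac{2\sqrt{738}}{13}$ and the verification of the inequality constraint. The genuinely hard part of this section — showing that $(\alpha,\beta)=(2,\tfrac{27}{13})$ is actually achievable for every task in the general model, via the three-case analysis on $w'$ — has already been discharged in Lemma~\ref{lem.general}, so the theorem's proof is just bookkeeping.
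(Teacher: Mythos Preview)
Your proposal is correct and follows essentially the same approach as the paper: plug $(\alpha,\beta)=(2,\tfrac{27}{13})$ from Lemma~\ref{lem.general} into Lemma~\ref{lem.makespan}, simplify to get $\mu^{\GEN}=\tfrac{33-\sqrt{738}}{27}$, and check the side condition $\beta \ge \tfrac{\mu(\alpha-1)}{(1-\mu)^2}\approx 0.35$. Your write-up is in fact more detailed than the paper's, which dispatches the arithmetic in a single line.
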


\begin{proof}
\new{With $\alpha=2$ and $\beta=\frac{27}{13}$, we get $\mu=\frac{\alpha+\beta+1-\sqrt{(\alpha+\beta+1)^2-4\beta}}{2\beta}=\frac{33-\sqrt{738}}{27}$ and the constraint $\beta \ge \frac{\mu(\alpha-1)}{(1-\mu)^2} \approx 0.35$ is satisfied. Thus, the competitive ratio is given by $\frac{1}{\mu}= \frac{27}{33-\sqrt{738}} < 4.63$.}
\end{proof}

Finally, Table \ref{tab.all-values} summarizes the parameters and competitive ratios derived for all the considered speedup models.

\renewcommand{\arraystretch}{1.7}
\begin{table*}[h]
\centering
\caption{Summary of parameters and competitive ratios for different speedup models.}
\label{tab.all-values}
\begin{tabular}{| c | c | c | c | c | c |}
\hline
\textbf{Model $M$} & \textbf{$\mu^M$} & \textbf{$\alpha^M$} & \textbf{$\beta^M$} & \textbf{Comp. Ratio} \\
\hline
Roofline (\ROO) & $\frac{3-\sqrt{5}}{2} \approx 0.382$ & $1$ & $1$ & $\frac{2}{3-\sqrt{5}} \approx 2.62$\\
\hline
Comm. (\COM) & $\frac{23-\sqrt{313}}{18} \approx 0.295$ & $\frac{4}{3}$ & $\frac{3}{2}$ &$\frac{18}{23-\sqrt{313}} \approx 3.39$ \\
\hline
Amdahl (\AMD) & $\frac{1-\sqrt{8\sqrt{2}-11}}{2} \approx 0.22$  & $\frac{\sqrt{2}+1+\sqrt{2\sqrt{2}-1}}{2} \approx 1.88$ & $\frac{1+\sqrt{4\sqrt{2}+5}}{2} \approx 2.13$ & $\frac{2}{1-\sqrt{8\sqrt{2}-11}} \approx 4.55$ \\
\hline
General (\GEN) & $\frac{33-\sqrt{738}}{27} \approx 0.216$ & $2$ & $\frac{27}{13}$ & $\frac{27}{33-\sqrt{738}} \approx 4.63$\\
\hline
\end{tabular}
\end{table*}

\section{Lower Bounds for Online List Scheduling Algorithms with Deterministic Local Decisions}
\label{sec.low}

In Section~\ref{sec.comp-ratio}, we derived the competitive ratios of our online algorithm under several common speedup models. In this section, we will show corresponding lower bounds on the competitive ratios. We point out that, in contrast to the lower bounds proven in our preliminary work \cite{Benoit22_online}, which apply only to the presented algorithm, the lower bounds shown in this section are stronger, as they apply to \emph{any} online list scheduling algorithm with deterministic local decisions for processor allocation.

\begin{definition}
\label{localDet}
An online algorithm is said to make \textbf{deterministic local decisions} if it allocates processors by considering only the total number of processors (i.e., $P$) and the parameters of a task's speedup function (i.e., $w$, $\bar{p}$, $d$, $c$). Thus, two identical tasks will receive exactly the same allocation regardless of their relative positions in the task graph as well as the graph structure.
\end{definition}

Ultimately, we will show that our algorithm has the optimal competitive ratios over all algorithms in this class for the roofline, communication, and Amdahl's models. The result also indicates that our algorithms's competitive ratio for the general model is close to optimal using the lower bound of the Amdahl's model.

\subsection{Analysis Overview}
Under any model $M$, we have shown in Section \ref{sec.alg} that, for any task, our online algorithm achieves:
\begin{align}\label{eq.lower1}
\frac{a}{a^{\min}} \leq \alpha^M \text{~~and~~} \frac{t}{t^{\min}} \leq \beta^M \ ,
\end{align}
where $\frac{\alpha^M}{1-\mu^M}+\beta^M = \frac{1}{\mu^M}$. In particular, for any possible instance consisting of a set $\mathcal{T}$ of tasks, if our algorithm achieves a makespan of $T$ and the optimal makespan is $T^{\opt}$, then we have shown that:
\begin{align}\label{eq.lower2}
\frac{T}{T^{\opt}} \leq \max_{j \in \mathcal{T}} \frac{a_j}{a_j^{\min}(1-\mu^M)}+\max_{j \in \mathcal{T}}\frac{t_j}{t_j^{\min}} \le \frac{\alpha^M}{1-\mu^M}+\beta^M = \frac{1}{\mu^M} \ .
\end{align}

\subsubsection*{Two-step Approach}
To prove the lower bounds, we will proceed in two steps corresponding to proving the tightness of the above two inequalities (\ref{eq.lower1}) and (\ref{eq.lower2}), respectively. More precisely, we will fix a model $M$ and suppose an online list scheduling algorithm $\mathcal{A}$ respecting Definition~\ref{localDet} and having a competitive ratio strictly less than $\frac{1}{\mu^M}$ exists. Specifically, we will assume that $\mathcal{A}$'s competitive ratio is $\frac{1}{\mu^M}-4\epsilon$ for some $0< \epsilon < 1$. In the first step (Section~\ref{sec:local}), we will show the existence of two tasks $A$ and $B$ as well as another algorithm $\mathcal{A}^*$ such that:
\begin{align}
\frac{a_B}{a_B^*(1-\mu^M)}+\frac{t_A}{t_A^*} \geq \frac{1}{\mu^M}-\epsilon \ ,
\end{align}
thus showing the tightness of our local analysis. In the second step (Section~\ref{sec:global}), we will build an instance using these two tasks (as well as two other tasks $C$ and $D$) such that, by using list scheduling, algorithm $\mathcal{A}$ has no choice but to achieve a makespan that satisfies:
\begin{align}
\frac{T}{T^*} \geq \frac{a_B}{a_B^*(1-\mu^M)}+\frac{t_A}{t_A^*}-2\epsilon \geq \frac{1}{\mu^M}-3\epsilon \ ,
\end{align}
where $T$ and $T^*$ denote the makespans of $\mathcal{A}$ and $\mathcal{A}^*$ for this instance, respectively. This contradicts the assumed competitive ratio, thus showing the tightness of our global analysis and hence the non-existence of algorithm $\mathcal{A}$.

\subsubsection*{Notations}
We will use four different tasks $A, B, C, D$ to construct the lower bound instances. For algorithm $\mathcal{A}$, we let $p_A$ (resp. $p_B, p_C, p_D$) denote its processor allocation for task $A$ (resp. $B$, $C$, $D$), let $t_A$ (resp. $t_B, t_C, t_D$) denote the resulting execution time of the tasks, and let $a_A=t_A p_A$ (resp. $a_B, a_C, a_D$) denote the resulting area of the tasks. Similarly, we use $p^*_A, p^*_B, p^*_C, p^*_D, t^*_A, t^*_B, t^*_C, t^*_D, a^*_A, a^*_B, a^*_C, a^*_D$ to denote the corresponding values for algorithm $\mathcal{A^*}$.

\subsubsection*{Constraints}
The lower bound instances need to respect a set of constraints (or rules) for the tasks and for the task graph, which are required to show the global results. This will allow us to prove the lower bound regardless of the model, as long as these constraints are satisfied. For convenience, Table \ref{Constraints} lists and labels all the required constraints ($R$'s) along with some definitions ($F$'s). We will refer to them according to their labels, and use ($R:\checkmark$) to denote that a constraint $R$ is satisfied in the subsequent analysis.

\begin{table}[t]
\center
\caption{List of constraints ($R$'s) and definitions ($F$'s) for constructing lower bound instances.}
\label{Constraints}
\begin{tabular}{|p{1.2in}|l|}
\hline
\textbf{For tasks} & \textbf{For task graph} \\
\hline
 $p_A^* \leq P^{3/4}$ \hfill ($R_1$) & $P\geq \left(\frac{120900}{\epsilon}\right)^4$ \hfill ($F_{1})$ \\
\hline
   $0.1 \leq t_B^* \leq 100 $\hfill ($R_2$) & $X=\left\lceil\frac{P-p_C+1}{p_B} \right\rceil$ \quad \hfill ($F_{2}$) \\
\hline
  $p_B \leq P^{3/4}$ \hfill ($R_3$)  & $K=\left\lceil \frac{5t_A^*}{\epsilon X t_B^*} \right\rceil$ \quad \hfill ($F_{3}$) \\
\hline
 $t_D \leq t_B$ \hfill ($R_4$) & $Y=\left\lfloor \frac{XKt^*_B}{t^*_A} \right\rfloor$ \hfill ($F_{4}$) \\
\hline
 $t^*_D \leq \frac{\epsilon}{121P^2}$ \hfill ($R_{5}$)  & $Z=K(P-p^*_A)$ \quad  \hfill ($F_{5}$) \\
\hline
 $p_D \leq 4$ \hfill ($R_{6}$) &  \\
\hline
 $t^*_A \leq 24t^*_B$ \hfill ($R_7$) & $1 \leq X \leq P$ \hfill $(R_{13})$ \\
\hline
 $t^*_B=a^*_B=t_B(1)$ \hfill ($R_8$)  & $XKt^*_B \left(1-\frac{\epsilon}{5}\right) \leq Yt^*_A \leq XKt^*_B$ \quad \hfill ($R_{14}$) \\
\hline
  $t_A \leq 5t^*_A$ \hfill ($R_9$) & $K(P-P^{3/4}) \leq Z \leq \frac{121P}{\epsilon}$ \hfill  ($R_{15}$) \\
\hline
  $a_B \leq 5a_B^*$ \hfill ($R_{10}$)  &  \\
\hline
  $t^*_C \leq \frac{\epsilon}{121 P^2}$ \hfill ($R_{11}$) &  \\
\hline
  $p_C \geq \mu^M P$ \hfill ($R_{12}$) &  \\
\hline
\end{tabular}
\end{table}

\subsection{Step 1: Local Analysis}
\label{sec:local}
In this section, we will show that, for a given model $M$ and any online list scheduling algorithm $\mathcal{A}$ respecting Definition~\ref{localDet}, there exist tasks $A$ and $B$ as well as another algorithm $\mathcal{A^*}$ such that $\frac{a_B}{a^*_B (1-\mu^M)}+\frac{t_A}{t_A^*} \geq \frac{1}{\mu^M}-\epsilon$. We start with the following theorem and will prove it separately for each considered model.

\begin{theorem}
\label{th.mainLocal}
Given a model $M\in \{\ROO, \COM, \AMD\}$, let $\mathcal{A}$ be an online list scheduling algorithm respecting Definition~\ref{localDet} with a competitive ratio of $\frac{1}{\mu^M}-4\epsilon$ for some $0<\epsilon<1$, and let $P \geq \left(\frac{120900}{\epsilon}\right)^4$. Then, there exist four tasks $A$, $B$, $C$ and $D$ satisfying the constraints on tasks in Table~\ref{Constraints} ($R_1$ to $R_{12}$) and another algorithm $\mathcal{A}^*$ such that:
\begin{equation}
\label{eq.mainlocal}
\frac{a_B}{a_B^*(1-\mu^M)}+\frac{t_A}{t^*_A} \geq \frac{1}{\mu^M}-\epsilon \ .
\end{equation}
\end{theorem}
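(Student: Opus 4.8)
The plan is to construct, for each model $M \in \{\ROO, \COM, \AMD\}$, two specific tasks $A$ and $B$ that jointly ``defeat'' the algorithm $\mathcal{A}$ on its local allocation decisions, and to package the allocations of $\mathcal{A}$ on these tasks into a reference algorithm $\mathcal{A}^*$. The intuition is that Inequality~\eqref{eq.lower2} is the only slack in our upper-bound analysis, so if $\mathcal{A}$ beats $\frac{1}{\mu^M}$ it must, on \emph{every} task, keep both $\frac{a}{a^{\min}}$ strictly below $\alpha^M$ and $\frac{t}{t^{\min}}$ strictly below $\beta^M$ --- but the pair $(\alpha^M,\beta^M)$ was chosen (Lemmas~\ref{lem.roofline}, \ref{lem.comm}, \ref{lem.amdahl}) precisely at the boundary of what is simultaneously achievable. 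So I would first set up a ``menu'' of candidate tasks parametrized by their speedup parameters (for \AMD, a family with varying $w/d$; for \COM, varying $w/c$; for \ROO, varying $\bar p$), observe that $\mathcal{A}$'s deterministic local rule assigns each a fixed allocation, and argue that no single allocation rule can keep both ratios strictly below $(\alpha^M,\beta^M)$ across the whole menu. Task $A$ is then picked as a menu item where $\mathcal{A}$ is forced to have a large time ratio $\frac{t_A}{t_A^{\min}}$ (close to $\beta^M$), and task $B$ one where $\mathcal{A}$ is forced to have a large area ratio $\frac{a_B}{a_B^{\min}}$ (close to $\alpha^M$); setting $\mathcal{A}^*$ to use the area-minimizing allocation for $B$ and the time-minimizing allocation for $A$ gives $a_B^* = a_B^{\min}$, $t_A^* = t_A^{\min}$ (up to the $\epsilon$ cushions), so that $\frac{a_B}{a_B^*(1-\mu^M)} + \frac{t_A}{t_A^*}$ is pushed up to roughly $\frac{\alpha^M}{1-\mu^M} + \beta^M = \frac{1}{\mu^M}$.

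Concretely, the steps I would carry out are: (i) unwind what a competitive ratio of $\frac{1}{\mu^M} - 4\epsilon$ forces --- running $\mathcal{A}$ on a single-task instance shows $\frac{a}{a^{\min}}, \frac{t}{t^{\min}}$ cannot both be at their extremes, and running it on a two-task chain or an independent pair sharpens this to the statement that $\frac{a}{a^{\min}(1-\mu^M)} + \frac{t}{t^{\min}}$ exceeds $\frac{1}{\mu^M} - O(\epsilon)$ on \emph{some} task; (ii) exhibit the explicit menu for the model at hand and compute, using the exact forms of $t(p)$ and $a(p)$, the trade-off curve between the achievable $(\frac{a}{a^{\min}}, \frac{t}{t^{\min}})$ pairs, identifying the menu items realizing the extremes; (iii) name these items $A$ and $B$, verify they can be chosen to satisfy the task constraints $R_1$–$R_{12}$ in Table~\ref{Constraints} (bounded $t_B^*$, small $p_A^*$ and $p_B$ relative to $P^{3/4}$, etc.) --- this is where the largeness of $P$ from $F_1$ is used, since we need room to make the parameters ``generic'' while keeping allocations sublinear in $P$; (iv) define $\mathcal{A}^*$ and the auxiliary tasks $C$ (a near-instantaneous task forced onto $\ge \mu^M P$ processors, $R_{11}$–$R_{12}$) and $D$ (a near-instantaneous task with tiny area, $R_4$–$R_6$), which are only placeholders here but must already satisfy their constraints so that Step~2 can use them; (v) assemble Inequality~\eqref{eq.mainlocal} from the estimates in (ii), absorbing all rounding and integrality losses into the single $\epsilon$.

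The presentation will necessarily be model-by-model, since the trade-off curve in step (ii) has a different closed form for each of \ROO, \COM, \AMD --- and indeed the \ROO\ case is nearly trivial ($\alpha^{\ROO}=\beta^{\ROO}=1$, so there is essentially nothing to trade off and $A, B$ can be taken almost arbitrarily), while \COM\ and \AMD\ require genuinely computing where the convex execution-time/area functions force a gap.

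I expect the main obstacle to be step (ii) for the Amdahl model: one must show that \emph{no} deterministic allocation rule $p(w,d,P)$ can keep $\frac{a}{a^{\min}} \le \alpha^{\AMD} - \delta$ and $\frac{t}{t^{\min}} \le \beta^{\AMD} - \delta$ simultaneously over all ratios $w/d$ for a uniform $\delta$ depending only on $\epsilon$. This amounts to proving that the specific algebraic choice $\alpha^{\AMD} = \frac{\sqrt2+1+\sqrt{2\sqrt2-1}}{2}$, $\beta^{\AMD}=\frac{\alpha^{\AMD}}{\alpha^{\AMD}-1}$ from Lemma~\ref{lem.amdahl} is exactly the Pareto corner of the achievable region --- the ``hard'' direction of an optimality statement, as opposed to the ``easy'' achievability direction already done in Section~\ref{sec.comp-ratio}. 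The delicate point is that the adversary only gets to pick from a discrete menu of tasks and $\mathcal{A}$ may exploit integrality of $p$, so I will need to take $w/d$ (resp.\ $w/c$) large enough --- again leaning on $F_1$ --- that integrality effects are $O(\epsilon)$ and the continuous trade-off analysis applies, then pick the two menu points straddling the corner.
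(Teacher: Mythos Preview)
Your high-level plan is right in spirit, and the structure (model-by-model, with $C,D$ as auxiliaries) matches the paper. But there is a real gap in how you propose to find $A$ and $B$. You say you will pick $A$ so that $\mathcal{A}$'s time ratio $t_A/t_A^{\min}$ is close to $\beta^M$, and $B$ so that $\mathcal{A}$'s area ratio $a_B/a_B^{\min}$ is close to $\alpha^M$, and then sum. This is not guaranteed to work: a given deterministic rule $\mathcal{A}$ may never produce a time ratio near $\beta^M$ on \emph{any} task (e.g.\ in the communication model, the rule $p(\bar w)=\lceil\sqrt{\bar w/2}\rceil$ keeps $t/t^{\min}$ well below $3/2$ everywhere, at the cost of area ratio exceeding $4/3$). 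What is true is only that the \emph{sum} $\frac{a_B}{a_B^*(1-\mu^M)}+\frac{t_A}{t_A^*}$ can be forced up to $\frac{1}{\mu^M}$, with the two terms trading off against each other depending on $\mathcal{A}$.

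The mechanism the paper uses, which you are missing, is a \emph{threshold} argument exploiting determinism directly. For the communication model, let $\mathcal{U}=\{x:\mathcal{A}$ allocates one processor to the task $t(p)=x/p+p-1\}$; show via single-task and $P$-copies instances that $[0,0.1]\subseteq\mathcal{U}\subseteq[0,64]$, set $\bar w$ at (essentially) $\sup\mathcal{U}$, and then take $A$ with parameter $\bar w$ (so $p_A=1$) and $B$ with parameter $\bar w+1/P$ (so $p_B\ge 2$). Thus $A$ and $B$ are \emph{near-identical} tasks straddling $\mathcal{A}$'s own jump point, not two tasks chosen independently near some fixed Pareto corner. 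The inequality \eqref{eq.mainlocal} then reduces to a one-variable minimization over the unknown threshold $\bar w\in[0.1,64]$, namely showing $\frac{\bar w+2}{\bar w(1-\mu^{\COM})}+\frac{\bar w}{\bar w/p_A^*+p_A^*-1}\ge\frac{1}{\mu^{\COM}}-\epsilon$ for a suitable choice of $p_A^*\in\{2,3\}$ (casework on $\bar w\lessgtr 6$). The Amdahl case is analogous with the threshold at $\sqrt P$ processors and $p_A^*=\lfloor P^{3/4}\rfloor$. This reframes what you call the ``hard direction of Pareto optimality'' as a concrete calculus check, and it also explains why $\mathcal{A}^*$ does \emph{not} use the time-minimizing allocation for $A$ (which would violate $R_1$ in the Amdahl case), but rather a tailored $p_A^*$ small enough for the constraints yet large enough to witness the bound.
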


\begin{proof}
First, we set $t_C(p)=\frac{\epsilon}{121 P^2 \cdot p}$. Indeed, this execution time function belongs to all speedup models\footnote{For all models, we have $w = \frac{\epsilon}{121P^2}$. Additionally, for the roofline model, $\bar{p} = \infty$; for the communication model, $c=0$; and for the Amdhal's model, $d=0$.}, and clearly, we have ($R_{11}:\checkmark$), i.e., constraint $(R_{11})$ is satisfied. Further, if we had $p_C<\mu^M P$, then we would have $\frac{t_C}{t^*_C}>\frac{\frac{\epsilon}{121 P^2 \cdot \mu^{M}P}}{\frac{\epsilon}{121P^2\cdot P}}=\frac{1}{\mu^M}$, which contradicts the competitive ratio of $\mathcal{A}$ on an instance consisting of only one task $C$ ($R_{12}:\checkmark$).
Similarly, for task $A$, we must have $t_A \leq 5t_A^*$ to respect the competitive ratio of $\mathcal{A}$ on an instance consisting of a single such task, as $\frac{1}{\mu^M}<5$ for all models ($R_{9}:\checkmark$). For task $B$, we will set $p^*_B=1$ for all models ($R_8 \checkmark$).
Also, if we had $a_B>5a^*_B$, then an instance consisting of $P$ independent such tasks would result in a makespan at least $\frac{P a_B}{P}=a_B$ for $\mathcal{A}$, since $P a_B$ is the total area to be completed on $P$ processors, while $\mathcal{A^*}$ can execute all tasks simultaneously in parallel with a resulting makespan of $a_B^*$, which also contradicts the competitive ratio of $\mathcal{A}$ ($R_{10}:\checkmark$).

The following three lemmas will conclude the proof of the theorem by considering each of the three models separately. Note that we only need to define tasks $A$, $B$ and $D$,  and verify the constraints ($R_1$) to ($R_7$).
\end{proof}

\begin{lemma}
Theorem~\ref{th.mainLocal} is true for the roofline model.
\end{lemma}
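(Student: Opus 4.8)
The plan is to instantiate the four tasks $A$, $B$, $C$, $D$ for the roofline model and then exhibit an algorithm $\mathcal{A}^*$ that, together with the forced behavior of $\mathcal{A}$, makes Inequality~\eqref{eq.mainlocal} hold. Since tasks $C$ and $D$ are already fixed in the proof of Theorem~\ref{th.mainLocal} (with $t_C(p) = \frac{\epsilon}{121P^2 p}$, and $D$ to be a tiny roofline task with $\bar{p}_D$ small so that $(R_4)$, $(R_5)$, $(R_6)$ are trivially met, e.g.\ $w_D$ on the order of $\frac{\epsilon}{121P^2}$ and $\bar{p}_D \le 4$), the crux is choosing $A$ and $B$. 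For the roofline model, Lemma~\ref{lem.roofline} tells us the ``good'' allocation reaches $\alpha^{\ROO} = \beta^{\ROO} = 1$, and $\mu^{\ROO} = \frac{3-\sqrt5}{2}$, so the target value $\frac{1}{\mu^{\ROO}} = \frac{2}{3-\sqrt5} = \frac{3+\sqrt5}{2} \approx 2.618$. The key algebraic identity to exploit is $\frac{\alpha^{\ROO}}{1-\mu^{\ROO}} + \beta^{\ROO} = \frac{1}{\mu^{\ROO}}$, i.e.\ $\frac{1}{1-\mu^{\ROO}} + 1 = \frac{1}{\mu^{\ROO}}$; I will split the target into these two summands and make task $B$ responsible for the $\frac{1}{1-\mu^{\ROO}}$ term (via the area ratio) and task $A$ responsible for the $1$ term (via the time ratio, which for roofline is trivially achievable as $1$).

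First I would construct task $B$: it is a roofline task with maximum parallelism $\bar{p}_B = \lceil (1-\mu^{\ROO})P\rceil$ (roughly $0.618P$) and total work $w_B$ chosen so that $t_B^* = a_B^* = t_B(1) = w_B$ falls in $[0.1, 100]$, say $w_B = 1$, so $(R_2)$ and $(R_8)$ hold. I would let $\mathcal{A}^*$ allocate $p_B^* = 1$ processor to $B$; then $a_B^* = w_B$. Now observe what $\mathcal{A}$ must do: because $\mathcal{A}$ has competitive ratio $< \frac{1}{\mu^{\ROO}}$, on an instance of many independent copies of $B$ it cannot afford to under-allocate too severely, but on a single copy of $B$ followed by a chain it also cannot over-allocate. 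The point of the $\lceil(1-\mu^{\ROO})P\rceil$-parallelism barrier is that if $\mathcal{A}$ allocates $p_B \ge \lceil(1-\mu^{\ROO})P\rceil = \bar{p}_B$ processors, then $a_B = p_B \cdot \frac{w_B}{\bar p_B} \ge w_B \cdot \frac{\bar p_B}{\bar p_B}\cdot\frac{p_B}{\bar p_B}$... more precisely $a_B/a_B^* = p_B/\min(p_B,\bar p_B)$ times... I need $a_B/a_B^*$ to be forced to be at least roughly $\frac{1}{1-\mu^{\ROO}}$; this happens if $\mathcal{A}$ allocates $p_B$ close to $P$. Whether $\mathcal{A}$ does that is not forced by a single-task instance, so instead I expect the argument to be a dichotomy: either $\mathcal{A}$ allocates $B$ few processors (then $t_B$ is large, and I'd use $B$ — not $A$ — in a chain to blow up the makespan), or $\mathcal{A}$ allocates $B$ many processors (then $a_B$ is large). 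I would pick whichever branch makes $\frac{a_B}{a_B^*(1-\mu^{\ROO})}$ large; in the other branch task $A$ is the relevant one. So task $A$ should be a roofline task with small $\bar{p}_A$ (respecting $(R_1)$: $p_A^* \le P^{3/4}$, so take $\bar p_A = \lfloor P^{1/2}\rfloor$ or so), and $w_A$ chosen so $t_A^* \le 24 t_B^*$ holds $(R_7)$; $\mathcal{A}^*$ gives $A$ its full $\bar p_A$ processors so $t_A^* = t_A^{\min}$, and the dichotomy forces either $p_A$ small (so $t_A/t_A^* \ge$ something like $\frac{1}{\mu^{\ROO}}-1 = \frac{1}{1-\mu^{\ROO}}$... no, $t_A/t_A^* \le \bar p_A/p_A$ which is at most $\bar p_A \le P^{1/2}$, but the competitive-ratio bound on single tasks forces $t_A \le 5 t_A^*$) — hmm.

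Let me restate the structure more carefully. The honest plan: I would set up $A$ and $B$ so that the allocation $\mathcal{A}$ is forced into is \emph{exactly} the one where both the $\alpha$ and $\beta$ bounds of the analysis are tight simultaneously — that is, $\mathcal{A}$ faces a single task ($A$ or $B$) whose speedup curve is ``flat then cliff,'' so that any allocation either wastes area or wastes time, and the best trade-off sits at the analysis's worst case. Concretely, for roofline, a task with linear speedup up to $\bar p$ and then nothing: allocating $p \le \bar p$ gives area exactly $w$ (no waste) but time $w/p$; allocating $p > \bar p$ gives time $w/\bar p = t^{\min}$ but area $pw/\bar p > w$. So the ratios are $a/a^{\min} = \max(1, p/\bar p)$ and $t/t^{\min} = \max(1, \bar p/p)$. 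If I want $\mathcal{A}$ to be unable to do well, I need a chain/parallel-copies instance where list scheduling forces a particular $p$, and I measure both ratios there. I would build: a ``width gadget'' of $P$ parallel copies of $B$ (forcing $\mathcal{A}$ to keep $p_B$ small to fit them, else area explodes — but actually $\mathcal{A}$ chooses $p_B$ once for all copies by Definition~\ref{localDet}), chained before a ``depth gadget'' of many copies of $A$ in series (forcing $t_A$ to matter). By balancing the widths and lengths (the quantities $X, K, Y, Z$ in Table~\ref{Constraints} are exactly these balancing counts), the makespan of $\mathcal{A}$ is $\approx X\cdot(\text{something with }a_B) + Y\cdot(\text{something with }t_A)$ while $\mathcal{A}^*$'s is $\approx$ the sum of the optimal pieces, giving the ratio $\frac{a_B}{a_B^*(1-\mu^{\ROO})} + \frac{t_A}{t_A^*}$. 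For the \emph{local} step (this lemma), though, I only need the existence of $A$, $B$, $\mathcal{A}^*$ with the stated inequality, not the full instance; so the real content is: choose $\bar p_B = \lceil(1-\mu^{\ROO})P\rceil$, $\bar p_A \approx \sqrt P$, $w_A, w_B = \Theta(1)$, let $\mathcal{A}^*$ allocate $p_B^* = \bar p_B$ (so $a_B^* = w_B \cdot \bar p_B / \bar p_B$... wait $a_B^* = p_B^* t_B^* $) — I need to get $t_B^* = a_B^*$ from $(R_8)$, which forces $p_B^* = 1$. OK so $\mathcal{A}^*$ allocates $1$ to $B$ and $\bar p_A$ to $A$.

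The main obstacle, which I'd flag explicitly: showing that $\mathcal{A}$ is \emph{forced} by its competitive-ratio assumption $\frac{1}{\mu^{\ROO}} - 4\epsilon$ to allocate $A$ and $B$ in a way that makes $\frac{a_B}{a_B^*(1-\mu^{\ROO})} + \frac{t_A}{t^*_A} \ge \frac{1}{\mu^{\ROO}} - \epsilon$. A single-task instance only forces $\frac{t_A}{t_A^*} \le \frac{1}{\mu^{\ROO}} - 4\epsilon < 5$ and $\frac{a_B}{a_B^*} \le 5$ (the easy constraints $(R_9), (R_{10})$ established in the theorem's proof); it does \emph{not} by itself force the sum to be large. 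So the argument must be: suppose for contradiction the sum is $< \frac{1}{\mu^{\ROO}} - \epsilon$; then in particular $\mathcal{A}$ allocated $B$ ``reasonably few'' processors \emph{and} $A$ ``reasonably many'' — but for the roofline cliff-task $B$, allocating $p_B$ processors with $p_B \le \bar p_B$ means $t_B = w_B/p_B \ge w_B/\bar p_B$, which is not small unless $p_B$ is near $\bar p_B \approx (1-\mu^{\ROO})P$, i.e.\ $p_B$ is actually large; combined with $p_A$ large, $\mathcal{A}$ uses $\ge p_A + p_B$ processors and I can engineer $p_A + p_B > P$ as the contradiction (you can't run $A$ and $B$ together, so on a chain instance the makespan is additive and too big). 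Pinning down these inequalities — in particular getting the $-\epsilon$ slack to absorb all the rounding from $\lceil\cdot\rceil$'s and from $\bar p_A \approx \sqrt P$ versus integrality, using the hypothesis $P \ge (120900/\epsilon)^4$ — is the delicate bookkeeping; but conceptually it is the ``cliff task forces an area/time trade-off on the $(1-\mu)P$ boundary'' phenomenon, and I expect the roofline case to be the cleanest of the three since $\alpha = \beta = 1$ and there are no nontrivial area-vs-time curves to optimize, only the two-branch cliff.
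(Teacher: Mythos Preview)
Your proposal overcomplicates what is, in the paper, a two-line argument, and in doing so it introduces a genuine error. The paper simply takes $A$, $B$, $D$ to be \emph{sequential} roofline tasks ($\bar{p}_A=\bar{p}_B=\bar{p}_D=1$), with $t_A(p)=t_B(p)=1$ and $t_D(p)=\frac{\epsilon}{121P^2}$, and sets $p_A^*=p_B^*=1$. Then $t_A/t_A^*=1$ and $a_B/a_B^*\ge 1$ hold \emph{automatically}, regardless of what $\mathcal{A}$ does, and the identity $\frac{1}{1-\mu^{\ROO}}+1=\frac{1}{\mu^{\ROO}}$ gives~\eqref{eq.mainlocal} with no $\epsilon$ slack. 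The constraint $(R_3)$ follows because if $\mathcal{A}$ allocated $p_B\ge 3$ to a task with $\bar{p}_B=1$, then on $P$ independent copies its makespan would be at least $3$ times optimal, contradicting the assumed competitive ratio. You correctly spotted the key identity but then went looking for a way to \emph{force} $\mathcal{A}$ into a strictly bad allocation; for roofline this is unnecessary, since the trivial lower bounds $a_B\ge a_B^{\min}$ and $t_A\ge t_A^{\min}$ already suffice.

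Your concrete choice $\bar{p}_B=\lceil(1-\mu^{\ROO})P\rceil\approx 0.618P$ in fact breaks constraint $(R_3)$: on a single-copy instance of $B$, the competitive-ratio assumption forces $\frac{t_B}{t_B^{\min}}<\frac{1}{\mu^{\ROO}}$, i.e.\ $\min(p_B,\bar{p}_B)>\mu^{\ROO}\bar{p}_B\approx 0.236P$, so $p_B\gg P^{3/4}$. (The area argument does not save you here, since for $p_B\le\bar{p}_B$ the roofline area is exactly $w_B$, so $a_B/a_B^*=1$ and no over-allocation of area occurs.) The dichotomy and ``cliff'' machinery you sketch is the right intuition for the communication and Amdahl cases, where there \emph{is} a genuine area/time trade-off that $\mathcal{A}$ cannot escape; but for roofline the model is degenerate enough that sequential tasks already saturate both ratios at~$1$.
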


\begin{proof}
For the roofline model, we will only use sequential tasks with $\bar{p}=1$, and set $t_A(p)=t_B(p)=1$ and $t_D(p)=\frac{\epsilon}{121P^2}$ for all $p$. Thus, we have ($R_2, R_4, R_5, R_7: \checkmark$). Clearly, if $\mathcal{A}$ allocates $3$ or more processors to task $B$ or $D$, then running $P$ independent such tasks would result in a makespan at least $3$ times that of the optimal using a single processor, constradicting the competitive ratio of $\mathcal{A}$. Thus, we can assume that $p_B \le 2 \leq P^{3/4}$ ($R_3: \checkmark$) and $p_D \le 2 \leq 4$ ($R_6: \checkmark$). We further set $p^*_A = p^*_B=1$ ($R_1: \checkmark$). These give $\frac{a_B}{a^*_B}\ge 1$ and $\frac{t_A}{t^*_A}= 1$. With $\mu^{\ROO}=\frac{3+\sqrt{5}}{2}$, we obtain:
\begin{equation*}
\frac{a_B}{a_B^*(1-\mu^{\ROO})}+\frac{t_A}{t^*_A} \geq \frac{1}{1-\mu^{\ROO}}+1 = \frac{1}{\mu^{\ROO}} \ .  \qedhere
\end{equation*}
\end{proof}

\begin{lemma}
Theorem~\ref{th.mainLocal} is true for the communication model.
\end{lemma}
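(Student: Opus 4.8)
The plan is to take $A$ and $B$ to be genuine communication tasks (with fixed $c=1$, $d=0$, $\bar p = P$) parameterised by their one free value $w'=w\in[1,99]$, and to let the witness $\mathcal{A}^*$ run $B$ on a single processor — so $a_B^*=t_B(1)=a_B^{\min}$, which is exactly $R_8$ — and run $A$ on $p_A^{\max}$ processors, so $t_A^*=t_A^{\min}$. Writing $p(w')$ for $\mathcal{A}$'s position-independent allocation on $T_{w'}$, and recalling $g_{w'}(p)=1+\tfrac{p(p-1)}{w'}$, $f_{w'}(1)=\tfrac{w'}{t^{\min}_{w'}}$ with $t^{\min}_{w'}=\min_q(\tfrac{w'}{q}+q-1)$, the target inequality becomes $\tfrac{g_{w'_B}(p(w'_B))}{1-\mu^{\COM}}+f_{w'_A}(p(w'_A))\ge \tfrac{1}{\mu^{\COM}}-\epsilon$. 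Two facts already used in the proof of Theorem~\ref{th.mainLocal} do the work: a chain of copies of $T_{w'}$ forces $f_{w'}(p(w'))<\tfrac{1}{\mu^{\COM}}$, and $P$ independent copies force $g_{w'}(p(w'))<\tfrac{1}{\mu^{\COM}}$, each measured against the schedule using $p^{\max}$ (resp.\ $1$) processors per copy.

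I would then split on $\Gamma:=\sup_{w'\in[1,99]}g_{w'}(p(w'))$. If $\Gamma\ge 2$, pick $w'_B$ with $g_{w'_B}(p(w'_B))>2-\epsilon'$ and let $A$ be a second node carrying $B$'s parameters; since $f\ge1$ the left side exceeds $\tfrac{2-\epsilon'}{1-\mu^{\COM}}+1$, which is $>\tfrac1{\mu^{\COM}}$ for $\epsilon'$ small (numerically $\tfrac{2}{1-\mu^{\COM}}+1\approx 3.84>3.39$). If $\Gamma<2$, then $g_{w'}(2)=1+\tfrac2{w'}\ge 2$ for $w'\le 2$ forces $p(w')=1$ there, while $f_{99}(1)=99/18.9>\tfrac1{\mu^{\COM}}$ forbids $p(99)=1$; hence $w_1:=\inf\{w'\in[1,99]:p(w')\ge 2\}$ lies in $[2,99]$ and there is a sequence $w'_k\downarrow w_1$ with $p(w'_k)\ge 2$. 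Taking $A=T_{w_1-\eta}$ (so $p_A=1$ and $f_A=f_{w_1-\eta}(1)$) and $B=T_{w'_k}$ (so $g_B\ge g_{w'_k}(2)=1+\tfrac2{w'_k}$), and letting $\eta\to0$, $k\to\infty$, the left side approaches at least $F(w_1):=\tfrac{1+2/w_1}{1-\mu^{\COM}}+\tfrac{w_1}{t^{\min}_{w_1}}$.

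The crux, and the step I expect to be the main obstacle, is proving $F(w_1)\ge\tfrac1{\mu^{\COM}}$ for every $w_1\ge 2$ with equality exactly at $w_1=6$ — this is the precise point where $(\alpha^{\COM},\beta^{\COM})=(\tfrac43,\tfrac32)$, attained by $T_6$ at $p=2$ and $p=1$, is pinned down, so the computation must be exact (and it is what forces the constant $\tfrac{18}{23-\sqrt{313}}$). I would carry it out via the piecewise form of $t^{\min}_{w'}$: on $[2,6]$ one has $t^{\min}_{w'}=\tfrac{w'}{2}+1$, and the substitution $r=1+\tfrac2{w'}\in[\tfrac43,2]$ turns $F$ into $\tfrac{r}{1-\mu^{\COM}}+\tfrac2r$, whose unconstrained minimiser $r=\sqrt{2(1-\mu^{\COM})}\approx1.19$ is below $\tfrac43$, so $F\ge F(6)=\tfrac1{\mu^{\COM}}$ there; on $[6,12]$, $t^{\min}_{w'}=\tfrac{w'}{3}+2$ and $F'\ge0$ reduces to $w'\ge\tfrac{6}{3\sqrt{1-\mu^{\COM}}-1}\approx 3.95$; and for $w'>12$ the crude bound $F>\tfrac1{1-\mu^{\COM}}+2\approx 3.42>\tfrac1{\mu^{\COM}}$ (from $f_{w'}(1)>2$ and $g_{w'}(2)>1$) finishes it. Picking $\eta$ and $w'_k$ close enough to $w_1$ then recovers the $-\epsilon$ slack. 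A secondary subtlety is making the two limiting moves and the case split rigorous without assuming $p(\cdot)$ is monotone, which the infimum definition of $w_1$ together with subsequences handles.

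Finally I would choose $D$ to be a communication task with $w_D/c_D=8$ scaled so small that $t^{\min}_D\le\tfrac{\epsilon}{121P^2}$: this gives $R_5$, and since $t_D\le t_D(1)=8c_D\le t_B$ it also gives $R_4$, while a $P$-copies argument using $g_8(5)=3.5>\tfrac1{\mu^{\COM}}$ gives $p_D\le 4$ ($R_6$). The remaining task constraints are bookkeeping: $p_A^{\max}$ and $p(w'_B)$ are $O(1)$ (indeed $<16$) whereas $P$ is enormous, giving $R_1$ and $R_3$; $t_B^*=w'_B\in[1,99]$ gives $R_2$; $t_A^*=t^{\min}_{w'_A}\le w'_A\approx w'_B=t_B^*$ gives $R_7$; and $R_8$–$R_{10}$, together with $R_{11}$ and $R_{12}$ for the auxiliary task $C$, are exactly the facts already recorded in the proof of Theorem~\ref{th.mainLocal}.
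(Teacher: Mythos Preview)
Your proposal is correct and follows the same core idea as the paper: identify a threshold value of $w'$ at which $\mathcal{A}$'s allocation jumps from $1$ to at least $2$ processors, take $A$ just below and $B$ just above that threshold, and show that the resulting expression $\frac{1+2/w'}{1-\mu^{\COM}}+\frac{w'}{t^{\min}_{w'}}$ is minimised at $w'=6$ with value exactly $\frac{1}{\mu^{\COM}}$. The paper phrases the threshold as $s=\sup\{x:p(x)=1\}$ rather than your $w_1=\inf\{x:p(x)\ge2\}$, and then picks two concrete nearby points $\bar w$ and $\bar w+\frac{1}{P}$ instead of your limiting sequences $w_1-\eta$ and $w'_k\downarrow w_1$, which lets it track the $-\epsilon$ slack explicitly through a chain of inequalities rather than by a continuity argument.

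A few smaller differences worth noting: the paper chooses $p_A^*\in\{2,3\}$ explicitly (splitting at $\bar w=6$) rather than your $p_A^*=p_A^{\max}$; on $[2,12]$ these coincide, so the piecewise analysis of $F$ is identical, and for larger $\bar w$ your choice gives a stronger bound since $t_A^{\min}\le t_A(3)$. Your preliminary case split on $\Gamma\ge2$ is not present in the paper because the paper's supremum construction automatically yields $\bar w\ge0.1$ from the competitive-ratio constraint, and its $F$-analysis covers the full range $[0.1,64]$; your split is a valid alternative way to secure $w_1\ge2$. Finally, your task $D$ (with $w_D/c_D=8$ and tiny $c_D$) differs from the paper's choice ($w_D=\tfrac{\epsilon}{121P^2}$, $c_D=1$, forcing $p_D=1$), but both satisfy $R_4$--$R_6$.
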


\begin{proof}
Given algorithm $\mathcal{A}$, we define $\mathcal{U}$ to be the subset of $\{x \in \mathbb{R}^+\}$ such that $\mathcal{A}$ allocates one processor to a task whose execution time function has the form: $t(p)=\frac{x}{p}+p-1$. By definition, $\mathcal{A}$ always has the same allocation for identical tasks, so $\mathcal{U}$ is well-defined and must satisfy:
\begin{itemize}
\item $[0,0.1] \subseteq \mathcal{U}$, otherwise there would exist an $x \leq 0.1$ such that $\mathcal{A}$ allocates at least two processors for $t(p)=\frac{x}{p}+p-1$, and we would have $\frac{a}{a^{\min}} \geq \frac{a(2)}{a(1)}=\frac{x+2}{x}=1+\frac{2}{x} \geq 21$, which contradicts the competitive ratio of $\mathcal{A}$.
\item $\mathcal{U} \subseteq [0,64]$, otherwise there would exist an $x>64$ such that $\mathcal{A}$ allocates one processor for $t(p)=\frac{x}{p}+p-1$, and we would have $\frac{t}{t^{\min}}\geq \frac{x}{x/8+7}=\frac{8x+448}{x+56}-\frac{448}{x+56} \geq 8-\frac{448}{120}>4$, which also contradicts the competitive ratio of $\mathcal{A}$.
\end{itemize}

Based on the previous analysis, we now consider $s=sup(\mathcal{U}) \in [0.1,64]$. By definition, there exists a $\delta \in [0, \frac{1}{P})$ such that $(s-\delta) \in \mathcal{U}$ and $(s-\delta+\frac{1}{P}) \notin \mathcal{U}$. We choose such $\delta$, and set $\bar{w}=s-\delta > 0.09$, $t_A(p)=\frac{\bar{w}}{p}+p-1$ and $t_B(p)=\frac{\bar{w}+\frac{1}{P}}{p}+p-1$ with $p_A=1$ and $p_B>1$. Thus, we get $t^*_B = \bar{w}+\frac{1}{P}$ ($R_2: \checkmark$).

We also have $\frac{a_B}{a^{\min}_B}=\frac{\bar{w}+\frac{1}{P}+p_B(p_B-1)}{\bar{w}+\frac{1}{P}}\ge 1+\frac{(p_B-1)^2}{65}$, from which we can assume $p_B \leq 15$, otherwise $\frac{a_B}{a^{\min}_B}>4$, again contradicting $\mathcal{A}$'s competitive ratio. As $P > 81$, it leads to $p_B \leq P^{3/4}$ ($R_3: \checkmark$). We now show that Inequality~(\ref{eq.mainlocal}) is true:

\begin{align*}
\frac{a_B}{a_B^*(1-\mu^{\COM})}+\frac{t_A}{t^*_A} &\geq \frac{\bar{w}+2}{ (\bar{w}+\frac{1}{P})(1-\mu^{\COM})}+\frac{\bar{w}}{\frac{\bar{w}}{p^*_A}+p^*_A-1} \\
&= \frac{\bar{w}+2}{\bar{w}(1-\mu^{\COM})}\cdot \frac{1}{1+\frac{1}{P \bar{w}}}+\frac{\bar{w}}{\frac{\bar{w}}{p^*_A}+p^*_A-1} \\
&\geq \frac{\bar{w}+2}{\bar{w}(1-\mu^{\COM})}\cdot \left(1-\frac{1}{P \bar{w}}\right)+\frac{\bar{w}}{\frac{\bar{w}}{p^*_A}+p^*_A-1} \\
&= \frac{\bar{w}+2}{\bar{w}(1-\mu^{\COM})}-\frac{(\bar{w}+2)}{P \bar{w}^2(1-\mu^{\COM})}+\frac{\bar{w}}{\frac{\bar{w}}{p^*_A}+p^*_A-1} \\
&\geq \frac{\bar{w}+2}{\bar{w}(1-\mu^{\COM})}-\frac{2(\bar{w}+2)}{P \bar{w}^2}+\frac{\bar{w}}{\frac{\bar{w}}{p^*_A}+p^*_A-1} \\
&\geq \frac{\bar{w}+2}{\bar{w}(1-\mu^{\COM})}-\frac{16297}{P}+\frac{\bar{w}}{\frac{\bar{w}}{p^*_A}+p^*_A-1} \\
&\geq \frac{1+\frac{2}{\bar{w}}}{1-\mu^{\COM}}+\frac{1}{\frac{1}{p^*_A}+\frac{p^*_A-1}{\bar{w}}}-\epsilon \ .
\end{align*}

We further consider two cases:

\textbf{Case 1: $\bar{w} \leq 6$}. In this case, we set $p^*_A=2$ and obtain: 
\begin{align*}
\frac{a_B}{a_B^*(1-\mu^{\COM})}+\frac{t_A}{t^*_A} &\geq \frac{1+\frac{2}{\bar{w}}}{1-\mu^{\COM}}+\frac{1}{\frac{1}{2}+\frac{1}{\bar{w}}}-\epsilon \ .
\end{align*}
We define $f(\bar{w})\triangleq\frac{1+\frac{2}{\bar{w}}}{1-\mu^{\COM}}+\frac{1}{\frac{1}{2}+\frac{1}{\bar{w}}}$, and get $f'(\bar{w})=\frac{4}{(\bar{w}+2)^2}-\frac{2}{(1-\mu^{\COM})\bar{w}^2}$, which is negative in $[0,w^0)$, where $w^0$ is the smallest positive $\bar{w}$ such that $f'(\bar{w})=0$. Solving the equation above, we find $w^0=\frac{2+2\sqrt{2-2\mu^{\COM}}}{1-2\mu^{\COM}}>6$. Therefore, we can replace $\bar{w}$ by $6$ to obtain:
\begin{align*}
\frac{a_B}{a_B^*(1-\mu^{\COM})}+\frac{t_A}{t^*_A} &\geq \frac{4}{3(1-\mu^{\COM})}+\frac{3}{2}-\epsilon \\
&= \frac{\alpha^{\COM}}{1-\mu^{\COM}}+\beta^{\COM}-\epsilon \\
&= \frac{1}{\mu^{\COM}}-\epsilon \ .
\end{align*}

\textbf{Case 2: $\bar{w} > 6$}. In this case, we set $p^*_A=3$ and obtain: 
\begin{align*}
\frac{a_B}{a_B^*(1-\mu^{\COM})}+\frac{t_A}{t^*_A} &\geq \frac{1+\frac{2}{\bar{w}}}{1-\mu^{\COM}}+\frac{1}{\frac{1}{3}+\frac{2}{\bar{w}}}-\epsilon \ .
\end{align*}
We again define $f(\bar{w})\triangleq\frac{1+\frac{2}{\bar{w}}}{1-\mu^{\COM}}+\frac{1}{\frac{1}{3}+\frac{2}{\bar{w}}}$, and get $f'(\bar{w})=\frac{18}{(\bar{w}+6)^2}-\frac{2}{(1-\mu^{\COM})\bar{w}^2}$, which is positive in $(w^0,\infty)$, where $w^0$ is the largest positive $\bar{w}$ such that $f'(\bar{w})=0$. Solving the equation above, we find $w^0=\frac{6+18\sqrt{1-\mu^{\COM}}}{8-9\mu^{\COM}}<6$. Therefore, we can replace $\bar{w}$ by $6$ to obtain:
\begin{align*}
\frac{a_B}{a_B^*(1-\mu^{\COM})}+\frac{t_A}{t^*_A} &\geq \frac{4}{3(1-\mu^{\COM})}+\frac{3}{2}-\epsilon \\
&= \frac{\alpha^{\COM}}{1-\mu^{\COM}}+\beta^{\COM}-\epsilon \\
&= \frac{1}{\mu^{\COM}}-\epsilon \ .
\end{align*}

In both cases, we get the desired result with ($R_1: \checkmark$). Further, because $p^*_A \leq 3$ and $\bar{w} > 0.09$, we get $t^*_A \leq \bar{w}+2 \leq 24\bar{w} \leq 24t^*_B$ ($R_7: \checkmark$).

Finally, for task $D$, we set $t_D(p)=\frac{\epsilon}{121P^2 \cdot p}+p-1$ and $p_D^* =1$. Thus, $\mathcal{A}$ must also allocate one processor to the task (i.e., $P_D = 1$), otherwise $\frac{a_D}{a_{D}^{\min}}\ge \frac{\frac{\epsilon}{121P^2}+2}{\frac{\epsilon}{121P^2}}>5$, which contradicts its competitive ratio. Therefore, we have ($R_4, R_5, R_6: \checkmark$).
\end{proof}

\begin{lemma}
Theorem~\ref{th.mainLocal} is true for the Amdahl's model.
\end{lemma}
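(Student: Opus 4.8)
The plan is to follow the same template as the roofline and communication cases, exploiting the one feature special to Amdahl's model: since $t(p)=\frac{w}{p}+d$ is non‑increasing in $p$ with no communication penalty, the comparison algorithm $\mathcal{A}^*$ may use as many processors as it wishes on a task at no cost in execution time. I would therefore take $A$ and $B$ to be the \emph{same} Amdahl task (as was done for roofline), let $\mathcal{A}^*$ run the ``$A$ copy'' on $\lfloor P^{3/4}\rfloor$ processors --- so that $t_A^*$ is essentially $t_A^{\min}$ --- and run the ``$B$ copy'' on a single processor --- so $p_B^*=1$ and $a_B^*=a_B^{\min}$, which is precisely what the proof of Theorem~\ref{th.mainLocal} already assumed for $B$. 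The key observation is that, \emph{whatever} common allocation $p$ the local deterministic algorithm $\mathcal{A}$ assigns to this task, the left‑hand side of~\eqref{eq.mainlocal} is a convex function of $p$, so it is bounded below by its unconstrained minimum, and that minimum tends to $\bigl(\tfrac{1}{\sqrt{1-\mu^{\AMD}}}+1\bigr)^2$, which equals $\tfrac{1}{\mu^{\AMD}}$ by the algebra behind Theorem~\ref{thm.amdahl}.

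Concretely, set $r=\lceil\sqrt P\rceil$ and let $A$ and $B$ be the Amdahl task with $w=\frac{r}{r+1}$ and $d=\frac{1}{r+1}$, so that $w/d=r$ and $t_B^*=t_B(1)=w+d=1$, giving $(R_2:\checkmark)$; put $p_A^*=\lfloor P^{3/4}\rfloor$ $(R_1:\checkmark)$ and $p_B^*=1$ $(R_8)$. If $p$ denotes $\mathcal{A}$'s allocation to this task, then $\frac{a_B}{a_B^*}=\frac{r+p}{r+1}$ and $\frac{t_A}{t_A^*}=\frac{r/p+1}{r/\lfloor P^{3/4}\rfloor+1}$, so the left side of~\eqref{eq.mainlocal} equals $\frac{r+p}{(r+1)(1-\mu^{\AMD})}+\lambda\bigl(\frac{r}{p}+1\bigr)$ where $\lambda=\bigl(r/\lfloor P^{3/4}\rfloor+1\bigr)^{-1}\in(0,1]$. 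This is convex in $p$, hence at least $\frac{r}{(r+1)(1-\mu^{\AMD})}+2\sqrt{\frac{\lambda r}{(r+1)(1-\mu^{\AMD})}}+\lambda$; since $r=\Theta(\sqrt P)$ and $\lfloor P^{3/4}\rfloor\gg r$, both $\frac{r}{r+1}$ and $\lambda$ equal $1-O(P^{-1/4})$, so this lower bound is $\bigl(\tfrac{1}{\sqrt{1-\mu^{\AMD}}}+1\bigr)^2-O(P^{-1/4})=\tfrac{1}{\mu^{\AMD}}-O(P^{-1/4})\ge \tfrac{1}{\mu^{\AMD}}-\epsilon$ once $P\ge(120900/\epsilon)^4$. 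Convexity means I never have to determine $p$: the bound holds for every choice $\mathcal{A}$ could make.

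It remains to verify the task constraints. Constraint $(R_{10})$, $a_B\le 5a_B^*$, is already established in the proof of Theorem~\ref{th.mainLocal} (from running $P$ independent copies of $B$); since $a_B^*=a_B^{\min}$ here, this gives $\frac{r+p}{r+1}\le 5$, hence $p\le 4r+5\le P^{3/4}$, i.e. $(R_3:\checkmark)$. Constraint $(R_9)$, $t_A\le 5t_A^*$, is likewise already established, and $(R_7:\checkmark)$ because $t_A^*=\frac{r/\lfloor P^{3/4}\rfloor+1}{r+1}\le 1=t_B^*\le 24t_B^*$. For task $D$ I would take the purely sequential task $t_D(p)=\frac{\epsilon}{121P^2}$ for all $p$ (the Amdahl task with $w=0$, $d=\frac{\epsilon}{121P^2}$) and $p_D^*=1$: then $t_D^*=\frac{\epsilon}{121P^2}$ $(R_5:\checkmark)$; $t_D=\frac{\epsilon}{121P^2}\le t_B^{\min}=\frac{1}{r+1}\bigl(\frac{r}{P}+1\bigr)\le t_B$ $(R_4:\checkmark)$; and $p_D\le 4$ $(R_6:\checkmark)$, since $P$ independent copies of $D$ each run by $\mathcal{A}$ on $p_D$ processors would otherwise produce a makespan $\approx p_D\cdot\frac{\epsilon}{121P^2}$ against $\mathcal{A}^*$'s makespan $\frac{\epsilon}{121P^2}$, contradicting $\mathcal{A}$'s competitive ratio $\frac{1}{\mu^{\AMD}}-4\epsilon<5$.

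The only genuinely delicate point is the algebraic identity $\bigl(\tfrac{1}{\sqrt{1-\mu^{\AMD}}}+1\bigr)^2=\tfrac{1}{\mu^{\AMD}}$, equivalently $2\mu^{\AMD}\sqrt{1-\mu^{\AMD}}=1-3\mu^{\AMD}+(\mu^{\AMD})^2$, which is exactly the identity that makes $\mu^{\AMD}=\frac{1-\sqrt{8\sqrt2-11}}{2}$ the minimizer chosen in Lemma~\ref{lem.amdahl} and Theorem~\ref{thm.amdahl}; I would verify it by squaring and reducing to the same quartic that characterizes $\mu^{\AMD}$. Apart from that, the work is bookkeeping: tracking the $O(P^{-1/4})$ errors (from $r$ finite and from using $\lfloor P^{3/4}\rfloor$ rather than $P$) and checking they are swallowed by $\epsilon$ under $P\ge(120900/\epsilon)^4$. (If one insists that $A$ and $B$ be formally distinct, perturb $w$ by $\Theta(1/P)$ and run a supremum argument as in the communication case, at the cost of an extra $O(\epsilon)$.)
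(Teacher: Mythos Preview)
Your proposal is correct and takes a genuinely different route from the paper. The paper fixes the sequential part $d=\frac{1}{\sqrt P}$ and runs a supremum argument over the parallel work $x$: it locates the threshold $\bar w$ at which $\mathcal A$'s allocation crosses $\sqrt P$, takes $A$ just below and $B$ just above that threshold, and then bounds $\frac{a_B}{a_B^*(1-\mu)}+\frac{t_A}{t_A^*}\ge \frac{\bar w+1}{\bar w(1-\mu)}+\bar w+1-\epsilon$, which is minimized (via $x'=\frac{\bar w+1}{\bar w}$) exactly at $\alpha^{\AMD},\beta^{\AMD}$. You instead fix a single task with $w/d=r\approx\sqrt P$, let $A=B$, put $p_A^*=\lfloor P^{3/4}\rfloor$, and observe that the left side of \eqref{eq.mainlocal} is convex in the unknown allocation $p$; minimizing over all real $p$ gives the same limit $\bigl(\tfrac{1}{\sqrt{1-\mu}}+1\bigr)^2=\tfrac{1}{\mu}$, so you never need to identify what $\mathcal A$ actually does. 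Your approach is cleaner for Amdahl specifically because $t(p)$ is monotone in $p$ (so $p_A^*$ can be pushed to $P^{3/4}$ with no penalty), and the $A=B$ device is legitimate since the paper itself does this in the roofline case. The paper's approach, on the other hand, follows the same template used for the communication model, where such a convexity shortcut is not available because $t(p)$ has an interior minimum; its supremum argument pins down $\mathcal A$'s actual bad decision rather than bounding over all decisions. The algebraic identity you invoke is precisely $(\alpha^{\AMD}-1)^2=1-\mu^{\AMD}$, which the paper also records; the remaining work is indeed the $O(P^{-1/4})$ bookkeeping, which your choice $r=\lceil\sqrt P\rceil$ and the hypothesis $P\ge(120900/\epsilon)^4$ handle comfortably.
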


\begin{proof}
Given algorithm $\mathcal{A}$, we define $\mathcal{U}$ to be the subset of $\{x \in \mathbb{R}^+\}$ such that $\mathcal{A}$ allocates strictly less than $\sqrt{P}$ processors to a task whose execution time function has the form: $t(p)=\frac{x}{p}+\frac{1}{\sqrt{P}}$. By definition, $\mathcal{A}$ always has the same allocation for identical tasks, so $\mathcal{U}$ is well-defined and must satisfy:
\begin{itemize}
\item $[0,0.1] \subseteq \mathcal{U}$, otherwise there would exist an $x \leq 0.1$ such that $\mathcal{A}$ allocates at least $\sqrt{P}$ processors for $t(p)=\frac{x}{p}+\frac{1}{\sqrt{P}}$, and we would have $\frac{a}{a^{\min}} \geq \frac{a(\sqrt{P})}{a(1)}=\frac{x+1}{x+\frac{1}{\sqrt{P}}}=1+\frac{1-\frac{1}{\sqrt{P}}}{x+\frac{1}{\sqrt{P}}} \geq 1+\frac{0.99}{0.11}=10$, which contradicts the competitive ratio of $\mathcal{A}$.
\item $\mathcal{U} \subseteq [0,10]$, otherwise there would exist an $x>10$ such that $\mathcal{A}$ allocates less than $\sqrt{P}$ processors for $t(p)=\frac{x}{p}+\frac{1}{\sqrt{P}}$, and we would have $\frac{t}{t^{\min}}\geq \frac{t(\sqrt{P})}{t(P)} = \frac{\frac{x}{\sqrt{P}}+\frac{1}{\sqrt{P}}}{\frac{x}{P}+\frac{1}{\sqrt{P}}} > \frac{x}{\frac{x}{\sqrt{P}}+1} =\frac{1}{\frac{1}{\sqrt{P}}+\frac{1}{x}}>\frac{1}{0.11}>9$, which also contradicts the competitive ratio $\mathcal{A}$.
\end{itemize}

Based on the previous analysis, we now consider $s=sup(\mathcal{U}) \in [0.1,10]$.  By definition, there exists a $\delta \in [0, \frac{1}{\sqrt{P}})$ such that $(s-\delta) \in \mathcal{U}$ and $(s-\delta+\frac{1}{\sqrt{P}}) \notin \mathcal{U}$. We choose such $\delta$, and set $\bar{w}=s-\delta>0.09$, $t_A(p)=\frac{\bar{w}}{p}+\frac{1}{\sqrt{P}}$ and $t_B(p)=\frac{\bar{w}+\frac{1}{\sqrt{P}}}{p}+\frac{1}{\sqrt{P}}$ with $p_A < \sqrt{P}$ and $p_B \geq \sqrt{P}$. Thus, we get $t^*_A \le \bar{w}+\frac{1}{\sqrt{P}}$ and $t^*_B = \bar{w}+\frac{2}{\sqrt{P}}$ ($R_2, R_7: \checkmark$).

We can further assume $p_B\le P^{3/4}$, otherwise we would have $\frac{a_B}{a^{\min}_B} \geq \frac{\bar{w}+\frac{P^{3/4}}{\sqrt{P}}}{\bar{w}+\frac{2}{\sqrt{P}}} \geq \frac{0.09+P^{1/4}}{11}>5$, which contradicts the competitive ratio of $\mathcal{A}$ ($R_3: \checkmark$). Finally, we set $p^*_A=\lfloor P^{3/4} \rfloor$ ($R_1: \checkmark$) and get:
\begin{align*}
\frac{a_B}{a_B^*(1-\mu^{\AMD})}+\frac{t_A}{t^*_A} &\geq \frac{\bar{w}+1}{ (\bar{w}+\frac{2}{\sqrt{P}})(1-\mu^{\AMD})}+\frac{\frac{\bar{w}}{\sqrt{P}}+\frac{1}{\sqrt{P}}}{\frac{\bar{w}}{P^{3/4}-1}+\frac{1}{\sqrt{P}}} \\
&= \frac{\bar{w}+1}{\bar{w}(1-\mu^{\AMD})}\cdot \frac{1}{1+\frac{2}{\bar{w} \sqrt{P}}}+\frac{\bar{w}+1}{1+\frac{\bar{w}}{P^{1/4}-\frac{1}{\sqrt{P}}}} \\
&\geq \frac{\bar{w}+1}{\bar{w}(1-\mu^{\AMD})}\left(1-\frac{2}{\bar{w} \sqrt{P}}\right)+(\bar{w}+1)\left(1-\frac{\bar{w}}{P^{1/4}-\frac{1}{\sqrt{P}}}\right) \\
&\geq \frac{\bar{w}+1}{\bar{w}(1-\mu^{\AMD})}+\bar{w}+1-\frac{22}{0.09^2\times 0.5\sqrt{P}}-\frac{110}{P^{1/4}-\frac{1}{\sqrt{P}}} \\
&\geq \frac{\bar{w}+1}{\bar{w}(1-\mu^{\AMD})}+\bar{w}+1-\epsilon  \ .
\end{align*}

We now set $x=\frac{\bar{w}+1}{\bar{w}}>1$, so $\frac{x}{x-1}=\bar{w}+1$. We can finally conclude that:
\begin{align*}
\frac{a_B}{a_B^*(1-\mu^{\AMD})}+\frac{t_A}{t^*_A} &\geq \frac{x}{1-\mu^{\AMD}}+\frac{x}{x-1} -\epsilon \\
&\geq \min_{x'>1} \left( \frac{x'}{1-\mu^{\AMD}}+\frac{x'}{x'-1} \right) -\epsilon \\
&= \frac{\alpha^{\AMD}}{1-\mu^{\AMD}}+\beta^{\AMD}-\epsilon \\
&= \frac{2}{1-\sqrt{8\sqrt{2}-11}}-\epsilon \ .
\end{align*}

To show the third step above, we can take the derivative of $\frac{x'}{1-\mu^{\AMD}}+\frac{x'}{x'-1}$ and show that its minimum is achieved when $x'$ satisfies $(x'-1)^2=1-\mu^{\AMD}$. This is equivalent to $x'=\alpha^{\AMD}$, because $\left(\alpha^{\AMD}-1\right)^2=\frac{\left(\sqrt{2}-1+\sqrt{2\sqrt{2}-1}\right)^2}{4}=\frac{1+(\sqrt{2}-1)\sqrt{2\sqrt{2}-1}}{2}=\frac{1+\sqrt{8\sqrt{2}-11}}{2}=1-\mu^{\AMD}=(x'-1)^2$. As $x'>1$, the only solution is $x'=\alpha^{\AMD}$. From the proof of Lemma \ref{lem.amdahl}, we also get that $\beta^{\AMD} = \frac{\alpha^{\AMD}}{\alpha^{\AMD}-1} = \frac{x'}{x'-1}$.

Finally, for task $D$, we set $t_D(p)=\frac{\epsilon}{121P^2}$ and $p^*_D=1$. Thus, we must have $p_D \leq 4$, otherwise $\frac{a_D}{a^{\min}_D}>4$, which contradicts the competitive ratio of $\mathcal{A}$ ($R_4, R_5, R_6: \checkmark$).
\end{proof}

\subsection{Step 2: Global Analysis}
\label{sec:global}

In this section, we assume that algorithm $\mathcal{A}$ and model $M$ are fixed, while the tasks $A$, $B$, $C$, $D$ and algorithm $\mathcal{A}^*$ are chosen such that the conditions of Theorem~\ref{th.mainLocal} hold. We construct a task graph (as shown in Figure~\ref{fig.instance}), based on which we will show that $\frac{T}{T^*} \geq  \frac{a_B}{a^*_B (1-\mu^M)}+\frac{t_A}{t_A^*} -2\epsilon \geq \frac{1}{\mu^M}-3\epsilon$.

In our constructed task graph, the tasks are partitioned into four different groups: $\mathcal{T}_A, \mathcal{T}_B$, $\mathcal{T}_C$ and $\mathcal{T}_D$. Specifically,
\begin{itemize}
  \item $\mathcal{T}_A$ has $Y$ tasks identical to $A$, labeled as $(A_i)_{i \in [1,Y]}$;
  \item $\mathcal{T}_B$ has $XZ$ tasks identical to $B$, labeled as $(B_{i,j})_{i \in [1,Z], j \in [1,X]}$;
  \item $\mathcal{T}_C$ has $Z$ tasks identical to $C$, labeled as $(C_i)_{i \in [1,Z]}$;
  \item $\mathcal{T}_D$ has $Z$ tasks identical to $D$, labeled as $(D_i)_{i \in [1,Z]}$,
\end{itemize}
where $X=\left\lceil\frac{P-p_C+1}{p_B} \right\rceil$, and using $K=\left\lceil \frac{5t_A^*}{\epsilon X t_B^*} \right\rceil$, we set $Y=\left\lfloor \frac{XKt^*_B}{t^*_A} \right\rfloor$ and $Z=K(P-p^*_A)$. These parameters are specified as definitions ($F$'s) in Table \ref{Constraints}.

The tasks are organized in layers and have the following precedence constraints:
\begin{itemize}
  \item task $C_i$ is the predecessor of task $D_{i+1}$ for $1 \leq i <Z$, and of tasks $B_{i+1,j}$ for $1 \leq i <Z$ and $1\leq j\leq X$;
  \item task $D_i$ is the predecessor of task $C_i$ for $1\leq i\leq Z$;
  \item task $C_Z$ is the predecessor of task $A_1$;
  \item task $A_i$ is the predecessor of task $A_{i+1}$ for $1\leq i < Y$.
\end{itemize}

\begin{figure}
\centering
\scalebox{0.8}{%
\begin{tikzpicture}[xscale=1,yscale=.8]
\tikzset{vertex/.style = {shape=circle,fill=blue!20,minimum size=.1cm}}
\tikzset{edge/.style = {->,> = latex'}}
\SetUpEdge[lw         = .5pt,
           color      = black,
           labelcolor = white]
\node[vertex,fill=yellow!40] (aa) at  (0,6) {$D_1$};
\node[vertex] (a) at  (1,6) {$C_1$};
\node[vertex,fill=yellow!40] (bb) at  (2,6) {$D_2$};
\node[vertex] (b) at  (3,6) {$C_2$};
\node[vertex,fill=white] (cc) at  (4,6) {$\cdots$} ;
\node[vertex,fill=yellow!40] (dd) at  (5,6) {$D_Z$};
\node[vertex] (d) at  (6,6) {$C_Z$};

\node[vertex,fill=red!20] (d1) at (8,6) {$A_1$};
\node[vertex,fill=red!20] (d2) at (8,4.6) {$A_2$};
\node[vertex,fill=white] (d3) at (8,3.3) {$\cdots$};
\node[vertex,fill=red!20] (d4) at (8,2) {$A_Y$};

\node[vertex,fill=green!20] (b11) at  (0,5) {\scriptsize{$B_{1,1}$}};
\node[vertex,fill=green!20] (b12) at  (0,4) {\scriptsize{$B_{1,2}$}};
\draw  (0,3) rectangle node {$\cdots$} (0,3) ;
\node[vertex,fill=green!20] (b1x) at  (0,2) {\scriptsize{$B_{1,X}$}};

\node[vertex,fill=green!20] (b21) at  (2,5) {\scriptsize{$B_{2,1}$}};
\node[vertex,fill=green!20] (b22) at  (2,4) {\scriptsize{$B_{2,2}$}};
\node[vertex,fill=white] (b2) at  (2,3) {$\cdots$} ;
\node[vertex,fill=green!20] (b2x) at  (2,2) {\scriptsize{$B_{2,X}$}};

\node[vertex,fill=white] (bb1) at  (4,5) {$\cdots$} ;
\node[vertex,fill=white] (bb2) at  (4,4) {$\cdots$} ;
\node[vertex,fill=white] (bb3) at  (4,3) {$\cdots$} ;
\node[vertex,fill=white] (bb4) at  (4,2) {$\cdots$} ;

\node[vertex,fill=green!20] (by1) at  (6,5) {\scriptsize{$B_{Z,1}$}};
\node[vertex,fill=green!20] (by2) at  (6,4) {\scriptsize{$B_{Z,2}$}};
\node[vertex,fill=white] (b2) at  (6,3) {$\cdots$} ;
\node[vertex,fill=green!20] (byx) at  (6,2) {\scriptsize{$B_{Z,X}$}};

\tikzset{EdgeStyle/.style={->,> = latex'}}
\Edge(a)(bb)
\Edge(aa)(a)
\Edge(bb)(b)
\Edge(b)(cc)
\Edge(cc)(dd)
\Edge(dd)(d)
\Edge(d)(d1)
\Edge(d1)(d2)
\Edge(d2)(d3)
\Edge(d3)(d4)
\Edge(a)(b21)
\Edge(a)(b22)
\Edge(a)(b2x)

\Edge(b)(bb1)
\Edge(b)(bb2)
\Edge(b)(bb4)
\Edge(dd)(by1)
\Edge(dd)(by2)
\Edge(dd)(byx)
\end{tikzpicture}}
\caption{A task graph for proving lower bounds.}
\label{fig.instance}
\end{figure}
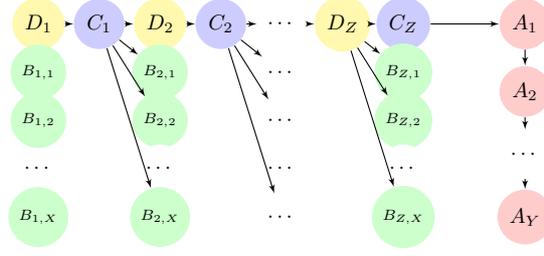

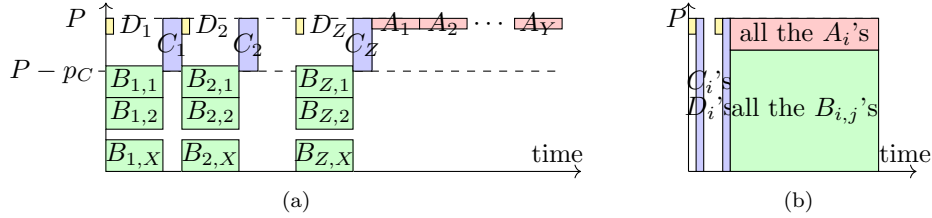
\begin{figure*}
\centering
\subfloat[][]{
\begin{tikzpicture}[xscale=0.5,yscale=.7]
\draw[->] (0,0) -- (12.5,0) node[pos=.95,above] {time} ;
\draw[->] (0,0) -- (0,3.2) ;
\draw[-,dashed] (0,2.9) -- (8,2.9)  node[pos=-.1]{$P$} ;
\draw[-,dashed] (-.2,1.9) -- (12,1.9) node[pos=-.1]{\new{$P-p_C$}};

\draw[fill=green!20]  (0, 0) rectangle node {$B_{1,X}$} (1.5, 0.6) ;
\draw[fill=green!20]  (0, 0.8) rectangle node {$B_{1,2}$} (1.5, 1.4) ;
\draw[fill=green!20]  (0, 1.4) rectangle node {$B_{1,1}$} (1.5, 2) ;
\draw[fill=yellow!40, anchor=west]  (0,2.6) rectangle node {$D_{1}$} (0.2, 2.9) ;
\draw[fill=blue!20] (1.5,1.9) rectangle node{$C_1$} (2,2.9) ;

\draw[fill=green!20]  (2, 0) rectangle node {$B_{2,X}$} (3.5, 0.6) ;
\draw[fill=green!20]  (2, 0.8) rectangle node {$B_{2,2}$} (3.5, 1.4) ;
\draw[fill=green!20]  (2, 1.4) rectangle node {$B_{2,1}$} (3.5, 2) ;
\draw[fill=yellow!40, anchor=west]  (2,2.6) rectangle node {$D_{2}$} (2.2, 2.9) ;
\draw[fill=blue!20] (3.5,1.9) rectangle node{$C_2$} (4,2.9) ;

\draw[fill=green!20]  (5, 0) rectangle node {$B_{Z,X}$} (6.5, 0.6) ;
\draw[fill=green!20]  (5, 0.8) rectangle node {$B_{Z,2}$} (6.5, 1.4) ;
\draw[fill=green!20]  (5, 1.4) rectangle node {$B_{Z,1}$} (6.5, 2) ;
\draw[fill=yellow!40, anchor=west]  (5,2.6) rectangle node {$D_{Z}$} (5.2, 2.9) ;
\draw[fill=blue!20] (6.5,1.9) rectangle node{$C_Z$} (7,2.9) ;

\draw[fill=red!20] (7,2.7) rectangle node{$A_1$} (8.25,2.9) ;
\draw[fill=red!20] (8.25,2.7) rectangle node{$A_2$} (9.5,2.9) ;
\draw[draw=none] (9.5,2.7) rectangle node{$\cdots$} (10.75,2.9) ;
\draw[fill=red!20] (10.75,2.7) rectangle node{$A_Y$} (12,2.9) ;
\end{tikzpicture}
}
\hspace{.5cm}
\subfloat[][]{
\begin{tikzpicture}[xscale=0.5,yscale=.7]
\draw[->] (0,0) -- (6,0) node[pos=.95,above] {time} ;
\draw[->] (0,0) -- (0,3.2) ;
\draw[-,dashed] (1,2.9) -- (5,2.9)  node[pos=-0.33]{$P$} ;

\draw[fill=yellow!40] (0,2.6) rectangle  (0.2,2.9) ;
\draw[fill=blue!20] (0.2,0) rectangle  (0.4,2.9) ;
\draw[fill=yellow!40] (0.7,2.6) rectangle  (0.9,2.9) ;
\draw[fill=blue!20] (0.9,0) rectangle  (1.1,2.9) ;
\draw[draw=none] (0,1) rectangle node{$D_i$'s} (1.1,1.4);
\draw[draw=none] (0,1.5) rectangle node{$C_i$'s} (1.1,1.9);

\draw[fill=green!20]  (1.1,0) rectangle node {all the $B_{i,j}$'s} (5, 2.3) ;

\draw[fill=red!20] (1.1,2.3) rectangle node{all the $A_i$'s} (5,2.9) ;
\end{tikzpicture}
}
\caption{Shapes of algorithm $\mathcal{A}$'s schedule (a) and algorithm $\mathcal{A^*}$'s schedule (b) for the task graph of Figure~\ref{fig.instance}.}
\label{fig.sched}
\end{figure*}

To prove the lower bound, we will first show that the constraints $(R_{13})$ to $(R_{15})$ in Table~\ref{Constraints} pertaining to the parameters of the task graph are also respected (Lemma \ref{lem.last_constraints}). We will then show that algorithm $\mathcal{A}$'s schedule must follow the shape as shown in Figure~\ref{fig.sched}(a) (Lemma~\ref{lem.doomed}), whereas algorithm $\mathcal{A^*}$ could wait until tasks in $\mathcal{T}_C$ and $\mathcal{T}_D$ are finished before launching tasks in $\mathcal{T}_A$ and $\mathcal{T}_B$, resulting in a better schedule as shown in Figure~\ref{fig.sched}(b). This last result together with Theorem~\ref{th.mainLocal} will lead to a contradiction, hence proving the lower bound (Theorem \ref{lem.LB}). In the following analysis, we will provide a reference to a constraint or a definition whenever it is used.

\begin{lemma}\label{lem.last_constraints}
Given the setting above, constraints $(R_{13})$, $(R_{14})$ and $(R_{15})$ in Table~\ref{Constraints} are satisfied.
\end{lemma}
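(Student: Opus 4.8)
The plan is to verify the three constraints $(R_{13})$, $(R_{14})$, $(R_{15})$ one at a time, using only the definitions $(F_1)$–$(F_5)$ of $X$, $K$, $Y$, $Z$ together with the task-level constraints $(R_1)$–$(R_{12})$ already established in Step 1. The whole exercise is routine interval arithmetic on ceilings and floors; the only place where care is needed is tracking which bounds from Table \ref{Constraints} feed into which estimate, and making sure the error terms are absorbed by the assumption $P \geq (120900/\epsilon)^4$ from $(F_1)$.

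First I would handle $(R_{13})$, i.e. $1 \leq X \leq P$. The lower bound $X \geq 1$ is immediate since $X = \lceil (P - p_C + 1)/p_B \rceil$ and the numerator is positive (because $p_C \leq P$). For the upper bound, I would use that $p_B \geq 1$ and $p_C \geq 1$, so the argument of the ceiling is at most $P$, and since it is already $\leq P$ the ceiling is still $\leq P$ — one must be slightly careful that $\lceil P/1 \rceil = P$, which is fine. (If needed, $(R_{12})$ gives $p_C \geq \mu^M P \geq 1$, which only helps.)

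Next, $(R_{14})$: $XKt_B^*(1 - \epsilon/5) \leq Y t_A^* \leq XKt_B^*$. The right inequality is immediate from $(F_4)$, $Y = \lfloor XKt_B^*/t_A^* \rfloor \leq XKt_B^*/t_A^*$, multiplying through by $t_A^* > 0$. For the left inequality, $Y \geq XKt_B^*/t_A^* - 1$, so $Yt_A^* \geq XKt_B^* - t_A^*$, and it suffices to show $t_A^* \leq \tfrac{\epsilon}{5} XKt_B^*$, i.e. $K \geq \tfrac{5 t_A^*}{\epsilon X t_B^*}$ — which is exactly the definition $(F_3)$ of $K$ (as a ceiling of that very quantity). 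So this reduces to a one-line appeal to $(F_3)$; the work is just checking $t_B^* > 0$, which follows from $(R_2)$ (or $(R_8)$).

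Finally, $(R_{15})$: $K(P - P^{3/4}) \leq Z \leq \tfrac{121 P}{\epsilon}$. Since $Z = K(P - p_A^*)$ by $(F_5)$ and $p_A^* \leq P^{3/4}$ by $(R_1)$, the left inequality is immediate. For the right inequality I expect this to be the main (though still routine) obstacle, because it requires chaining several of the Step-1 bounds. Write $Z = K(P - p_A^*) \leq KP$, then bound $K$ via $(F_3)$: $K \leq \tfrac{5 t_A^*}{\epsilon X t_B^*} + 1$. One then needs lower bounds on $X$ and $t_B^*$ and an upper bound on $t_A^*$. Using $(F_2)$, $X \geq (P - p_C + 1)/p_B \geq (P - p_C)/p_B$; combined with $(R_{12})$ ($p_C \geq \mu^M P$) and a crude bound $p_B \leq P$ (or the sharper $(R_3)$, $p_B \leq P^{3/4}$), one gets $X$ at least of order $(1-\mu^M)$ — a positive constant. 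Using $(R_7)$, $t_A^* \leq 24 t_B^*$, so $t_A^*/(X t_B^*) \leq 24/X$, a constant. Hence $K = O(1/\epsilon)$, and $Z \leq KP = O(P/\epsilon)$; the constant $121$ in $(R_{15})$ is chosen generously enough to absorb all these crude estimates once $P$ is large (via $(F_1)$) so that the "$+1$" in the ceiling for $K$ and the "$+1$" in the numerator of $X$ are negligible. I would present this as: bound $Z \leq (\tfrac{5t_A^*}{\epsilon X t_B^*} + 1) P \leq \tfrac{120 P}{\epsilon X} + P \leq \tfrac{120 p_B P}{\epsilon(P - p_C)} + P$, then plug in $p_B \leq P^{3/4}$ and $P - p_C \leq P - \mu^M P = (1-\mu^M)P$, getting $Z \leq \tfrac{120 P^{3/4}}{\epsilon(1-\mu^M)} + P \leq \tfrac{121P}{\epsilon}$ for $P$ large enough — the last step using $(F_1)$ and the fact that $1-\mu^M$ is bounded below by a constant (e.g. $1-\mu^M \geq 1/2 > 0.5$ since $\mu^M \leq 0.5$, so $120/(\epsilon(1-\mu^M)) \leq 240/\epsilon$, and $240 P^{3/4}/\epsilon + P \leq 121P/\epsilon$ whenever $P^{1/4} \geq 240/(120) \cdot$ something, comfortably guaranteed by $(F_1)$). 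The bookkeeping on constants is the only delicate part, and it is designed to work precisely because $(F_1)$ makes $P^{1/4}$ enormous compared to $1/\epsilon$.
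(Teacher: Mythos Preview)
Your treatment of $(R_{13})$ and $(R_{14})$ is correct and matches the paper's proof essentially line for line.

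For $(R_{15})$, however, your plan contains a sign error that breaks the argument. You write $1/X \le p_B/(P-p_C)$ and then ``plug in $P - p_C \le (1-\mu^M)P$'' (which is indeed what $(R_{12})$ gives, since $p_C \ge \mu^M P$). But $P-p_C$ sits in the \emph{denominator}, so to upper-bound $1/X$ you would need a \emph{lower} bound on $P-p_C$, i.e.\ an upper bound on $p_C$ --- and there is none among the constraints. Hence the step $\frac{120\,p_B P}{\epsilon(P-p_C)} \le \frac{120\,P^{3/4}}{\epsilon(1-\mu^M)}$ is unjustified, and the detour through $(R_3)$, $(R_{12})$ and $(F_1)$ collapses.

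The paper's route is both simpler and avoids this issue entirely: once you have established $(R_{13})$, just use $X \ge 1$ together with $(R_7)$ ($t_A^* \le 24\,t_B^*$) to get
\[
\frac{5\,t_A^*}{\epsilon X t_B^*} \;\le\; \frac{5\cdot 24}{\epsilon\cdot 1} \;=\; \frac{120}{\epsilon},
\]
so $K \le \frac{120}{\epsilon}+1 \le \frac{121}{\epsilon}$ (using $\epsilon<1$), and therefore $Z \le KP \le \frac{121P}{\epsilon}$. No appeal to $(R_3)$, $(R_{12})$ or the largeness of $P$ from $(F_1)$ is needed for this bound.
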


\begin{proof}
Constraint $(R_{13})$ can be obtained directly from the definition of $X$:
\begin{align*}
1 \le X = \left\lceil \frac{P-p_C+1}{p_B} \right\rceil \le P \ . \quad\quad\quad (F_2)
\end{align*}

Constraint $(R_{14})$ can be derived from the definitions of $Y$ and $K$:
\begin{align*}
XK t^*_B \geq Yt^*_A &\geq XKt^*_B-t^*_A \quad\quad\quad\quad (F_4)\\
&= XKt^*_B\left(1-\frac{t^*_A}{XKt^*_B}\right) \\
&\geq XKt^*_B\left(1-\frac{\epsilon}{5}\right) \ .  \quad\quad (F_{3})
\end{align*}

Finally, constraint $(R_{15})$ can be obtained with:
\begin{align*}
K(P-P^{3/4}) \leq  Z &\leq KP~~~\quad\quad\quad\quad\quad\quad (F_5, R_1)\\
&\leq \left(\frac{5 t^*_A}{\epsilon X t^*_B}+1\right)P~~~  \quad \quad  (F_3) \\
&\leq \left(\frac{120}{\epsilon}+1\right)P  \quad\quad\quad  (R_7, R_{13}) \\
&\leq \frac{121P}{ \epsilon}  \ . \quad\quad\quad\quad\quad (\epsilon<1) \qedhere
\end{align*}
\end{proof}

\begin{lemma}\label{lem.doomed}
For a given task $\mathcal{T}$, let $s(\mathcal{T})$ denote its starting time in algorithm $\mathcal{A}$'s schedule and $e(\mathcal{T})$ its ending time. If all constraints in Table~\ref{Constraints} are satisfied, then algorithm $\mathcal{A}$'s schedule must follow the shape as shown in Figure~\ref{fig.sched}(a), i.e.:
\begin{itemize}
\item $s(D_i)=s(B_{i,1})=\cdots = s(B_{i,X}), \forall i \in [1,Z]$;
\item $s(C_i)=e(B_{i,1})=\cdots=e(B_{i,X}), \forall i \in [1,Z]$;
\item $s(D_i)=e(C_{i-1}), \forall i \in [2,Z]$;
\item $s(A_1)=e(C_Z)$;
\item $s(A_i)=e(A_{i-1}), \forall i \in [2,Y]$.
\end{itemize}
As a result, the makespan of $\mathcal{A}$ must satisfy: $T \geq Zt_B+Yt_A$.
\end{lemma}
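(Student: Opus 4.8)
The plan is to prove the lemma by induction on the layer index $i \in [1, Z]$, showing that at the start of each layer $\mathcal{A}$'s list-scheduling rule has no freedom and is forced into the claimed pattern, and then to handle the terminal chain $A_1, \dots, A_Y$ separately. Before the induction I would record two elementary consequences of the parameter definitions. From $F_2$, namely $X = \lceil (P - p_C + 1)/p_B \rceil$, we get $X p_B \ge P - p_C + 1$, hence
\[
X p_B + p_C > P ,
\]
so a $C$-task cannot execute simultaneously with all $X$ tasks $B_{i,1}, \dots, B_{i,X}$ of a layer. On the other hand $X p_B \le P - p_C + p_B$, so combining $p_C \ge \mu^M P$ ($R_{12}$), $p_B \le P^{3/4}$ ($R_3$), $p_D \le 4$ ($R_6$) with the lower bound $P \ge (120900/\epsilon)^4$ ($F_1$), we obtain $p_D + X p_B \le P$; that is, a $D$-task together with all $X$ tasks $B_{i,j}$ of a layer fits on the $P$ processors. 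Trivially $p_A \le P$ and $p_C \le P$, and $R_{13}$ ensures $X \ge 1$ so that every layer is non-empty.

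For the base case $i = 1$, the only tasks with no predecessor are $D_1$ and $B_{1,1}, \dots, B_{1,X}$, so these are exactly the tasks available at time $0$; since they fit together, list scheduling starts all of them at time $0$. Because $t_D \le t_B$ ($R_4$) and all $B_{1,j}$ share the same start time and duration, $D_1$ finishes at time $t_D \le t_B$ while every $B_{1,j}$ keeps running throughout $[0, t_B)$. When $D_1$ finishes, $C_1$ becomes available, but it is the only available task (every other not-yet-started task still depends on $C_1$ or a later task), and it cannot start before all $B_{1,j}$ complete since $X p_B + p_C > P$. Hence $C_1$ starts alone exactly at time $t_B$ and finishes at $t_B + t_C$, which establishes $s(D_1) = s(B_{1,j}) = 0$, $s(C_1) = e(B_{1,j}) = t_B$ and $e(C_1) = t_B + t_C$. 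For the inductive step I would assume the pattern holds through layer $i-1$, so that at time $e(C_{i-1})$ nothing is running and the freshly available tasks are exactly $D_i$ and $B_{i,1}, \dots, B_{i,X}$: their sole predecessor $C_{i-1}$ has just completed, whereas $C_i$, $D_{i+1}$, the $B_{i+1,j}$ and all $A_k$ still depend on an unfinished task. The base-case argument then repeats verbatim with $0$ replaced by $e(C_{i-1})$, yielding $s(D_i) = s(B_{i,j}) = e(C_{i-1})$, $s(C_i) = e(B_{i,j}) = e(C_{i-1}) + t_B$ and $e(C_i) = e(C_{i-1}) + t_B + t_C$.

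After $C_Z$ completes, the only newly available task is $A_1$ (there is no layer $Z+1$), so it starts immediately; thereafter each $A_k$ becomes available precisely when $A_{k-1}$ finishes and fits on $P$ processors, so $s(A_1) = e(C_Z)$ and $s(A_k) = e(A_{k-1})$ for $2 \le k \le Y$. Unrolling the layer recursion gives $e(C_Z) = Z(t_B + t_C) \ge Z t_B$, and hence $T \ge e(A_Y) = e(C_Z) + Y t_A \ge Z t_B + Y t_A$, which is the claimed makespan bound.

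The step I expect to be the only delicate one is the bookkeeping inside the induction: at each event (a task completion that re-triggers the list scheduler) one must argue both that the set of currently available tasks is exactly the claimed set --- a routine but careful trace through the precedence constraints of Figure~\ref{fig.instance} --- and that the scheduler is genuinely forced, i.e., it must start every available task that fits and it is unable to start $C_i$ until all the $B_{i,j}$ have released their processors. The synchronization within a layer (identical start time and identical duration $t_B$ for the $B$-tasks) is what keeps this clean, since it reduces the relevant events per layer to just ``$D_i$ finishes'' and ``all $B_{i,j}$ finish,'' and the degenerate case $t_D = t_B$ merely merges these two events without affecting anything. No individual computation is hard; the whole content is the combinatorial claim that the schedule is uniquely determined.
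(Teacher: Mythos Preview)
Your proof is correct and follows essentially the same approach as the paper: establish the two processor-count facts $Xp_B+p_D\le P$ and $Xp_B+p_C>P$ from the constraints, then argue by induction on layers that list scheduling is forced into the claimed pattern, and finish with the sequential $A$-chain. Your write-up is in fact a bit more explicit about the bookkeeping (which tasks are available at each event) than the paper's, but the argument is the same.
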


\begin{proof}
It is possible to simultaneously run all the tasks $B_{i,j}$'s and $D_i$ in layer $i$, as the total number of processors required is:
\begin{align*}
Xp_B+p_D &\leq \left(\frac{P-p_C+1}{p_B}+1\right)P_B+4 \quad\quad (F_2, R_6) \\
&= P-P_C + P_B + 5 \\
&\leq (1-\mu^M)P +P^{3/4}+5~~  \quad\quad\quad (R_3, R_{12}) \\
&< 0.8P+0.01P+5 <P \ . \quad\quad\quad (F_1, \mu_M > 0.2)
\end{align*}

However, it is not possible to run all the $B_{i,j}$'s and $C_i$ in parallel, as the number of processors required would be:
\begin{align*}
Xp_B+p_C &\geq \frac{P-p_C+1}{p_B}p_B + p_C \quad\quad (F_2) \\
&= P+1 \ .
\end{align*}

Therefore, given that algorithm $\mathcal{A}$ uses list scheduling to schedule the tasks, we get $s(D_1)=s(B_{1,1})=\cdots = s(B_{1,X})$. Furthermore, since $t_D \leq t_B$ ($R_4$), $C_1$ becomes available before the first layer of tasks in $\mathcal{T}_B$ finishes and will be launched as soon as the layer is done, which gives $s(C_1)=e(B_{1,1})=\cdots = e(B_{1,X})$. A direct induction shows that, using list scheduling, the same scenario would happen for all the $Z$ layers. Finally, the tasks in $\mathcal{T}_A$ are executed one after another after the completion of $C_Z$, so the schedule corresponds to the exactly one shown in Figure~\ref{fig.sched}(a).
Since there are $Z$ layers of tasks in $\mathcal{T}_B$ and $Y$ layers of tasks in $\mathcal{T}_A$, the makespan of algorithm $\mathcal{A}$ must satisfy $T \geq Zt_B+Yt_A$.
\end{proof}

\begin{theorem}
\label{lem.LB}
Given a model $M \in \{\ROO, \COM, \AMD\}$, there is no online list scheduling algorithm respecting Definition~\ref{localDet} with a competitive ratio strictly less than $\frac{1}{\mu^M}$.
\end{theorem}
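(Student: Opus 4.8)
The plan is a proof by contradiction that simply stitches together the local analysis of Step~1 and the global construction of Step~2 (Section~\ref{sec:global}). Fix a model $M \in \{\ROO, \COM, \AMD\}$ and suppose some online list scheduling algorithm $\mathcal{A}$ respecting Definition~\ref{localDet} had a competitive ratio strictly below $\frac{1}{\mu^M}$. Then, shrinking $\epsilon$ if needed so that $0<\epsilon<1$, $\mathcal{A}$ is $\left(\frac{1}{\mu^M}-4\epsilon\right)$-competitive, and since $P$ may be taken arbitrarily large we may assume $P \ge \left(\frac{120900}{\epsilon}\right)^4$, i.e. ($F_1$). Because $\mathcal{A}$ makes deterministic local decisions, its allocation to any task depends only on that task's parameters, so we may legitimately tailor the instance to $\mathcal{A}$: first apply Theorem~\ref{th.mainLocal} to obtain tasks $A,B,C,D$ (satisfying $R_1$–$R_{12}$) and a reference algorithm $\mathcal{A}^*$ with $\frac{a_B}{a_B^*(1-\mu^M)}+\frac{t_A}{t^*_A} \ge \frac{1}{\mu^M}-\epsilon$, then build the layered graph of Figure~\ref{fig.instance} with $X,Y,Z,K$ as in ($F_2$)–($F_5$); Lemma~\ref{lem.last_constraints} guarantees the remaining constraints $R_{13}$–$R_{15}$.

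Next I would pin down the two makespans on this instance. For $\mathcal{A}$, Lemma~\ref{lem.doomed} forces the schedule of Figure~\ref{fig.sched}(a): the $C/D$ spine serializes the $Z$ layers of $B$-tasks, which are then followed by the chain of $Y$ copies of $A$, yielding $T \ge Z t_B + Y t_A$. For $\mathcal{A}^*$ I would exhibit the schedule of Figure~\ref{fig.sched}(b): run the entire $C/D$ spine first (a chain of total length $Z(t^*_C+t^*_D) \le \frac{2}{P}$ by $R_5$, $R_{11}$, $R_{15}$), after which all $B$-tasks and the head of the $A$-chain are released; then run the $Y$-long $A$-chain on $p^*_A$ processors while the $XZ = XK(P-p^*_A)$ unit-processor $B$-tasks run on the remaining $P-p^*_A$ processors, which packs perfectly into $XK$ synchronized rounds of length $t^*_B$, and by the upper bound in $R_{14}$ the $A$-chain finishes within $Y t^*_A \le XK t^*_B$. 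Hence $T^{\opt} \le T^* \le XK t^*_B + \frac{2}{P}$, and it remains to lower bound $\frac{T}{T^*}$.

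Then I would estimate $\frac{Z t_B + Y t_A}{XK t^*_B}$ term by term. Using $Z = K(P-p^*_A)$ (from $F_5$) and $R_{14}$, the $A$-part gives $\frac{Y t_A}{XK t^*_B} = \frac{Y t^*_A}{XK t^*_B}\cdot\frac{t_A}{t^*_A} \ge \left(1-\tfrac{\epsilon}{5}\right)\frac{t_A}{t^*_A}$. For the $B$-part, replacing the ceiling in $X$ via ($F_2$) and using $a_B = p_B t_B$ and $a^*_B = t^*_B$ ($R_8$) gives $\frac{Z t_B}{XK t^*_B} \ge \frac{a_B}{a^*_B}\cdot\frac{P-p^*_A}{P-p_C+1+p_B}$; then $R_1$ ($p^*_A \le P^{3/4}$), $R_3$ ($p_B \le P^{3/4}$) and $R_{12}$ ($p_C \ge \mu^M P$) make this at least $\frac{a_B}{a^*_B}\cdot\frac{1-P^{-1/4}}{(1-\mu^M)+P^{-1}+P^{-1/4}}$, which is within $O(P^{-1/4})$ of $\frac{1}{1-\mu^M}\cdot\frac{a_B}{a^*_B}$. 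Absorbing these corrections with the a priori bounds $\frac{t_A}{t^*_A}\le 5$ ($R_9$), $\frac{a_B}{a^*_B}\le 5$ ($R_{10}$), and $\frac{2}{P} \le \epsilon\, XK t^*_B$ (via $R_2$), and invoking ($F_1$) to bound the total residual by $\epsilon$, I obtain $\frac{T}{T^*}\ge \frac{a_B}{a^*_B(1-\mu^M)}+\frac{t_A}{t^*_A}-2\epsilon$. Combined with Theorem~\ref{th.mainLocal}, this gives $\frac{T}{T^{\opt}} \ge \frac{T}{T^*} \ge \frac{1}{\mu^M}-3\epsilon > \frac{1}{\mu^M}-4\epsilon$, contradicting the assumed competitiveness of $\mathcal{A}$; hence no such $\mathcal{A}$ exists.

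The main obstacle is not conceptual but the error-propagation bookkeeping in the last step: every floor/ceiling in $X,Y,Z,K$, every $P^{3/4}$-sized allocation, and the tiny $C/D$ tasks each contribute a lower-order slack, and one must check that the explicit constant $120900$ in ($F_1$) is large enough that all these slacks together cost at most $\epsilon$ — this is exactly what the constraint list of Table~\ref{Constraints} is engineered to ensure. A secondary point I would verify carefully is that $\mathcal{A}^*$'s $B$-tasks genuinely pack without fragmentation (equal-length rounds on exactly $P-p^*_A$ processors, with the $A$-chain fitting inside by $R_{14}$), so that $T^*$ is truly $XK t^*_B$ up to the negligible $\frac{2}{P}$ spine.
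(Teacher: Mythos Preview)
Your proposal is correct and follows essentially the same route as the paper: a contradiction argument that invokes Theorem~\ref{th.mainLocal} and Lemma~\ref{lem.last_constraints} to set up the instance, Lemma~\ref{lem.doomed} to lower-bound $T$, the explicit schedule of Figure~\ref{fig.sched}(b) to upper-bound $T^*\le XKt_B^*+\tfrac{2}{P}$, and then a term-by-term estimate of $\tfrac{Zt_B+Yt_A}{T^*}$ that converges to $\tfrac{a_B}{a_B^*(1-\mu^M)}+\tfrac{t_A}{t_A^*}$ up to a loss of $2\epsilon$. Your identification of the error-propagation bookkeeping as the only real work is exactly right; the paper discharges it via one long chain of inequalities using precisely the constraints $R_1$--$R_{15}$ and the constant in $F_1$ you list.
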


\begin{proof}
We prove the theorem by contradiction. Specifically, we assume that there exists such an algorithm $\mathcal{A}$ with a competitive ratio of $\frac{1}{\mu^M}-4\epsilon$ for some $0< \epsilon < 1$, as also assumed in Theorem \ref{th.mainLocal}. We will then show, using the constructed task graph, that the makespan of $\mathcal{A}$ satisfies $\frac{T}{T^*} \geq \frac{1}{\mu^M}-3\epsilon$, which leads to a contradiction, hence suggesting that no such algorithm should exist.

We first bound the makespan $T^*$ of algorithm $\mathcal{A}^*$, assuming that it follows the schedule of Figure~\ref{fig.sched}(b) by first running all the $C_i$'s and $D_i$'s before running the $A_i$'s sequentially while at the same time using one processor to run each of the $B_{i,j}$'s. As there are $XZ$ tasks of $B_{i,j}$'s, executing them all on $P-p^*_A$ processors takes time $\left\lceil \frac{XZ}{P-p^*_A} \right\rceil t_B^* = \left\lceil \frac{XK(P-p^*_A)}{P-p^*_A} \right\rceil t_B^* = XKt_B^*$~$(F_5)$, whereas executing all the $A_i$'s takes time $Yt^*_A \leq XKt^*_B$ ($R_{14})$. Therefore, $T^*$ should satisfy:
\begin{align*}
T^* &\leq Z(t^*_C+t^*_D)+XKt^*_B \\
&\leq \frac{121P}{\epsilon}\left(\frac{\epsilon}{121P^2}+\frac{\epsilon}{121P^2} \right)+XKt^*_B \quad\quad (R_5,R_{11},R_{15}) \\
&\leq \frac{2}{P}+XKt^*_B  \ .
\end{align*}

Now, using the result of Lemma~\ref{lem.doomed}, we get:
\begin{align*}
\frac{T}{T^*} &\geq \frac{Zt_B+Yt_A}{\frac{2}{P}+XKt_B^*} \\
&\geq \frac{K(P-P^{3/4})t_B}{2+ X K t^*_B}+\frac{Y t_A}{\frac{2}{P}+Y t^*_A/\left(1-\frac{\epsilon}{5}\right)}  \quad\quad\quad\quad\quad\quad\quad (R_{14},R_{15}) \\
&= \frac{(P-P^{3/4})t_B}{\frac{2}{K}+ X t^*_B}+\frac{t_A}{t^*_A/\left(1-\frac{\epsilon}{5}\right)+\frac{2}{YP}}  \\
&\geq \frac{(P-P^{3/4})t_B}{2+t^*_B\left(\frac{P(1-\mu^M)+1}{p_B}+1\right)}+\frac{t_A}{t^*_A}\cdot \frac{1-\frac{\epsilon}{5}}{1+\frac{2}{PXKt^*_B}(1-\frac{\epsilon}{5})} \quad (F_2, R_{12}, F_4) \\
&\geq \frac{(P-P^{3/4})t_B}{2+2t^*_B+\frac{ P t_B^*(1-\mu^M)}{p_B}}+\frac{t_A}{t^*_A}\cdot\frac{1-\frac{\epsilon}{5}}{1+\frac{2}{Pt^*_B}}  \\
&\geq \frac{t_B p_B}{\frac{p_B(2+2t^*_B)}{P-P^{3/4}}+\frac{P t^*_B(1-\mu^M)}{P-P^{3/4}}}+\frac{t_A}{t^*_A}\cdot\frac{1-\frac{\epsilon}{5}}{1+\frac{20}{P}}  \quad\quad\quad\quad\quad\quad (R_{2})\\
&\geq \frac{a_B}{\frac{2+2t^*_B}{P^{1/4}-1}+\frac{t^*_B(1-\mu^M)}{1-P^{-1/4}}}+\frac{t_A}{t^*_A}\cdot\frac{1-\frac{\epsilon}{5}}{1+\frac{20}{P}}  \quad\quad\quad\quad\quad\quad\quad\quad (R_{3})\\
&\geq \frac{a_B}{(4+4t^*_B)P^{-1/4}+t^*_B(1-\mu^M)(1+2P^{-1/4})}+\frac{t_A}{t^*_A}\cdot\frac{1-\frac{\epsilon}{5}}{1+\frac{20}{P}} \ .
\end{align*}

The last step above assumes $\frac{1}{P^{1/4}-1}\leq 2P^{-1/4}$ and $\frac{1}{1-P^{-1/4}}\leq 1+2P^{-1/4}$, both of which are true if $P^{1/4}\ge 2$, i.e., $P\geq 16$. We conclude with the following derivations:
\begin{align*}
\frac{T}{T^*} &\geq \frac{a_B}{t^*_B(1-\mu^M)+ (4+6t^*_B) P^{-1/4}}+\frac{t_A}{t^*_A}\cdot\frac{1-\frac{\epsilon}{5}}{1+\frac{20}{P}}   \\
&\geq \frac{a_B}{a^*_B(1-\mu^M)+ 604P^{-1/4}}+\frac{t_A}{t^*_A}\cdot\frac{1-\frac{\epsilon}{5}}{1+\frac{20}{P}}   \quad\quad\quad\quad\quad (R_2,R_8) \\
&= \frac{a_B}{a^*_B(1-\mu^M)}\cdot \frac{1}{1+\frac{604 P^{-1/4}}{a^*_B(1-\mu^M)}}+\frac{t_A}{t^*_A}\cdot \frac{1-\frac{\epsilon}{5}}{1+\frac{20}{P}}  \\
&\geq \frac{a_B}{a^*_B(1-\mu^M)}\cdot \frac{1}{1+12080P^{-1/4}}+\frac{t_A}{t^*_A}\cdot \frac{1-\frac{\epsilon}{5}}{1+\frac{20}{P}}   \quad\quad\quad (R_2, \mu^M \leq 0.5) \\
&\geq \frac{a_B}{a^*_B(1-\mu^M)}\left(1-12080P^{-1/4} \right)+\frac{t_A}{t^*_A}\left(1-\frac{20}{P} \right)\left(1-\frac{\epsilon}{5}\right)   \\
&\geq \frac{a_B}{a^*_B(1-\mu^M)}+\frac{t_A}{t^*_A}-120800P^{-1/4}-\frac{100}{P}-\epsilon  \quad\quad\quad (R_9, R_{10}, \mu^M \leq 0.5) \\
&\geq \frac{a_B}{a^*_B(1-\mu^M)}+\frac{t_A}{t^*_A}-120800P^{-1/4}-100P^{-1/4}-\epsilon   \\
&= \frac{a_B}{a^*_B(1-\mu^M)}+\frac{t_A}{t^*_A}-120900P^{-1/4}-\epsilon  \\
&\geq \frac{a_B}{a^*_B(1-\mu^M)}+\frac{t_A}{t^*_A}-2\epsilon  \quad\quad\quad\quad\quad\quad\quad\quad\quad\quad\quad\quad (F_1) \\
&\geq \frac{1}{\mu^M}-3\epsilon \nonumber \ .
\end{align*}

The last step above applies Theorem~\ref{th.mainLocal}, and the result proves this theorem and the optimal competitive ratio of our algorithm for these speedup models.
\end{proof}

\emph{Remarks.} Since the Amdahl's model is a special case of the general model, its lower bound also applies to the general model.

\section{A Lower Bound of Any Deterministic Online Algorithm for Arbitrary Speedup Model}
\label{sec.lower}

So far, we have focused on the general speedup model of Equation~\eqref{eq.exec_time} and its three special cases. In this section, we show that the competitive ratio of any deterministic online algorithm (including ours) can be unbounded under an \emph{arbitrary} speedup model\footnote{Under an arbitrary speedup model, the execution time $t(p)$ of a task can take any arbitrary function of its processor allocation $p$.}.

\begin{theorem}\label{arbitrary_lowerbound}
Any deterministic online algorithm is at least $\Omega(\ln(D))$-competitive for scheduling moldable task graphs under an arbitrary speedup model, where $D$ denotes the number of tasks along the longest (critical) path of the graph.
\end{theorem}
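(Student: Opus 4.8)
The plan is to use an \emph{adaptive adversary}. Since the algorithm is deterministic, the adversary can simulate it and reveal the task graph in $D$ successive rounds, where the tasks released in round $k$ — together with their arbitrary speedup functions — depend on the processor allocations the algorithm has committed to in rounds $1,\dots,k-1$. The backbone of the instance is a chain $c_1 \to c_2 \to \cdots \to c_D$ that will be the longest path of the graph (hence the ``$D$'' in the statement is exactly its number of tasks); each $c_k$ becomes available only when $c_{k-1}$ completes, so the adversary genuinely observes the algorithm's past behaviour before fixing the future. Hanging off the backbone, at each round, is a bag of ``filler'' tasks whose arbitrary speedup functions are of an \emph{all-or-nothing} type: a filler task runs in a small target time if and only if it is given at least a prescribed number of processors, and runs prohibitively slowly otherwise. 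This forces the algorithm's hand on the filler while still leaving the \emph{aggregate} processor demand of all the filler small enough to be satisfied concurrently by an offline schedule that knows the whole graph in advance.

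First I would make the per-round trap precise. In round $k$ the algorithm has only a fraction of the $P$ processors free, because the rest are still busy executing slow filler that it was induced to launch in earlier rounds; I would choose the round-$k$ parameters (the number of filler tasks, their processor thresholds, and their target times) so that this free fraction shrinks in a controlled, roughly $\Theta(1/k)$, fashion from round to round, and so that the work round $k$ places on $c_k$ therefore costs the algorithm time $\Theta(1/k)$ instead of the $\Theta(1/k^2)$-type cost it would incur with all $P$ processors available. Summing the per-round costs then gives an online makespan of order $\sum_{k=1}^{D}\tfrac1k = \Theta(\ln D)$. In parallel I would bound $T^{\opt}$ from above by exhibiting one concrete offline schedule: knowing the entire graph, the optimal scheduler gives every filler task exactly its threshold allocation, packs the filler of all rounds to run essentially concurrently (the parameters are arranged so the total fits in $P$), and reserves a thin sliver of processors to walk down the backbone, achieving makespan $O(1)$ — dominated by a single round rather than by the sum. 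Dividing the two bounds yields $T/T^{\opt} = \Omega(\ln D)$, and since each filler task uses only a single (threshold-type) function, the instance lies within the arbitrary speedup model.

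The main obstacle — and where the argument needs the most care — is \emph{decoupling} these two bounds. I must simultaneously (i) force the online algorithm into the slow per-round behaviour, which means the filler of round $k$ has to be unavoidably expensive given what the algorithm has already committed, using only the fact that it is non-clairvoyant about rounds $\ge k$; and (ii) keep the offline optimum cheap, which requires the very same filler, viewed globally, to be schedulable almost for free. These pull in opposite directions: the backbone tasks must be light enough that the longest path stays $\Theta(D)$ and the filler is revealed quickly, yet the revelation must still arrive ``soon enough to surprise'' the online algorithm (i.e.\ before it can repair earlier over- or under-commitments). Pinning down the quantitative parameters — the shrink rate of the free fraction, the target times, the thresholds, and the required relation between $P$ and $D$ — so that the resulting gap is \emph{exactly} logarithmic, neither asymptotically smaller nor, via a cruder construction, polynomially large, is the crux of the proof; the rest is bookkeeping on the two makespan estimates.
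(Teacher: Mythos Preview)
Your plan diverges from the paper's construction in a way that creates a real gap. The paper does not use a single identifiable backbone with adaptively chosen threshold filler; it builds $2^{K}-1$ \emph{independent} chains of lengths $1,2,\dots,K=D$ (with $2^{K-i}$ chains of length $i$), and every task in the instance is \emph{identical}, with the smooth speedup $t(p)=1/(\lg p+1)$, on $P=K\cdot 2^{K-1}$ processors. The adversary never adapts speedup functions. Its only leverage is indistinguishability: since all tasks look the same, it may decide \emph{after the fact} which chains are the long ones, always labelling the chains the algorithm has advanced furthest as the short (hence already finished) ones. An area-counting lemma then shows that pushing all still-alive chains through their $i$-th task costs at least $1/(\ell+i)$ time, summing to $\Theta(\ln K)$; meanwhile OPT, knowing each chain's length, gives every length-$i$ chain $2^{i-1}$ processors and finishes in time $1$.

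Your mechanism, by contrast, relies on the online algorithm being ``induced to launch'' filler in a way that ties up processors across many rounds, but you never say what does the inducing, and in the clairvoyant model there is no obvious candidate. The algorithm sees each filler task's threshold the moment the task is released, so it will never start a filler below threshold; and because your filler carries a different speedup signature from the backbone, the algorithm can \emph{recognise} the backbone, run each $c_k$ with all currently free processors, and queue every filler task to the end. Since you also stipulate that OPT can pack all filler concurrently within $P$ processors, that very packing is equally available to the online algorithm once the backbone is done, at the same $O(1)$ cost --- so the online makespan stays $O(1)$ and the trap never springs. The missing idea is precisely the one the paper supplies: make the long chain \emph{unidentifiable} among many chains of identical-looking tasks. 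Without that (or a concrete replacement that survives the backbone-prioritising strategy above), the outline does not deliver the $\Omega(\ln D)$ separation.
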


\begin{proof}
We fix an arbitrary integer $\ell>1$ and set $K=2^\ell$. The instance consists of $n=2^K-1$ independent linear task chains organized in groups. Specifically, for any $i \in [1,K]$, group $i$ contains $2^{K-i}$ linear chains, each with exactly $i$ tasks. Thus, the number of tasks along the longest path of the graph is given by $D = K$.  Figure~\ref{fig:work3} shows such an instance for $\ell=2, K = 4$ and $n = 15$.
All tasks in the graph are identical, with an execution time function $t(p)=\frac{1}{\lg(p)+1}$. Here, $\lg$ denotes logarithm to the base 2.
We set the total number of processors to be $P=K\cdot 2^{K-1}$.

\begin{figure}[t]
\begin{center}
\scalebox{0.7}{%
\begin{tikzpicture}[xscale=0.4,yscale=0.4]
\small
\draw[fill=red!20] (3,3) circle (30pt) node(A){15(1)};
\draw[fill=red!20] (8,3) circle (30pt) node(B){15(2)};
\draw[fill=red!20] (13,3) circle (30pt) node(C){15(3)};
\draw[fill=red!20] (18,3) circle (30pt) node(D){15(4)};

\draw[fill=green!20] (3,6) circle (30pt) node(E){14(1)};
\draw[fill=green!20] (8,6) circle (30pt) node(F){14(2)};
\draw[fill=green!20] (13,6) circle (30pt) node(G){14(3)};

\draw[fill=green!20] (3,9) circle (30pt) node(H){13(1)};
\draw[fill=green!20] (8,9) circle (30pt) node(I){13(2)};
\draw[fill=green!20] (13,9) circle (30pt) node(J){13(3)};

\draw[fill=yellow!20] (3,12) circle (30pt) node(K){11(1)};
\draw[fill=yellow!20] (8,12) circle (30pt) node(L){11(2)};
\draw[fill=yellow!20] (13,12) circle (30pt) node(M){12(1)};
\draw[fill=yellow!20] (18,12) circle (30pt) node(N){12(2)};

\draw[fill=yellow!20] (3,15) circle (30pt) node(O){9(1)};
\draw[fill=yellow!20] (8,15) circle (30pt) node(P){9(2)};
\draw[fill=yellow!20] (13,15) circle (30pt) node(Q){10(1)};
\draw[fill=yellow!20] (18,15) circle (30pt) node(R){10(2)};

\draw[fill=blue!20] (3,18) circle (30pt) node{5};
\draw[fill=blue!20] (8,18) circle (30pt) node{6};
\draw[fill=blue!20] (13,18) circle (30pt) node{7};
\draw[fill=blue!20] (18,18) circle (30pt) node{8};

\draw[fill=blue!20] (3,21) circle (30pt) node{1};
\draw[fill=blue!20] (8,21) circle (30pt) node{2};
\draw[fill=blue!20] (13,21) circle (30pt) node{3};
\draw[fill=blue!20] (18,21) circle (30pt) node{4};

\draw [decorate, decoration = {brace, amplitude=6pt}] (0,17) --  (0,22) node [black,midway,xshift=-1cm] {\footnotesize Group $1$};
\draw [decorate, decoration = {brace, amplitude=6pt}] (0,11) --  (0,16) node [black,midway,xshift=-1cm] {\footnotesize Group $2$};
\draw [decorate, decoration = {brace, amplitude=6pt}] (0,5) --  (0,10) node [black,midway,xshift=-1cm] {\footnotesize Group $3$};
\draw [decorate, decoration = {brace, amplitude=4pt}] (0,2) --  (0,4) node [black,midway,xshift=-1cm] {\footnotesize Group $4$};

\draw [->] (4.2,3) -- (6.8,3);
\draw [->] (4.2,6) -- (6.8,6);
\draw [->] (4.2,9) -- (6.8,9);
\draw [->] (4.2,12) -- (6.8,12);
\draw [->] (4.2,15) -- (6.8,15);
\draw [->] (9.2,3) -- (11.8,3);
\draw [->] (9.2,6) -- (11.8,6);
\draw [->] (9.2,9) -- (11.8,9);
\draw [->] (14.2,3) -- (16.8,3);
\draw [->] (14.2,12) -- (16.8,12);
\draw [->] (14.2,15) -- (16.8,15);

\end{tikzpicture}
}
\end{center}
\caption{A lower bound instance in Theorem \ref{arbitrary_lowerbound} with $\ell=2$, $K = 4$, and $n = 15$ linear task chains. Each circle represents a task and the number inside each circle indicates the ID of the linear chain the task is in (and the number in the parenthesis indicates the task's position in that linear chain). }\label{fig:work3}
\end{figure}
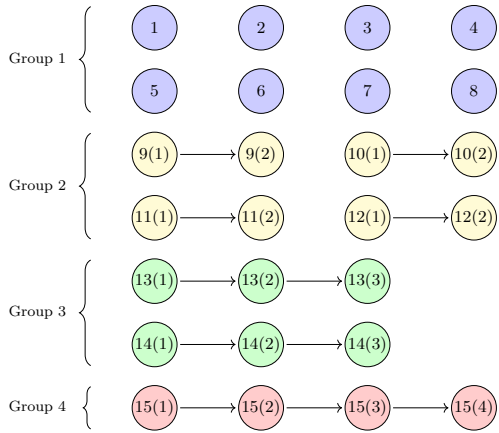

We show that the optimal offline algorithm completes the above instance with a makespan at most 1, whereas any deterministic online algorithm may produce a makespan at least $\ln(K)-\ln(\ell)-\frac{1}{\ell}$, thus proving the result.

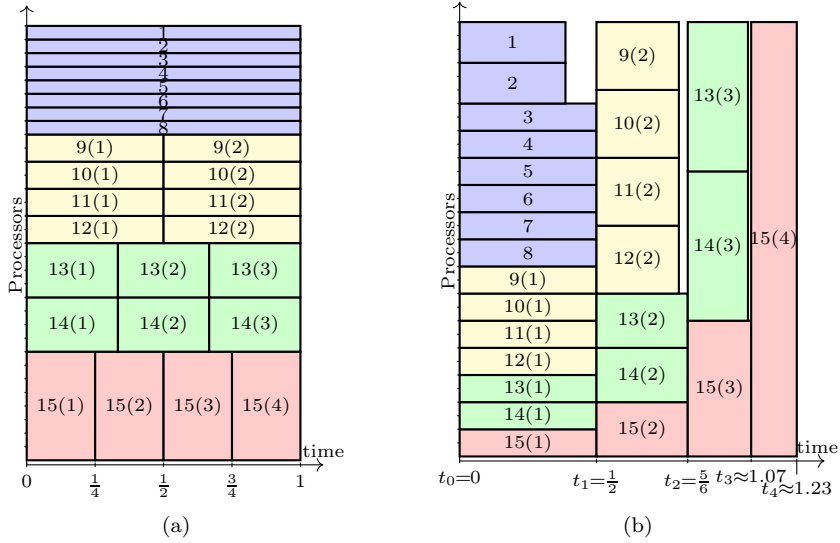
\begin{figure*}[t]
\centering
\subfloat[][]{
\begin{tikzpicture}[xscale=.3,yscale=0.3]
\scriptsize
\draw[->] (0,0) -- (13,0) node[pos=1,above] {time} ;
\draw[->] (0,0) -- (0,20.3) ; 
\draw (-.5,9.7) node[rotate=90] {Processors} ;
\draw[-]  (-.1,0.8) -- (.1,0.8) ;
\draw[-]  (-.1,1.4) -- (.1,1.4) ;
\draw[-]  (-.1,2) -- (.1,2) ;
\draw[-]  (-.1,2.6) -- (.1,2.6) ;
\draw[-]  (-.1,3.2) -- (.1,3.2) ;
\draw[-]  (-.1,3.8) -- (.1,3.8) ;
\draw[-]  (-.1,4.4) -- (.1,4.4) ;
\draw[-]  (-.1,5) -- (.1,5) ;
\draw[-]  (-.1,5.6) -- (.1,5.6) ;
\draw[-]  (-.1,6.2) -- (.1,6.2) ;
\draw[-]  (-.1,6.8) -- (.1,6.8) ;
\draw[-]  (-.1,7.4) -- (.1,7.4) ;
\draw[-]  (-.1,8) -- (.1,8) ;
\draw[-]  (-.1,8.6) -- (.1,8.6) ;
\draw[-]  (-.1,9.2) -- (.1,9.2) ;
\draw[-]  (-.1,9.8) -- (.1,9.8) ;
\draw[-]  (-.1,10.4) -- (.1,10.4) ;
\draw[-]  (-.1,11) -- (.1,11) ;
\draw[-]  (-.1,11.6) -- (.1,11.6) ;
\draw[-]  (-.1,12.2) -- (.1,12.2) ;
\draw[-]  (-.1,12.8) -- (.1,12.8) ;
\draw[-]  (-.1,13.4) -- (.1,13.4) ;
\draw[-]  (-.1,14) -- (.1,14) ;
\draw[-]  (-.1,14.6) -- (.1,14.6) ;
\draw[-]  (-.1,15.2) -- (.1,15.2) ;
\draw[-]  (-.1,15.8) -- (.1,15.8) ;
\draw[-]  (-.1,16.4) -- (.1,16.4) ;
\draw[-]  (-.1,17) -- (.1,17) ;
\draw[-]  (-.1,17.6) -- (.1,17.6) ;
\draw[-]  (-.1,18.2) -- (.1,18.2) ;
\draw[-]  (-.1,18.8) -- (.1,18.8) ;
\draw[-]  (-.1,19.4) -- (.1,19.4) ;

\draw[draw=black,thick,fill=red!20]  (0, 0.2) rectangle node {15(1)} (3, 5) ;
\draw[draw=black,thick,fill=red!20]  (3, 0.2) rectangle node {15(2)} (6, 5) ;
\draw[draw=black,thick,fill=red!20]  (6, 0.2) rectangle node {15(3)} (9, 5) ;
\draw[draw=black,thick,fill=red!20]  (9, 0.2) rectangle node {15(4)} (12, 5) ;

\draw[draw=black,thick,fill=green!20]  (0, 5) rectangle node {14(1)} (4, 7.4) ;
\draw[draw=black,thick,fill=green!20]  (4, 5) rectangle node {14(2)} (8, 7.4) ;
\draw[draw=black,thick,fill=green!20]  (8, 5) rectangle node {14(3)} (12, 7.4) ;
\draw[draw=black,thick,fill=green!20]  (0, 7.4) rectangle node {13(1)} (4, 9.8) ;
\draw[draw=black,thick,fill=green!20]  (4, 7.4) rectangle node {13(2)} (8, 9.8) ;
\draw[draw=black,thick,fill=green!20]  (8, 7.4) rectangle node {13(3)} (12, 9.8) ;

\draw[draw=black,thick,fill=yellow!20]  (0, 9.8) rectangle node {12(1)} (6, 11) ;
\draw[draw=black,thick,fill=yellow!20]  (6, 9.8) rectangle node {12(2)} (12, 11) ;
\draw[draw=black,thick,fill=yellow!20]  (0, 11) rectangle node {11(1)} (6, 12.2) ;
\draw[draw=black,thick,fill=yellow!20]  (6, 11) rectangle node {11(2)} (12, 12.2) ;
\draw[draw=black,thick,fill=yellow!20]  (0, 12.2) rectangle node {10(1)} (6, 13.4) ;
\draw[draw=black,thick,fill=yellow!20]  (6, 12.2) rectangle node {10(2)} (12, 13.4) ;
\draw[draw=black,thick,fill=yellow!20]  (0, 13.4) rectangle node {9(1)} (6, 14.6) ;
\draw[draw=black,thick,fill=yellow!20]  (6, 13.4) rectangle node {9(2)} (12, 14.6) ;

\draw[draw=black,thick,fill=blue!20]  (0, 14.6) rectangle node {8} (12, 15.2) ;
\draw[draw=black,thick,fill=blue!20]  (0, 15.2) rectangle node {7} (12, 15.8) ;
\draw[draw=black,thick,fill=blue!20]  (0, 15.8) rectangle node {6} (12, 16.4) ;
\draw[draw=black,thick,fill=blue!20]  (0, 16.4) rectangle node {5} (12, 17) ;
\draw[draw=black,thick,fill=blue!20]  (0, 17) rectangle node {4} (12, 17.6) ;
\draw[draw=black,thick,fill=blue!20]  (0, 17.6) rectangle node {3} (12, 18.2) ;
\draw[draw=black,thick,fill=blue!20]  (0, 18.2) rectangle node {2} (12, 18.8) ;
\draw[draw=black,thick,fill=blue!20]  (0, 18.8) rectangle node {1} (12, 19.4) ;

\draw[-]  (0,-.1) node[below] {\scriptsize{$0$}} -- (0,.1) ;
\draw[-]  (3,-.1) node[below] {\scriptsize{$\frac{1}{4}$}} -- (3,.1) ;
\draw[-]  (6,-.1) node[below] {\scriptsize{$\frac{1}{2}$}} -- (6,.1) ;
\draw[-]  (9,-.1) node[below] {\scriptsize{$\frac{3}{4}$}} -- (9,.1) ;
\draw[-]  (12,-.1) node[below] {\scriptsize{$1$}} -- (12,.1) ;
\end{tikzpicture}
}
\hspace{0.7cm}
\subfloat[][]{
\begin{tikzpicture}[xscale=0.3,yscale=.3]
\scriptsize
\draw[->] (0,0) -- (16,0) node[pos=.98,above] {time} ;
\draw[->] (0,0) -- (0,20.3) ; 
\draw (-.5,9.7) node[rotate=90] {Processors} ;
\draw[-]  (-.1,0.8) -- (.1,0.8) ;
\draw[-]  (-.1,1.4) -- (.1,1.4) ;
\draw[-]  (-.1,2) -- (.1,2) ;
\draw[-]  (-.1,2.6) -- (.1,2.6) ;
\draw[-]  (-.1,3.2) -- (.1,3.2) ;
\draw[-]  (-.1,3.8) -- (.1,3.8) ;
\draw[-]  (-.1,4.4) -- (.1,4.4) ;
\draw[-]  (-.1,5) -- (.1,5) ;
\draw[-]  (-.1,5.6) -- (.1,5.6) ;
\draw[-]  (-.1,6.2) -- (.1,6.2) ;
\draw[-]  (-.1,6.8) -- (.1,6.8) ;
\draw[-]  (-.1,7.4) -- (.1,7.4) ;
\draw[-]  (-.1,8) -- (.1,8) ;
\draw[-]  (-.1,8.6) -- (.1,8.6) ;
\draw[-]  (-.1,9.2) -- (.1,9.2) ;
\draw[-]  (-.1,9.8) -- (.1,9.8) ;
\draw[-]  (-.1,10.4) -- (.1,10.4) ;
\draw[-]  (-.1,11) -- (.1,11) ;
\draw[-]  (-.1,11.6) -- (.1,11.6) ;
\draw[-]  (-.1,12.2) -- (.1,12.2) ;
\draw[-]  (-.1,12.8) -- (.1,12.8) ;
\draw[-]  (-.1,13.4) -- (.1,13.4) ;
\draw[-]  (-.1,14) -- (.1,14) ;
\draw[-]  (-.1,14.6) -- (.1,14.6) ;
\draw[-]  (-.1,15.2) -- (.1,15.2) ;
\draw[-]  (-.1,15.8) -- (.1,15.8) ;
\draw[-]  (-.1,16.4) -- (.1,16.4) ;
\draw[-]  (-.1,17) -- (.1,17) ;
\draw[-]  (-.1,17.6) -- (.1,17.6) ;
\draw[-]  (-.1,18.2) -- (.1,18.2) ;
\draw[-]  (-.1,18.8) -- (.1,18.8) ;
\draw[-]  (-.1,19.4) -- (.1,19.4) ;

\draw[draw=black,thick,fill=red!20]  (0, 0.2) rectangle node {15(1)} (6, 1.4) ;
\draw[draw=black,thick,fill=green!20]  (0, 1.4) rectangle node {14(1)} (6, 2.6) ;
\draw[draw=black,thick,fill=green!20]  (0, 2.6) rectangle node {13(1)} (6, 3.8) ;
\draw[draw=black,thick,fill=yellow!20]  (0, 3.8) rectangle node {12(1)} (6, 5) ;
\draw[draw=black,thick,fill=yellow!20]  (0, 5) rectangle node {11(1)} (6, 6.2) ;
\draw[draw=black,thick,fill=yellow!20]  (0, 6.2) rectangle node {10(1)} (6, 7.4) ;
\draw[draw=black,thick,fill=yellow!20]  (0, 7.4) rectangle node {9(1)} (6, 8.6) ;
\draw[draw=black,thick,fill=blue!20]  (0, 8.6) rectangle node {8} (6, 9.8) ;
\draw[draw=black,thick,fill=blue!20]  (0, 9.8) rectangle node {7} (6, 11) ;
\draw[draw=black,thick,fill=blue!20]  (0, 11) rectangle node {6} (6, 12.2) ;
\draw[draw=black,thick,fill=blue!20]  (0, 12.2) rectangle node {5} (6, 13.4) ;
\draw[draw=black,thick,fill=blue!20]  (0, 13.4) rectangle node {4} (6, 14.6) ;
\draw[draw=black,thick,fill=blue!20]  (0, 14.6) rectangle node {3} (6, 15.8) ;
\draw[draw=black,thick,fill=blue!20]  (0, 15.8) rectangle node {2} (4.64, 17.6) ;
\draw[draw=black,thick,fill=blue!20]  (0, 17.6) rectangle node {1} (4.64, 19.4) ;

\draw[draw=black,thick,fill=red!20]  (6, 0.2) rectangle node {15(2)} (10, 2.6) ;
\draw[draw=black,thick,fill=green!20]  (6, 2.6) rectangle node {14(2)} (10, 5) ;
\draw[draw=black,thick,fill=green!20]  (6, 5) rectangle node {13(2)} (10, 7.4) ;
\draw[draw=black,thick,fill=yellow!20]  (6, 7.4) rectangle node {12(2)} (9.61, 10.4) ;
\draw[draw=black,thick,fill=yellow!20]  (6, 10.4) rectangle node {11(2)} (9.61, 13.4) ;
\draw[draw=black,thick,fill=yellow!20]  (6, 13.4) rectangle node {10(2)} (9.61, 16.4) ;
\draw[draw=black,thick,fill=yellow!20]  (6, 16.4) rectangle node {9(2)} (9.61, 19.4) ;

\draw[draw=black,thick,fill=red!20]  (10, 0.2) rectangle node {15(3)} (12.78, 6.2) ;
\draw[draw=black,thick,fill=green!20]  (10, 6.2) rectangle node {14(3)} (12.64, 12.8) ;
\draw[draw=black,thick,fill=green!20]  (10, 12.8) rectangle node {13(3)} (12.64, 19.4) ;

\draw[draw=black,thick,fill=red!20]  (12.78, 0.2) rectangle node {15(4)} (14.78, 19.4) ;

\draw[-]  (0,-.1) node[below] {\scriptsize{$t_0\!\!=\!\!0$}} -- (0,.1) ;
\draw[-]  (6,-.1) node[below] {\scriptsize{$t_1\!\!=\!\!\frac{1}{2}$}} -- (6,.1) ;
\draw[-]  (10,-.1) node[below] {\scriptsize{$t_2\!\!=\!\!\frac{5}{6}$}} -- (10,.1) ;
\draw[-]  (12.78,-.05) node[below] {\scriptsize{$t_3\!\!\approx \!\!1.07$}} -- (12.78,.1) ;
\draw[-]  (14.78,-.6) node[below] {\scriptsize{$t_4\!\! \approx \!\!1.23$}} -- (14.78,.1) ;
\end{tikzpicture}
}
\caption{For the lower bound instance of Figure \ref{fig:work3}: (a) An offline schedule with a makespan of 1; (b) An online algorithm's schedule, allocating (approximately)
the same number of processors to all linear chains and producing a makespan of \mbox{$t_4\approx 1.23$}.}
\label{fig:work1}
\end{figure*}

First, the optimal offline algorithm could schedule the tasks as follows: for any group $i \in [1,K]$,
it allocates $2^{i-1}$ processors to each linear chain in the group.
The total number of required processors is then $\sum_{i=1}^K 2^{i-1} \times 2^{K-i}=K\times 2^{K-1}=P$.
Thus, all linear chains could be executed in parallel. Furthermore, they will all be completed at time 1,
since each linear chain in group $i$ has $i$ tasks, and each task has an execution time
$t(2^{i-1})=\frac{1}{\lg(2^{i-1})+1}=\frac{1}{i}$.
Figure~\ref{fig:work1}(a) illustrates the schedule for this instance with $\ell = 2$.

Now, we establish a lower bound on the makespan of any deterministic online algorithm.
For any $i \in [1,K-1]$, let $L_i$ denote the set of linear chains in all groups $j\le i$,
and let $L'_i$ denote the set of linear chains in all groups $j > i$. Let us define $t_i$ to be
the first time a linear chain in $L'_i$ completes $i$ tasks. We further define $t_0=0$ and
let $t_K$ denote the makespan of the online algorithm.

\begin{lemma}\label{lem.ti}
Any deterministic online algorithm could produce a schedule that satisfies:
$t_i-t_{i-1}\geq \frac{1}{\ell+i},~\forall i \in [1,K]$.
\end{lemma}

\begin{proof}
Since all tasks are identical, an online algorithm cannot distinguish the linear chains.
Thus, for any $i \in [1, K]$, an adversary could make all linear chains that first complete $i$ tasks
by the online algorithm be chains from $L_i$.
Therefore, at time $t_{i}$, all linear chains containing exactly $i$ tasks (i.e., the ones from group $i$)
are already completed, and at time~$t_{i-1}$, no linear chain has started its $i$-th task by definition
(this also holds for $t_0$ and $t_K$).
Hence, all tasks in the $i$-th position of the linear chains in group $i$ must be entirely processed between $t_{i}$ and $t_{i-1}$, and the number of such tasks is~$2^{K-i}$.

For the sake of contradiction, suppose we have $t_{i} - t_{i-1} < \frac{1}{\ell+i}$. Thus, the execution time of these tasks must satisfy $t(p)=\frac{1}{\lg(p)+1} \le \frac{1}{\ell+i}$, hence their processor allocation must be at least $p \ge 2^{\ell+i-1} = K\cdot 2^{i-1}$. As the area of the task $a(p) = p t(p) = \frac{p}{\lg(p)+1}$ is increasing with the number of processors, the total area of all tasks that needs to be processed between $t_i$ and $t_{i-1}$ is at least $2^{K-i} \cdot a(K\times 2^{i-1})=\frac{2^{K-i} \cdot K\cdot 2^{i-1}}{\lg(K\cdot 2^{i-1})+1}=\frac{K\cdot 2^{K-1}}{\ell+i}=\frac{P}{\ell+i}$. Since we have $P$ processors, the total time required to process this area is at least $\frac{1}{\ell+i}$, which contradicts $t_{i} - t_{i-1} < \frac{1}{\ell+i}$.
\end{proof}

One strategy to cope with the worst-case scenario above is to allocate the same number of processors to each linear chain (or more precisely allocate one more processor to some linear chains in order to utilize all the processors). Figure~\ref{fig:work1}(b) illustrates this strategy for the same instance with $\ell=2$.

Finally, we can use the result of Lemma \ref{lem.ti} to lower bound the makespan of an online algorithm, which is given by $t_K=\sum_{i=1}^K (t_i-t_{i-1})$. Since for all $j$, $\ln(j)+\gamma<\sum_{i=1}^j \frac{1}{i}<\ln(j)+\gamma+\frac{1}{j}$ where $\gamma$ is the Euler constant, we obtain:

{\small{\begin{align*}
t_K &\geq \sum_{i=1}^K \frac{1}{\ell+i} > \sum_{i=\ell+1}^K \frac{1}{i} = \sum_{i=1}^K \frac{1}{i} - \sum_{i=1}^{\ell} \frac{1}{i} \\
 &> \left(\ln(K)+\gamma\right)-\left(\ln(\ell)+\gamma+\frac{1}{\ell}\right) =\ln(K)-\ln(\ell)-\frac{1}{\ell} \ .  \qedhere
\end{align*}}}
\end{proof}

\section{Conclusion and Future Work}
\label{sec.conclusion}

In this paper, we have studied the online scheduling of moldable task graphs to minimize makespan with tasks obeying several common speedup models.
To the best of our knowledge, no competitive ratio was known under this setting, except for
the roofline model~\cite{Feldmann98_DAG}. Owing to the design of a new online algorithm and a novel analysis framework,
we have extended the result and derived competitive ratios for several
other speedup models, including the communication model, the Amdahl's model and a general combination.
We have also shown that no online list scheduling algorithm with deterministic local decisions for processor allocation may have a better competitive ratio than ours for the roofline, communication and Amdahl's models.
Finally, we have considered the arbitrary speedup model and established a lower bound for any deterministic online algorithm. Altogether, these new results
lay the foundations for further study of this important but difficult scheduling problem.

For future work, we will consider extending the algorithm and analysis to other common speedup models.
We also plan to extend our algorithm and analysis to other online scheduling settings (e.g., for independent tasks released over time, and for special task graphs such as fork-join graphs or trees).
Finally, we will expand this study to a more practical side by experimentally benchmarking the performance of our algorithm using realistic workflows.

\section*{Acknowledgements}
We thank Anne Benoit and Yves Robert for many helpful discussions and for their inputs to the preliminary version of this paper \cite{Benoit22_online}.
This work is supported in part by the US National Science Foundation grant \#2135310.

\bibliographystyle{abbrv}

\end{document}